\setlist[itemize]{topsep=0pt,}
\newcommand{\finishproof}[1]{{ This completes the proof of #1.}}
\def\##1\#{\begin{align}#1\end{align}}
\def\$#1\${\begin{align*}#1\end{align*}}
\def\given{\,|\,}
\def\tr{\mathop{\text{tr}}\kern.2ex}
\def \barF {{ \bar F }}
\def \cond {{\,|\,}}
\def \gam {{ \gamma }}
\def \indi {{ \mathds{1} }}
\def\R{{\mathbb R}}
\def\E{{\mathbb E}}
\def\cF{{\mathcal{F}}}
\def\cW{{\mathcal{W}}}
\def\cZ{{\mathcal{Z}}}
\def\cP{{\mathcal{P}}}
\def \sfv {{\mathsf{v}}}
\def \sfC {{ \mathsf{C} }}
\def \sfA {{ \mathsf{A} }}
\def \sfF {{ \mathsf{F} }}
\def \sfG {{ \mathsf{G} }}
\def\rA{{\mathrm{A}}}
\def\rB{{\mathrm{B}}}
\def\rC{{\mathrm{C}}}
\def\rD{{\mathrm{D}}}
\newtheorem{remark}{Remark}
\newtheorem{lemma}{Lemma}
\newtheorem{defn}{Definition}
\newlist{enumconditions}{enumerate}{1} 
\setlist[enumconditions]{label = \thelemma.\alph*}
\crefname{enumconditionsi}{Condition}{Conditions}
\newlist{enumlmresult}{enumerate}{1} 
\setlist[enumlmresult]{label = \thelemma.\arabic*}
\crefname{enumlmresulti}{Part}{Parts}
\newlist{enumthmresult}{enumerate}{1} 
\setlist[enumthmresult]{label = \thetheorem.\arabic*}
\crefname{enumthmresulti}{Part}{Parts}
\newlist{enumassumption}{enumerate}{1} 
\setlist[enumassumption]{label = \theAssumption.\arabic*}
\crefname{enumassumptioni}{Condition}{Conditions}
\newtheorem{theorem}{Theorem}
\renewcommand*{\thetheorem}{\arabic{theorem}}
\def \TV {{\scriptscriptstyle\mathrm{TV}}}
\def \block {\mathrm{block}}
\def \I {{\mathrm{{I}}}}
\def \II {{\mathrm{{II}}}}
\def \III {{\mathrm{{III}}}}
\def \Reg {{\mathrm{Reg}}}
\def \Envy {{\mathrm{Envy}}}
\def \cFtaum {{ \cF_{\tau - 1} }}
\def \betatp {{ \beta^{t+1}}}
\def \betatau {{ \beta^{\tau}}}
\def \betataui {{ \beta^\tau_i }}
\def \itau {{ i^\tau }}
\def \utaui {{ u^\tau_i }}
\def \ubartaui {{ \bar{u}^\tau_i }}
\def \vithetau {{ v_i(\theta^\tau) }}
\def \sumtau {{\sum_{\tau=1}^{t}}}
\def \sumk {{\sum_{k=1}^{K}}}
\def \sumtauinIk {{\sum_{\tau \in I_k}}}
\def \sumtautmi {{\sum_{\tau = 1 }^{t-\iota}}} 
\def \sumtaui {{ \sum_{\tau = 1}^\iota  }} 
\def \sumiton {{ \sum_{i=1}^n }}
\def \ztau {{z_\tau}}
\def \zt {{z_t}}
\def \zts {{ \{z_\tau\}_{\tau =1}^t }}
\def \ztaupi {{ z_{\tau + \iota} }}
\def \Ptaugiventaum {{ P^{\tau}(\cdot \,|\, z_{1:\tau-1} ) }}
\def \Ptaugiventaumdz {{ P^{\tau}(\diff z \,|\, z_{1:\tau-1} ) }}
\def \Ptaupigiventau {{ P^{\tau+ \iota}(\cdot \,|\, z_{1:\tau} ) }}
\def \Ptaupigiventaum {{ P^{\tau+ \iota}(\cdot \,|\, z_{1:\tau-1}) }}
\def \Ptaupigiventaumdz {{ P^{\tau+ \iota}(\diff z \,|\, z_{1:\tau-1}) }}
\def \tauk {{\tau_k}}
\def \taukp {{\tau_{k+1}}}
\def \thetau {{ \theta^\tau }}
\def \epsiota{{ \epsilon_t (\iota) }}
\def \sig {{ \sigma }}
\def \wtau {{w_\tau}}
\def \wtaupone {{w_{\tau + 1 }}}
\def \wst {{{w^*_\Pi}}}
\def \wtaupi {{ w_{\tau + \iota}}} 
\def \gtau {{ g_\tau }}
\def \gbartau {{\bar g_\tau}}
\def \gbartaum {{\bar g_{\tau - 1}}}
\def \vinfty {{ |v|_\infty }}
\def \vinftysq {{ |v|_\infty^2 }}
\def \betast {{\beta^*}}
\def \inv {^{-1}}
\def \sq {^{2}}
\def \st {^{*}}
\def \dom {{ \operatorname*{Dom}\,}}
\newcommand{\defeq}{\vcentcolon=}
\newcommand*\diff{\mathop{}\!\mathrm{d}}
\def \d {{\diff}}
\newcommand*\mix{\mathop{}\!\mathrm{mix}}
\DeclareMathOperator*{\argmax}{arg\,max}
\DeclareMathOperator*{\argmin}{arg\,min}
\def \NIPS {}
\newif \ifnips
\title{Nonstationary Dual Averaging and Online Fair Allocation}
\author{%
 Luofeng Liao, Yuan Gao, Christian Kroer \\ IEOR, Columbia University 
 \\ \texttt{\{ll3530,yg254,ck294\}@columbia.edu}
}
\begin{document}

\maketitle

\begin{abstract}
    \ifdefined \ARXIV
        \textbf{Abstract. }
    
    \fi
    
    We consider the problem of fairly allocating sequentially arriving items to a set of individuals. 
    For this problem, the recently-introduced PACE algorithm leverages the dual averaging algorithm to approximate competitive equilibria and thus generate online fair allocations. PACE is simple, distributed, and parameter-free, making it appealing for practical use in large-scale systems. However, current performance guarantees for PACE require i.i.d.\ item arrivals. Since real-world data is rarely i.i.d., or even stationary, we study the performance of PACE on nonstationary data.
    We start by developing new convergence results for the general dual averaging algorithm under three nonstationary input models: adversarially-corrupted stochastic input, ergodic input, and block-independent (including periodic) input. Our results show convergence of dual averaging up to errors caused by nonstationarity of the data, and recover the classical bounds when the input data is i.i.d.\ Using these results, we show that the PACE algorithm for online fair allocation simultaneously achieves ``best of many worlds'' guarantees against any of these nonstationary input models as well as against i.i.d. input. Finally, numerical experiments show strong empirical performance of PACE against nonstationary inputs. 
\end{abstract}

\section{Introduction}
In fair division, the goal is to allocate a set of items, typically assumed divisible, among a set of agents with heterogeneous preferences, while guaranteeing fairness and efficiency properties.
In this paper we are interested in how to fairly and efficiency allocate items that arrive \emph{online}: at every time step one item arrives, and we must irrevocably assign it to some agent. 
Recently, there has been a growing literature on such online fair allocation problems~\citep{azar2016allocate,balseiro2020best,gao2021online,bateni2021fair,sinclair2021sequential,banerjee2022online}. 
Real-world systems that can be captured by such settings include Internet advertising systems, job recommender systems, cloud computing platforms, and many more.
One of the key challenges in such problems is to balance the (often conflicting) goals of overall efficient resource utilization with fairness guarantees for the individual agents.


For this setting, \citet{gao2021online} shows that a simple mechanism called PACE (Pace According to Current Estimated utility) generates asymptotically fair and efficient allocations when the item arrivals are drawn in an i.i.d.\ manner. 
PACE gives each agent a per-time-step budget of faux currency, and the fair allocation is achieved by having agents participate in first-price auctions for each item, using the faux money.
By guaranteeing that each agent asymptotically spends their budget at the correct rate, the resulting allocations and prices converge to what is known as a \emph{competitive equilibrium from equal incomes} (CEEI), which guarantees both fairness and efficiency.
In PACE, each agent maintains a \emph{pacing multiplier} to control their spending over time, and the pacing multipliers are updated based on buyers' budgets and cumulative utilities. 
This is similar to how budget-management systems work in Internet ad auctions.
PACE is highly decentralized due to its auction-based allocation, it does not require dividing the item, and it is also completely parameter free. This makes it suitable for large-scale practical implementation.

Yet in many large-scale settings, such as the context of fair recommender systems~\citep{kroer2021computing,kroer2022market} or Internet advertising, we would not expect items to be drawn i.i.d.\ from a single distribution.
One alternative is to assume that data arrives adversarially. 
However, this leads to very pessimistic negative results and is not an accurate representation of the data one would expect to see in practice.
Instead, one would expect the data to have a strong stochastic component, but with changes over time, e.g., due to flow of traffic, breaking news events, or system updates~\citep{esfandiari2018allocation,balseiro2020best}. 

Motivated by the above considerations, we study online fair allocation when the data exhibits nonstationary behavior. In particular, we focus on the performance of the PACE algorithm of \citet{gao2021online}. We ask 
\vspace{-.2cm}
\begin{center}
\emph{
    How does PACE behave when nonstationarity is present in the stream of items?}
\end{center}
\vspace{-.2cm}

We show that, under several data-input models, the fairness and efficiency guarantees of the PACE algorithm are still preserved, up to errors due to the nonstationarity of the data input. In this sense, we significantly extend the main results in \citet{gao2021online}.
To show these results, we first consider the more general setting of nonstationary stochastic optimization and develop new performance guarantees for dual averaging in this setting.
Given the ubiquitous use of dual averaging in online and stochastic optimization, our results are of broader interest beyond (fair) resource allocation.

\subsection{Summary of Contributions}
\textbf{Novel convergence results for dual averaging under three nonstationary settings.}

We analyze the dual averaging (DA) algorithm for nonstationary stochastic optimization under different data input models, namely, (1) mildly corrupted, (2) ergodic and (3) periodic input data. Specifically, we consider the composite dual averaging algorithm, where the composite term is strongly convex.
We show that, in all cases, the iterates generated by dual averaging (DA) converge to the optimal solution in mean square, where the bound on the mean-square error decomposes into two terms: i) the typical $ O(\log t / t)$ guarantee known from the i.i.d.\ case, and ii)
a term that depends on the amount of nonstationarity in the data input model. Our results recover the classical bounds under i.i.d.\ data input as a special case. 

\textbf{Theoretical fairness and efficiency guarantees of PACE for nonstationary item arrivals.}

We consider the online fair allocation problem where item arrivals follow any of the three data input models that we consider for DA; these settings generalize the i.i.d.\ setting in \citet{gao2021online}.
Utilizing our convergence results for DA under nonstationary data input models, we show that, for item arrivals following these models, PACE ensures convergence of the pacing multipliers, again with a decomposition into a $ O(\log t / t)$ term as well as a term depending on the nonstationarity.
We then show that the agents' realized utilities, envy, regrets, and expenditures all obtain convergence bounds based on the convergence of pacing multipliers.
Our results show that PACE as an  online fair resource allocation algorithm is robust against distributional uncertainty of the input and automatically adapts to many different data input models without any parameter tuning.
In \cref{sec:experiments} we provide numerical experiments which corroborate the above theory and demonstrate the practical efficiency of PACE under different data input models.

An extensive review of related work is provided in \cref{app:relatedwork}.

\ifdefined \NIPS
\else
\subsection{Related Work}

Since our work studies competitive equilibrium computation, online fair resource allocation and stochastic optimization, while PACE employs the idea of pacing in auction mechanism design, we further discuss related work in these areas.

\paragraph{Convex optimization for computing competitive equilibria.}
Convex optimization algorithm (especially first-order methods) and their theory have been used to design and analyze algorithms for computing competitive equilibria, often through equilibrium-capturing convex programs \cite{birnbaum2011distributed,cole2017convex,cheung2020tatonnement,gao2020first,gao2021online}.
Applying a first-order method to such a convex program often leads to (recovers) interpretable market dynamics that emulate real-world economic behaviors, such as the proportional response dynamics \cite{birnbaum2011distributed,zhang2011proportional,cheung2018dynamics,gao2020first} and t{\^a}tonnement \citep{cheung2020tatonnement}.
The PACE algorithm of \citet{gao2021online} is no exception: it results from applying dual averaging to a specific convex program. 
Discrete variants of these convex programs have also been used for fair indivisible allocation~\citep{caragiannis2019unreasonable}, which yields some efficiency and fairness guarantees, though the discreteness breaks the connection to competitive equilibria.

\paragraph{(Online) fair resource allocation.} 
\citet{azar2016allocate,azar2010allocate} consider an online Fisher market with arbitrary item arrivals. 
They focus on a quality measure that is minimized at a competitive equilibrium and give an online algorithm that achieves a competitive ratio logarithmic in the size of the market and the ratio between the maximum and minimum (nonzero) buyer valuations over individual items. 
This algorithm requires solving a nontrivial linear program per iteration and is not known to improve with stochastic arrivals.
\citet{banerjee2022online} considers the problem of online allocation of divisible items to maximize Nash social welfare. 
They show that, under arbitrary item arrivals but with access to meaningful predictions of each buyer's total utility given all items, an online algorithm of the primal-dual type achieves a logarithmic competitive ratio. 
\citet{manshadi2021fair} studies the problem of rationing a social good and propose simple, implementable algorithms that promote fairness and efficiency. 
In their setting, it is the agents' demands rather than the supply that are sequentially realized and possibly correlated over time. 
\citet{bateni2021fair} uses Gaussian processes to model item arrivals and consider a budget-weighted proportional fairness metric. 
They propose a reoptimization policy that consumes buyers' budgets and clears the market gradually while ensuring a competitive ratio in hindsight w.r.t.\ this metric. 
This policy periodically resolves the Eisenberg-Gale (EG) convex program and does not require prior knowledge of future item arrivals. 
Our work differs from the above literature as follows. 
First, we consider practically-motivated nonstationary data input models for item arrivals that interpolate between fully adversarial and fully stochastic (i.i.d.).
Second, we show that the PACE algorithm, without any parameter tuning, adapts to different data input models and achieves strong performance guarantees that depend mildly on the ``nonstationarity'' of these models.
Given that PACE is scalable, interpretable and easy to implement this paper further ensures its effectiveness upon more realistic, non-i.i.d.\ item arrival processes.

\paragraph{(Nonstationary) stochastic optimization.}
Many stochastic optimization algorithms have been shown to attain nontrivial performance guarantees under under nonstationary data input \citep{duchi2012ergodic,balseiro2020best,besbes2015non}.
Motivated by high-dimensional and distributed optimization problems, \citet{duchi2012ergodic} analyzes stochastic mirror descent under ergodic data input.
\citet{balseiro2020best} analyzes a version of mirror descent for online resource allocation. 
They show that it achieve strong regret bounds under different data input models without knowing the model in advance.
The ergodic and periodic data input models in this paper are motivated by those considered in \citet{duchi2012ergodic} and \citet{balseiro2020best}. Different from these papers which focus on mirror descent, this paper focuses on the dual averaging algorithm, a different stochastic optimization algorithm particularly suitable for the equilibrium-capturing convex program we study.
Furthermore, we achieve stronger results than those past papers, by focusing on a setting where a composite term has strong convexity.

\paragraph{Pacing in auction mechanism design.}
The PACE algorithm uses first-price auctions with pacing.
As noted in \citet{gao2020first}, 
the idea of pacing has also been used widely in budget management strategies for Internet advertising auctions, with strong revenue and incentive guarantees (see, e.g., \citet{conitzer2019pacing,conitzer2021multiplicative,balseiro2019learning}). 
It is also used widely in practice, as reported in \citet{conitzer2021multiplicative}.
As shown in \citet{balseiro2020best}, pacing strategies ensure individual bidders' returns on their budgets and, if used by all buyers, lead to approximate Nash equilibria.
Similar to the analysis in \citet{gao2021online}, in this paper, we focus on competitive equilibrium and fairness properties of PACE, rather than game-theoretic (incentive) properties. 
\fi

\noindent\textbf{Notation.}
We use $1_t$ to denote the vector of ones of length $t$ 
and $e_j$ to denote the vector with one in the $j$-th entry and zeros in the others.
We use $\Delta(\Theta)$ to denote the space of probability measures on a measurable space $\Theta$,
and $\Delta_n$ to denote the simplex in $\R^n$. 
To measure the nonstationarity in the input data, we will use the total variation distance. Given two probability measures $P$ and $Q$, it is defined as
 $
 \| P-Q\|_\TV \defeq (1/2) \int | \frac{\diff P}{\diff \mu} - \frac{\diff Q}{\diff \mu} |\diff \mu \;,
 $ 
 where $\mu$ is a supporting measure. 
\section{Preliminaries on Online Fair Allocation}
An online fair allocation instance with infinitely divisible items with $n$ agents and a finite horizon $t$ consists of a tuple $\mathsf{A}=(n,t,\Theta, Q, v)$, where $\Theta$ is the (possibly uncountable) measurable space of all possible items, with an associated $\sigma$-algebra $\mathcal{M}$ and a probability measure $\mu$, the distribution $Q \in\Delta(\Theta^t)$ is the distribution over possible sequences of items  $\gamma = (\theta_1, \dots, \theta_t ) \in \Theta^t$,
each of unit supply, and the set $v = (v_1,\dots, v_n )\in L^1_+(\Theta)^n$ is the set of 
valuation functions of the $n$ agents. Here $ L^1_+(\Theta)$ is the space of positive integrable functions on $\Theta$. Agent 
$i$ sees a utility of $v_i(\theta)$ in item $\theta\in\Theta$. Abusing notation we let $v_i(\gamma) = \big(v_i (\theta^1), \dots, v_i(\theta^t)\big)$ denote the valuation for agent $i$ of items in the sequence $\gamma$. 
Let $Q^\tau$ be the marginal distribution of the item $\theta^\tau$ at time $\tau$ and $\bar Q = (1/t)\sumtau Q^\tau$.
We assume $\int _\Theta v_i \d \bar Q = 1$ for all $i\in[n]$. We further assume $\vinfty \defeq \max_i \| v_i \|_\infty< \infty$.
We stress that the PACE algorithm that we study is not going to require access to either the valuation functions $v$ or the set of possible items $\Theta$; these are only required in order to discuss the resulting bounds.

Given an instance $\mathsf{A}$, the decision maker allocates the stream of items $\gamma$ one at a time, in an irrevocable manner. 
At time $\tau$ when item $\theta_\tau$ is revealed, the decision maker must choose an allocation $x^\tau =(x^\tau_1, \dots, x^\tau_n) \in \Delta_n$ based on information available at that time, and allocate accordingly. Here the $i$-th entry of $x^\tau$ is the fraction of item 
$\theta^\tau$ allocated to agent $i$. On receiving her fraction, agent $i$ realizes a utility of 
$u^\tau_i \defeq v_i(\theta^\tau) x^\tau_i$.
We let $x = 
(x^1, \dots, x^t )$ denote the sequence of allocations made over time. For agent $i$, let $x_i = (x^1_i,\dots, x^t_i) \in \R^t$ denote the fraction of items given to agent $i$ across time. With this notation, the total utility of agent $i$ is 
$\langle x_i , v_i(\gamma) \rangle$.
The goal of the decision maker is to choose, in an online fashion, an allocation $x$ such that it achieves some form of both efficiency and fairness guarantees.

\subsection{Benchmark: The Hindsight Allocation}

As a benchmark, we will consider the hindsight-optimal allocation.
Suppose all items are presented to the decision maker as opposed to arriving one by one. 
In that case, a fair and efficient allocation can be found by allocating using the \emph{Eisenberg-Gale} (EG) convex program~\cite{eisenberg1959consensus}.
EG picks the allocation that maximizes the sum of weighted logarithmic utilities (which is equivalent to maximizing the weighted geometric mean of utilities):
\# \label{eq:EGprogram}
\max_{x\geq0, u\geq 0}
\Bigg\{ 
\sumiton    B_i  \log  (U_i)
\;\bigg|\;  
U_i   \leq  \big\langle v_i(\gamma), {x}_{i} \big\rangle 
  \;\; \forall i \in [n] 
\;,\;\;
    \sumiton {x}_{i}^\tau \leq 1 \;\; \forall \tau \in [t]
\Bigg\} \;.
\#
The weights $B_i$ represent the priority given to each agent, and they they can be interpreted as budgets in a market-based interpretation of the EG allocation.\footnote{The hindsight allocation \cref{eq:EGprogram} can be interpreted as a competitive equilibrium from equal incomes (CEEI) in the corresponding Fisher market;
see \cref{app:fishermarket} for more details on this interpretation.
}.
We will focus on the case where $B_i = t/n$ for all $i$, but all our results extend directly to the case of unequal weights, which can be useful in settings such as when buyers have quasilinear utilities~\cite{gao2021online,conitzer2019pacing} or when it is desirable to give a larger allocation to certain agents. 

The PACE algorithm asymptotically converges to the optimal dual solution, which is
\# \label{eq:dual-beta-finite}
\beta^\gamma\defeq  \argmin_{\beta \geq 0}
\Bigg\{  \frac1t \sumtau \max_{i\in[n]} \beta_i v_i(\theta^\tau)  - \frac1n \sumiton \log \beta_i 
\Bigg\} \;.
\#
We will also be interested in the \emph{underlying problem} implied by the average item supplies $s = \diff \bar Q / \diff \mu$.
Letting $\big\langle v_i, {x}_{i} \big\rangle \defeq \int_\Theta v_i x_i \diff \mu $,
this leads to the infinite-dimensional analogue of \eqref{eq:EGprogram}:
\# \label{eq:primal-infinite}
\max_{x \in L^\infty_+(\Theta), u\geq 0}
\Bigg\{ 
    \frac{1}{n} \sumiton \log  (u_i)
\;\bigg|\;  
u_i   \leq  \big\langle v_i, {x}_{i} \big\rangle 
  \;\; \forall i \in [n] 
,\;\;
    \sumiton {x}_{i} \leq s
\Bigg\} \;,
\#
We let $u^*$ denote the optimal utilities in \cref{eq:primal-infinite}.
 The infinite-dimensional analogue of the dual \eqref{eq:dual-beta-finite} is the following. For any $\delta_0 > 0$, 
The infinite-dimensional analogue of \eqref{eq:dual-beta-finite} is the following. For any $\delta_0 > 0$, 
\# \label{eq:dual-beta-infinite}
\beta^* \defeq \argmin_{\frac{1}{n(1+\delta_{0})} \leq \beta \leq 1+\delta_{0}}
\Bigg\{  \int_\Theta  \Big(\max_{i\in[n]} \beta_i v_i(\theta) \Big) \bar Q(\diff \theta)  
- \frac1n \sumiton \log \beta_i 
\Bigg\}
\;.
\#
A rigorous mathematical treatment of the infinite-dimensional program can be found in \cite{gao2020infinite} and \citep[Section 2]{gao2021online}. 
Note the additional constraint in \cref{eq:dual-beta-infinite} on $\beta$ does not affect the optimal solution since $1/n \leq \beta^*_i \leq 1$; see Lemma~1 in~\citet{gao2020infinite}. 

It is well-known that the hindsight allocation generated by the EG program enjoys the following efficiency and fairness properties:
\vspace{-.2cm}
\begin{enumerate}
    \item
    Pareto optimality: we cannot strictly increase any agent's utility without decreasing some other agents' utility.
    \item Envy-freeness: each agent prefers their own allocation to that of any other agent: $\langle v_i(\gam), x^*_i \rangle \geq \langle v_i(\gam), x^*_k \rangle$ for all $k\neq i$.
    \item Proportionality: every agent achieves at least as much utility as under the uniform allocation, i.e. $\langle v_i(\gam), x^*_i \rangle \geq \langle v_i(\gam),(1/n) 1_t \rangle$.
\end{enumerate}
\vspace{-.2cm}
Therefore the hindsight EG allocation is the gold standard that we assume the decision maker would use if she had known the sequence of items $\gamma$ in advance.
However, in the online setting the decision maker does not know this sequence, and must therefore instead attempt to approximate an equally good allocation in online fashion.

For an item sequence $\gam$, we let  $x^\gamma$ denote the optimal hindsight allocation, which is an optimal solution to \cref{eq:EGprogram},
and we denote the resulting total and average utility as
\vspace{-.1cm}\# \label{def:equtil}
U_{i}^\gamma \defeq
\big\langle v_i({\gamma}), x_i^\gamma \big\rangle \ =  \sum_{\tau=1}^t 
x_i^{\gamma, \tau} v_i(\theta^\tau) , \quad u_{i}^\gamma \defeq (1/t) \cdot U^\gamma_i\;.
\#
\subsection{Performance Metrics}

We measure the performance of an online allocation rule $x$ on the instance $\gamma$ via the following two quantities. The \emph{regret} of agent $i$ is the difference between the total hindsight equilibrium utility $ U_i^\gamma$ and their realized utilities $\utaui$ under $x$ 
\# \label{eq:defregret}
\Reg_{i,t} (\gamma) \defeq  U_i^\gamma-  \sumtau  u^\tau_i \;.
\#
The \emph{envy} of agent $i$ is the maximal extent to which they prefer the allocation of any other agent:
\# \label{eq:defenvy}
\Envy_{i,t}(\gamma) \defeq \max_{k\in [n]} \big\{ \langle v_i(\gamma), x_k\rangle - \langle v_i(\gamma), x_i\rangle  \big\}\;.
\#

We seek to understand the worst-case behavior of an algorithm when facing a certain class of input distributions. For a given input distribution class $\sfC \subset \Delta(\Theta^t)$, we will develop bounds on the worst-case regret and envy under any distribution in $\sfC$:
\$
\sup_{Q \in \sfC} \E_{\gamma \sim Q}\big[\Reg_{i,t}(\gamma)\big] \;,
\quad 
\sup_{Q \in \sfC} \E_{\gamma \sim Q}\big[\Envy_{i,t}(\gamma)\big] \;.
\$

\subsection{The PACE Algorithm} \label{sec:pace-description}
In this section, we review the PACE (\underline{P}ace \underline{A}ccording to \underline{C}urrent \underline{E}stimated Utility) dynamics~\citep{gao2021online}. 
In PACE, each item is allocated via first-price auction, and each agent constructs bids by scaling their value by a \emph{pacing multiplier}.
The pacing multipliers are maintained using simple, distributed updates that can be handled either by the agents or by the platform.

Algorithmic details are displayed in \cref{alg:pace}. Here $\Pi_{[\ell,h]}[x] = \max\{\ell, \min\{x, h \}\}$.
At every time step $\tau$ an item $\thetau$ is revealed. At that point every agent comes up with a \emph{bid} for that item, which is equal to their value for the item multiplied by their current pacing multiplier $\betataui$.
Then, the agents submit these bids to a first-price auction, and the item is allocated to the highest bidder. For concreteness, we choose the bidder with the smallest index if a tie occurs, but any rule works.
Each agent then observes their realized utility, updates their average utility received so far, and updates their pacing multiplier accordingly.
As pointed out in \citet{gao2021online}, PACE is an instantiation of dual averaging \cite{xiao2010dual} applied to the dual of the hindsight allocation program in \eqref{eq:dual-beta-infinite}.

PACE has many attributes desirable in real-world applications.

\textbf{Highlight 1. Decentralization. }
The PACE dynamics can be run in either centralized (by having the mechanism designer emulate the pacing process for each agent) or decentralized fashion (since the auction-based allocation is the only centralized step at each iteration), and are therefore suitable for Internet-scale online fair division and online Fisher market applications. 

\textbf{Highlight 2. Pure Allocation. }
PACE allows each item to be fully allocated to a single agent, even though the hindsight performance metric is allowed to utilize fractional allocations. While fractional allocations can be interpreted as randomized allocations in many large-scale settings, this may not always be desirable, for example when allocating food to food banks.

\textbf{Highlight 3. Tuning-free. } An important fact about the PACE dynamics is that each agent has no stepsize parameter whatsoever, which means that no stepsize tuning is required. Moreover, PACE is robust against \emph{the types} of item arrivals since the algorithm needs neither knowledge of the item distribution~$P$ nor the input type~$\sfC$.

\begin{algorithm}[t]
	\SetAlgoNoLine
	\KwIn{number of agents $n$, horizon $t$, algorithm parameter $\delta_0 > 0$.}
    \textbf{Initialize:} {Set~$\beta^1 = (1+\delta_0) \cdot 1_n$.}

    Environment draws the item sequence $\gamma=\{ \theta^1,\dots, \theta^t\}$ from the distribution $Q$. 

    \For{$\tau = 1,\dots, t$ when item $\thetau$ is revealed
    }{
            Agent $i$ bids $\betataui \vithetau$, the whole item $\thetau$ is allocated to the highest bidder $\itau$  (with arbitrary tie breaking)
            $
            \itau\defeq \min \{\argmax_{i\in[n]} \betataui \vithetau \} \;.
            $ 
            \label{line:sugbradient}

            Agent $i$ updates current average utility  
            $
                \utaui = \vithetau \indi \{ i=\itau \} \;,
                \quad
                \ubartaui = \frac{1}{\tau} \sum_{s=1}^\tau u^s_i \;.
            $ 
            \label{line:averageutil}
     
            Agent $i$ updates the pacing multiplier
            $
            \beta_{i}^{\tau+1}=\Pi_{[\ell, h]}\big[ 1 / (n\bar{u}_{i}^{\tau})\big]
            ,
            $ 
            \label{line:projection}
            where the interval $[\ell,h]=\Big[\tfrac{1}{(1+\delta_0)n}, 1+\delta_0\Big]$. 
            
        }
	\caption{PACE$(n,t,\delta_0)$}
    \label{alg:pace}
\end{algorithm}


In addition to the regret and the envy performance metrics, we will also derive results for the following two quantities that characterize the long-run behavior of PACE. 
Let $\bar u ^ t = (1/t)\cdot \sumtau u^\tau$ be the vector of average realized utilities for all agents. We will show that the agents' utilities converge to those associated to the \emph{underlying offline fair allocation problem}, $u^*$, defined in \cref{eq:primal-infinite}, in a mean-square sense, i.e.,
$
\E \big[\|\bar u^t - u^*  \|\sq \big] \to 0,
$
as long as the error due to nonstationarity grows sublinearly in the number of time periods.
Secondly, define the \emph{expenditure} of agent $i$ at time $\tau$ by 
$
b_{i}^{\tau} \defeq \beta_{i}^{\tau} v_{i}\left(\thetau\right) \indi\left\{i=i^{\tau}\right\}.
$
We will show $(1/t)\cdot \sumtau b^\tau_i \to 1/n$ in mean square as well, 
as long as the error due to nonstationarity grows sublinearly in the number of time periods.

\section{Main Results} \label{sec:input-model}
This section introduces the main results of this paper: the behavior of PACE under three different types of nonstationary input models. 
All prior results on fair online allocation have been either for worst-case inputs (with much more conservative guarantees and not for the PACE algorithm)~\citep{azar2016allocate,banerjee2022online}
or for i.i.d. input data~\cite{gao2021online}.

We first introduce some notation that will be useful for describing these input models.
For $s > \tau \geq 1$ let $Q^s(\theta^{1:\tau})$ denote the conditional distribution of $\theta^s$ given $\{\theta^{1},\dots, \theta^{\tau}\}$. 
For a subset $I$ of $[t]$ let $Q^I$ denote the joint distribution of the variables $\{ \theta^\tau\}_{\tau \in I}$.
Let $\bar Q = (1/t)\cdot \sumtau Q^\tau$ be the uniform mixture of $\{ Q^\tau\}_\tau$. 
We study three types of input: independent input with adversarial corruption, ergodic and Markov input, and periodic input.
For each input setting, we describe our main theorem for the performance guarantees of PACE here. The proofs are given in \cref{sec:mechanism-pace}, because these results rely on developing a theory of nonstationary performance of DA, which is done in \cref{sec:main-nonstat-da}.

\subsection{Independent Input with Adversarial Corruption}

Adversarial perturbation of a fixed item distribution models scenarios where the items generally behave in a predictable manner, but for  some time steps the input behaves erratically. Typically this is assumed to happen only for a small number of time steps. Such perturbation could be malicious, 
for example when item arrivals are manipulated in favor of certain agents; 
or non-malicious, such as unpredictable surges of certain keywords on search engines~\citep{esfandiari2018allocation}.

We study a type of adversarial perturbation where the item distribution at each time step might be corrupted by an arbitrary amount, but distributions at different time steps are independent of each other. We assume the average corruption is bounded by $\delta$, as measured in TV distance. The set of distributions over sequences that we consider is then:
\#
\sfC^{\mathrm{ID}}(\delta) \defeq  \bigg\{ {{Q}} \in \Delta({\Theta})^{t}: \frac1t \sumtau \| Q^\tau -  \bar Q \|_\TV \leq \delta \bigg\}
\;.
\#
We use $\tilde O$ to hide numeric constants and polynomials of $n$, $\vinfty$, and $\log t$.
Our main fair online allocation result for the adversarial corruption case is:
\begin{theorem} [Independent Case]
    \label{thm:main-envy-regret-indep}
    For the adversarially corrupted and independent case, \cref{alg:pace} guarantees 
    that 
    for an instance $\sfA = (n,t,\Theta,Q,v)$, 
    we have
    \#
     &\sup_{Q \in \sfC^\mathrm{ID}(\delta)} \E_{\gamma \sim Q}\big[\Reg_{i,t}(\gamma)\big] \;,
     \sup_{Q \in \sfC^\mathrm{ID}(\delta)} \E_{\gamma \sim Q}\big[\Envy_{i,t}(\gamma)\big] 
      = \tilde O \big( \sqrt{t}  + \sqrt{\delta} \cdot t\big)
    \#
    and 
    \$
    \scriptstyle{\sup_{Q \in \sfC^\mathrm{ID}(\delta)} \E_{\gamma \sim Q} \big [\| \bar b^t - (1/n)1_n \|\sq \big]
    \;,
    \sup_{Q \in \sfC^\mathrm{ID}(\delta)} \E_{\gamma \sim Q} \big [\| \bar u ^t - u^* \|\sq \big] 
    \;, \sup_{Q \in \sfC^\mathrm{ID}(\delta)} \E_{\gamma \sim Q} \big [\| \bar u ^t - u^\gamma \|\sq \big] }
    = \tilde O(\delta + 1/t) \;.
    \$
\end{theorem}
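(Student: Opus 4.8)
The plan is to reduce everything to a single mean-square convergence statement for the pacing multipliers, $\E[\|\beta^\tau - \beta^*\|^2] = \tilde O(\delta + 1/\tau)$, and then derive the regret, envy, expenditure, and utility bounds as corollaries. Since PACE is exactly composite dual averaging applied to the dual program \eqref{eq:dual-beta-infinite}, whose objective is strongly convex in $\beta$ thanks to the $-\frac1n\sum_i \log\beta_i$ term restricted to the box $[\ell,h]^n$, I would invoke the general nonstationary dual-averaging theory promised in \cref{sec:main-nonstat-da}. The key point is that at time $\tau$ the subgradient fed to DA is $g^\tau \in \partial_\beta \max_i \beta_i v_i(\theta^\tau)$, an unbiased-up-to-corruption estimate of the gradient of the \emph{averaged} objective built from $\bar Q$: the bias at step $\tau$ is controlled by $\|Q^\tau - \bar Q\|_\TV$, and because the distributions are independent across $\tau$ there is no additional mixing error. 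Averaging the per-step bias gives the $\delta$ term; the usual DA analysis with strong convexity gives the $\log\tau/\tau$ term; together $\E[\|\beta^\tau-\beta^*\|^2] = \tilde O(\delta + 1/\tau)$.

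Next I would translate convergence of $\beta^\tau$ into convergence of the realized utilities. In PACE, $\beta_i^{\tau+1} = \Pi_{[\ell,h]}[1/(n\bar u_i^\tau)]$, so on the (high-probability, or in-expectation) event that $\beta_i^{\tau+1}$ is interior to the box we have $\bar u_i^\tau = 1/(n\beta_i^{\tau+1})$ exactly; boundary effects are absorbed into lower-order terms using $1/n \le \beta_i^* \le 1$ and $\delta_0>0$. Hence $\|\bar u^t - u^*\|^2$ is, up to Lipschitz factors, comparable to $\|\beta^{t+1} - \beta^*\|^2$ (using that $\beta^*_i = 1/(n u^*_i)$ from the KKT conditions of \eqref{eq:primal-infinite}), giving $\sup_{Q}\E[\|\bar u^t - u^*\|^2] = \tilde O(\delta + 1/t)$. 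The expenditure bound follows similarly: $\bar b_i^t = \frac1t\sum_\tau \beta_i^\tau v_i(\theta^\tau)\indi\{i=i^\tau\}$, and at a competitive equilibrium each agent spends exactly their budget rate $1/n$; the deviation is bounded by a telescoping/Abel-summation argument against $\|\beta^\tau - \beta^*\|$ plus the corruption budget $\delta$. For $\|\bar u^t - u^\gamma\|^2$ I would additionally bound $\|u^\gamma - u^*\|^2 = \tilde O(\delta + 1/t)$: $u^\gamma$ solves the finite program \eqref{eq:EGprogram} with empirical supply $\frac1t\sum_\tau \indi_{\theta^\tau}$, $u^*$ solves the infinite program with supply $s = \diff\bar Q/\diff\mu$, and a stability/sensitivity argument for strongly concave EG programs, combined with concentration of the empirical supply around $\bar Q$ (where the $\delta$ enters because $\frac1t\sum Q^\tau = \bar Q$ only on average), controls the gap.

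Finally, regret and envy. For regret, $\Reg_{i,t}(\gamma) = U_i^\gamma - \sum_\tau u_i^\tau = t(u_i^\gamma - \bar u_i^t)$, so $\E[\Reg_{i,t}] \le t\sqrt{\E[\|\bar u^t - u^\gamma\|^2]} = t\cdot\tilde O(\sqrt{\delta} + 1/\sqrt t) = \tilde O(\sqrt t + \sqrt\delta\, t)$ by Jensen/Cauchy–Schwarz. For envy I would use the standard PACE accounting: the first-price-with-pacing allocation rule makes agent $i$'s envy toward agent $k$ bounded by a term involving how far $\bar u_i^\tau$ (equivalently $\beta_i^\tau$) is from equilibrium, summed over time — each agent bids proportionally to value, so an equilibrium pacing profile is envy-free and the per-step envy is $O(\|\beta^\tau - \beta^*\| \cdot \vinfty)$; summing and applying Cauchy–Schwarz with $\sum_\tau \E[\|\beta^\tau-\beta^*\|^2] = \tilde O(\delta t + \log t)$ yields the same $\tilde O(\sqrt t + \sqrt\delta\, t)$ rate. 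The main obstacle I anticipate is not any single inequality but setting up the nonstationary DA bound cleanly enough that the corruption budget $\delta$ appears only through the averaged bias $\frac1t\sum_\tau\|Q^\tau-\bar Q\|_\TV$ and not, say, through a per-step $\ell_\infty$ bias that would blow up under a single fully-adversarial step; handling this correctly requires exploiting that the relevant quantities are averages over $\tau$ and that the DA regret telescopes, and is exactly what the general theorem in \cref{sec:main-nonstat-da} is designed to provide.
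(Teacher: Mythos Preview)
Your proposal is correct and takes essentially the same approach as the paper: nonstationary DA gives $\E[\|\beta^{\tau}-\beta^*\|^2]=\tilde O(\delta+1/\tau)$ (\cref{thm:maintext-da-indep,thm:maintext-beta-conv-indep}), this translates to utility and expenditure bounds via the relation $\bar u_i^\tau=1/(n\beta_i^{\tau+1})$ on the interior of the box (\cref{lm:dual-to-expendicture-util}), and regret and envy follow by Cauchy--Schwarz against the last-iterate and time-averaged $\beta$ errors (\cref{lm:dual-to-regret}). One minor difference: for the hindsight quantity $u^\gamma$ you propose a separate EG-stability-plus-concentration argument to bound $\|u^\gamma-u^*\|$, whereas the paper sidesteps any sensitivity analysis by observing that since $\beta^\gamma$ minimizes the empirical dual objective one has $R_t(\beta^\gamma)\ge R_t(\beta^*)$, so the very same DA inequality already bounds $\E[\|\beta^{t+1}-\beta^\gamma\|^2]$ and hence $\E[\|\beta^\gamma-\beta^*\|^2]$ by the triangle inequality (\cref{sec:deduction-to-hindsight-optima}). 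Incidentally, your parenthetical ``where the $\delta$ enters because $\tfrac1t\sum Q^\tau=\bar Q$ only on average'' is off---$\bar Q$ is \emph{defined} as $\tfrac1t\sum_\tau Q^\tau$, so the empirical supply is unbiased for $\bar Q$ and concentrates with no $\delta$ contribution; this only helps you.
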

The result shows that the regret and 
envy performance metrics degrade linearly in the average 
corruption $\delta$. In the i.i.d.\ case where $\delta = 0$, we recover the
$\sqrt{t}$ regret rate and the $1/t$ rate of convergence for utilities and expenditures in terms of the mean-square error from \cite{gao2021online}.
If out of the $t$ distributions of items in each time step only $O(\sqrt{t})$ are corrupted, each by a constant amount, then the $\sqrt{t}$ regret and envy bounds, as well as $1/t$ convergence rates, are also preserved.

\subsection{Ergodic Input and Markov Processes}
To handle correlation across time, we next study ergodic inputs. For these inputs, strong correlation might be present for items sampled at nearby time steps, but the correlation between items decays as they are separated in time. For any integer $\iota$ such that $1 \leq \iota \leq t-1$, we measure the $\iota$-step deviation from some distribution $\Pi\in\Delta(\Theta)$ by the quantity 
\$
\delta(\iota) \defeq \sup_{\gamma} \sup_{\tau = 1,\dots, t-\iota} \| Q^{\tau + \iota} (\theta^{1:\tau})- \Pi \|_\TV
\;.
\$
Intuitively, this definition tells us that, no matter where and when we start the item arrival process, it takes only $\iota$ steps to get $\delta(\iota)$-close to the distribution $\Pi$.
We will consider the set of ergodic input distributions whose $\iota$-step deviation is bounded by $\delta$:
\begin{align}
\hspace{-.2cm}
    \sfC^{\mathrm{E}}(\delta, \iota) \defeq \bigg\{ Q \in \Delta(\Theta^t):\sup_{\gamma} \sup_{\tau = 1,\dots, t-\iota} \| Q^{\tau + \iota} (\theta^{1:\tau})- \Pi \|_\TV \leq \delta, \text{ for some } \Pi\in\Delta(\Theta) \bigg\}
\;.
\end{align}
\begin{theorem} [Ergodic Case]
    \label{thm:main-envy-regret-ergodic}
    For the ergodic case, \cref{alg:pace} guarantees that for an instance $\sfA = (n,t,\Theta,Q,v)$, we have
\#
    \sup_{Q \in \sfC^\mathrm{E}(\delta, \iota)} \E_{\gamma \sim Q}\big[\Reg_{i,t}(\gamma)\big]
    \;,
    \sup_{Q \in \sfC^\mathrm{E}(\delta, \iota)} \E_{\gamma \sim Q}\big[\Envy_{i,t}(\gamma)\big]
    & =
    \tilde O ( \sqrt{ \iota t} + \sqrt{\delta} \cdot t )
\#
and
\$ 
  \scriptstyle{  \sup_{Q \in \sfC^\mathrm{E}(\delta, \iota)} \E_{\gamma \sim Q} \big [\| \bar b^t - (1/n)1_n \|\sq \big]
    \;,
    \sup_{Q \in \sfC^\mathrm{E}(\delta, \iota)} \E_{\gamma \sim Q} \big [\| \bar u ^t - u^* \|\sq \big] 
    \;,
    \sup_{Q \in \sfC^\mathrm{E}(\delta, \iota)} \E_{\gamma \sim Q} \big [\| \bar u ^t - u^\gam \|\sq \big] 
    = 
    \tilde O( \delta +  \iota /t ) 
    \;.}
\$
\end{theorem}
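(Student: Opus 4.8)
The plan is to reduce everything to the nonstationary dual averaging bounds of \cref{sec:main-nonstat-da}, using that \cref{alg:pace} is exactly composite DA applied to the dual program \eqref{eq:dual-beta-infinite}: the composite term $-\frac1n\sum_i\log\beta_i$, restricted to the box $[\ell,h]^n$, is strongly convex, and the per-step ``gradient'' revealed to the algorithm is $v_{i^\tau}(\theta^\tau)e_{i^\tau}$, a subgradient of $\beta\mapsto\max_i\beta_iv_i(\theta^\tau)$, whose conditional expectation given $\theta^{1:\tau-1}$ is a subgradient of $\beta\mapsto\int\max_i\beta_iv_i\,\mathrm dQ^\tau$. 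Under the ergodic model this is close to a subgradient of the underlying objective $\int\max_i\beta_iv_i\,\mathrm d\bar Q$ in two steps: looking $\iota$ periods ahead makes the conditional item law $\delta$-close to $\Pi$ in TV, and $\bar Q=(1/t)\sum_\tau Q^\tau$ is itself within $\delta+\iota/t$ of $\Pi$ (the first $\iota$ marginals contribute at most $\iota/t$). Plugging this gradient-bias estimate into the ergodic-input DA theorem of \cref{sec:main-nonstat-da} yields the per-step bound $\E[\|\beta^\tau-\beta^*\|^2]=\tilde O(\iota/\tau+\delta)$, and summing over $\tau\le t$ gives the master inequality $\sum_{\tau=1}^t\E[\|\beta^\tau-\beta^*\|^2]=\tilde O(\iota+\delta t)$, with the $\log t$ absorbed into $\tilde O$.

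From the master inequality I would derive the utility and expenditure bounds. The pacing-multiplier update of \cref{alg:pace} gives $\bar u^\tau_i=1/(n\beta^{\tau+1}_i)$ whenever the projection onto $[\ell,h]$ is inactive; since $\beta^*_i\in[1/n,1]$ lies strictly inside $[\ell,h]$, inactivity fails only on a large-deviation event whose contribution is controlled by the master inequality itself, so, using $u^*_i=1/(n\beta^*_i)$ and that $\beta\mapsto 1/(n\beta)$ is Lipschitz on $[\ell,h]$, $\E[\|\bar u^t-u^*\|^2]=\tilde O(\iota/t+\delta)$. For expenditures, $b^\tau_i=\beta^\tau_iu^\tau_i$, hence $\bar b^t_i=\beta^*_i\bar u^t_i+\frac1t\sum_\tau(\beta^\tau_i-\beta^*_i)u^\tau_i$; Cauchy--Schwarz bounds the second term in expectation by $\frac{\vinfty}{\sqrt t}\sqrt{\E[\sum_\tau\|\beta^\tau-\beta^*\|^2]}$, and $\beta^*_iu^*_i=1/n$, giving $\E[\|\bar b^t-(1/n)1_n\|^2]=\tilde O(\iota/t+\delta)$.

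For regret, note $\Reg_{i,t}(\gamma)=t(u^\gamma_i-\bar u^t_i)$, so by Jensen $\E[\Reg_{i,t}]\le t\sqrt{\E[\|\bar u^t-u^\gamma\|^2]}$, and the $\|\bar u^t-u^\gamma\|$ bound follows from the $\|\bar u^t-u^*\|$ bound above together with a stability estimate $\E[\|u^\gamma-u^*\|^2]=\tilde O(\iota/t+\delta)$ (equivalently $\E[\|\beta^\gamma-\beta^*\|^2]=\tilde O(\iota/t+\delta)$), since $\beta^\gamma$ minimizes the empirical dual \eqref{eq:dual-beta-finite} whose objective is $\tilde O(\delta+\iota/t)$-close, in the ergodic average sense, to that of \eqref{eq:dual-beta-infinite}. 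For envy, since agent $k$ wins item $\tau$ only if $\beta^\tau_kv_k(\theta^\tau)\ge\beta^\tau_iv_i(\theta^\tau)$, summing over the steps $k$ wins gives $\beta^*_i\langle v_i(\gamma),x_k\rangle\le\sum_\tau b^\tau_k+\sum_\tau|\beta^\tau_i-\beta^*_i|v_i(\theta^\tau)\le t\bar b^t_k+\tilde O(\sqrt{\iota t}+\sqrt\delta\,t)$, and using $1/\beta^*_i=nu^*_i$, $\bar b^t_k\approx1/n$, and $\langle v_i(\gamma),x_i\rangle=t\bar u^t_i\approx tu^*_i$, all with the rates just established, one gets $\E[\Envy_{i,t}]=\tilde O(\sqrt{\iota t}+\sqrt\delta\,t)$ — this mirrors the envy argument of \citet{gao2021online} with the i.i.d.\ concentration replaced by the master inequality.

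I expect two points to carry the real difficulty. The first is internal to \cref{sec:main-nonstat-da}: controlling DA under temporally correlated ``gradients,'' where the martingale structure of the i.i.d.\ analysis is lost and must be recovered by an $\iota$-step blocking/coupling argument that accounts for the $\sqrt\iota$ factor; but that theorem is assumed available. Conditional on it, the remaining obstacle here is the passage from the underlying optimum $(\beta^*,u^*)$ to the realized-sequence hindsight optimum $(\beta^\gamma,u^\gamma)$: unlike the i.i.d.\ case one cannot invoke a standard concentration inequality for the empirical item distribution, and must instead establish an ergodic concentration (or a direct DA-regret comparison on the finite-horizon program \eqref{eq:dual-beta-finite}) showing $\E[\|\beta^\gamma-\beta^*\|^2]$ and $\E[\|u^\gamma-u^*\|^2]$ are themselves $\tilde O(\iota/t+\delta)$, which is what makes the $\bar u^t$-versus-$u^\gamma$ comparison, and hence the regret bound, go through with the stated rates.
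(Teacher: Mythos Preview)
Your plan is correct and follows the paper's route almost exactly: PACE is DA on \eqref{eq:dual-beta-infinite}, the ergodic DA theorem of \cref{sec:main-nonstat-da} gives $\E[\|\beta^{\tau}-\beta^*\|^2]=\tilde O(\iota/\tau+\delta)$ and its summed version, and then the primal quantities (utilities, expenditures, regret, envy) are read off from the $\beta$-bound via the deterministic relations you describe, which are exactly the content of the paper's \cref{lm:dual-to-expendicture-util} and \cref{lm:dual-to-regret} (both imported from \citet{gao2021online}). Your extra care about $\bar Q$ versus the ergodic limit $\Pi$ is well placed; the paper is terse on this and your triangle-inequality observation $\|\bar Q-\Pi\|_\TV\le\delta+\iota/t$ is precisely what makes ``set $\Pi=\bar Q$'' legitimate up to constants.

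The one place where the paper is simpler than your plan is the passage from $\beta^*$ to the hindsight optimum $\beta^\gamma$, which you flag as the main remaining difficulty and propose to handle by an ergodic concentration argument for the empirical dual objective. The paper avoids any concentration: since $\beta^\gamma$ minimizes $\phi_\gamma(w)=\frac1t\sum_\tau F(w,\theta^\tau)$, one has the deterministic inequality
\[
R_t(\beta^\gamma)=\sum_\tau\big(F(\beta^\tau,\theta^\tau)-\phi_\gamma(\beta^\gamma)\big)\ \ge\ \sum_\tau\big(F(\beta^\tau,\theta^\tau)-\phi_\gamma(\beta^*)\big)=R_t(\beta^*),
\]
so the same lower bound on $\E[R_t(\beta^*)]$ already controls $\E[R_t(\beta^\gamma)]$, and \cref{fact:linxiao-intext} with $w=\beta^\gamma$ gives $\E[\|\beta^{t+1}-\beta^\gamma\|^2]$ the same $\tilde O(\iota/t+\delta)$ rate for free; $\E[\|\beta^\gamma-\beta^*\|^2]$ then follows by the triangle inequality. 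This is the paper's \cref{sec:deduction-to-hindsight-optima}, and it replaces your proposed stability/concentration step with a one-line monotonicity argument.
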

\begin{remark}[Markov Input]\label{rm:markov-input}
    
    We can specialize the result in \cref{thm:main-envy-regret-ergodic} to fast mixing or Markov item sequences.  Fast mixing means the deviation $\delta$ decreases exponentially, i.e., for all $1 \leq \iota \leq t-1$, 
    \#
    \sup_{\gamma} \sup_{\tau = 1,\dots, t-\iota} \| Q^{\tau + \iota} (\theta^{1:\tau})- \Pi \|_\TV \leq M \rho^{\iota}
    \;,
    \#
    for some $M>0$, $\rho \in [0,1)$, and $\Pi$ is the stationary distribution.
    Examples include finite state-space time-homogeneous Markov chains and uniformly ergodic Markov chains on general state spaces~\citep[Chapter 16]{meyn2012markov}.
    In these cases, setting
        $
            \iota = \frac{\log t  + \log M }{\log (\rho\inv)} = O(\frac{\log t}{\log(\rho\inv )} )
        $ 
        implies $\delta \leq 1/t$.
        This means the Markov chain from which $\gamma$ is generated takes $O(\log t)$ steps to get $(1/t)$-close to stationarity.
        The dominant term for the regret in \cref{thm:main-envy-regret-ergodic}  (further ignoring $M$) is then
        $  ( 1+\frac{1}{\log(\rho \inv)} ) ^{1/2}  \sqrt{ t} .$ 
        The term in the parenthesis reflects the inflation caused by input dependency. To recover the case of i.i.d.\ input, we simply send $\rho \to 0$ and the usual $\sqrt{t}$ regret and envy rates and $1/t$ utility and expenditure convergence rates are again recovered.
    \end{remark}

\subsection{Periodic Input}

Item sequences often exhibit statistical periodic structure. For example, when allocating compute time to requestors, there will be more requests during weekdays and less on weekends. The compute request patterns vary throughout the week, and yet the weekly pattern would repeat over time. 
Similarly, Internet traffic typically exhibits periodic structure.

Formally, assume that the horizon $t$ divides into $K$ blocks of time, each of size $q$.
This divides the item sequence $\gam$ into consecutive blocks of length $q$. 
Within each block, we allow a distribution with arbitrary dependence between time steps, but we assume that the blocks, as a whole, are identically and independently distributed. We define the set of periodic input distributions as follows:
\#
\mathsf{C}^{\mathrm{P}}(q) \defeq \big\{{Q} \in \Delta(\Theta^{q})^K : Q^{1: q}=Q^{q+1: 2 q}=\ldots=Q^{t-q+1: t} \big\} 
\;.
\#
\begin{theorem} [Periodic Case]
\label{thm:main-envy-regret-periodic}
 For the periodic case, \cref{alg:pace} guarantees that for an instance $\sfA = (n,t,\Theta,Q,v)$, we have
\#
\sup_{Q \in \sfC^\mathrm{P}(q)} \E_{\gamma \sim Q}\big[\Reg_{i,t}(\gamma)\big]
\;,
\sup_{Q \in \sfC^\mathrm{P}(q)} \E_{\gamma \sim Q}\big[\Envy_{i,t}(\gamma)\big]
& = 
\tilde O \big( \sqrt{qt} \big)
\#
and 
\$
\scriptstyle{ \sup_{Q \in \sfC^\mathrm{P}(q)} \E_{\gamma \sim Q} \big [\| \bar b^t - (1/n)1_n \|\sq \big]
 \;,
 \sup_{Q \in \sfC^\mathrm{P}(q)} \E_{\gamma \sim Q} \big [\| \bar u ^t - u^* \|\sq \big] 
 \;,
 \sup_{Q \in \sfC^\mathrm{P}(q)} \E_{\gamma \sim Q} \big [\| \bar u ^t - u^\gam \|\sq \big]}
= \tilde O( q\sq / t ) \;.
\$
\end{theorem}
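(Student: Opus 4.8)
The plan is to leverage the identification, due to \citet{gao2021online}, of PACE (\cref{alg:pace}) with \emph{composite dual averaging} applied to the stochastic dual Eisenberg--Gale program~\eqref{eq:dual-beta-infinite}: at round $\tau$ the loss is $g_\tau(\beta)=\max_i\beta_i\vithetau$, a subgradient of which at the current iterate $\beta^\tau$ is exactly the realized-utility vector $u^\tau$ produced by the update in \cref{alg:pace}, and the composite term $r(\beta)=-\tfrac1n\sum_i\log\beta_i$ is strongly convex on the box $[\ell,h]^n$. Since the periodic class $\sfC^\mathrm{P}(q)$ is a special case of a \emph{block-independent} input model (identically distributed, mutually independent blocks of length $q$), the first step is to invoke the block-independent convergence theorem for dual averaging proved in \cref{sec:main-nonstat-da}. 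This supplies a mean-square bound on the pacing multipliers of the shape $\E[\|\beta^\tau-\beta^*\|\sq]\le\tilde O(1/\tau)+(\text{a nonstationarity penalty scaling with }q)$, together with the induced control of the running average $\tfrac1t\sum_{\tau\le t}\E[\|\beta^\tau-\beta^*\|\sq]$ and of the cumulative sum $\sum_{\tau\le t}\E[\|\beta^\tau-\beta^*\|\sq]=\tilde O(q)$. In parallel I would bound the deviation of the finite-horizon hindsight dual $\beta^\gamma$ (the minimizer in~\eqref{eq:dual-beta-finite}) from the underlying dual $\beta^*$: the empirical item average along $\gamma$ is a mean of $K=t/q$ i.i.d.\ blocks, so a block-level concentration inequality applied to the gradient of the objective in~\eqref{eq:dual-beta-finite}, together with strong convexity of $r$, gives $\E[\|\beta^\gamma-\beta^*\|\sq]=\tilde O(q/t)$, the block structure inflating the per-sample variance by the factor $q$.

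The second step converts pacing-multiplier convergence into the mean-square statements for utilities and expenditures. Because $\beta^*_i\in(1/n,1)$ lies strictly inside $[\ell,h]$ and $\ubartaui$ concentrates, the projection in \cref{alg:pace} is inactive at time $t$ with overwhelming probability, giving the exact identity $\bar u^t_i=1/(n\beta^{t+1}_i)$; hence $\|\bar u^t-u^*\|$ and $\|\bar u^t-u^\gamma\|$ are --- up to the Lipschitz constant of $\beta\mapsto1/\beta$ on the box, and a concentration bound absorbing the low-probability boundary event --- controlled by $\|\beta^{t+1}-\beta^*\|$ and $\|\beta^*-\beta^\gamma\|$ from Step~1, which yields the stated mean-square rates. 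For the expenditure, writing $b^\tau_i=\beta^\tau_i\utaui$ and using the update identity $\beta^{\tau+1}_i\ubartaui=1/n$, a summation by parts rewrites $\bar b^t_i-1/n$ as a $\tau$-weighted telescoping sum of increments $\beta^\tau_i-\beta^{\tau+1}_i$; re-indexing collapses this into a combination of $\tfrac1t\sum_{\tau\le t}(\beta^\tau_i-\beta^*_i)$ and $\beta^{t+1}_i-\beta^*_i$, each of which is bounded in mean square by the running-average and terminal estimates of Step~1.

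The third step handles regret and envy. For regret, complementary slackness of~\eqref{eq:EGprogram} gives $U^\gamma_i=t/(n\beta^\gamma_i)$, while in the interior case $\sum_{\tau\le t}\utaui=t\bar u^t_i=t/(n\beta^{t+1}_i)$, so $\Reg_{i,t}(\gamma)=\tfrac tn(1/\beta^\gamma_i-1/\beta^{t+1}_i)$, which is of order $\tfrac tn(\|\beta^\gamma-\beta^*\|+\|\beta^{t+1}-\beta^*\|)$; the Step~1 estimates then give $\E[\Reg_{i,t}]=\tilde O(\sqrt{qt})$. For envy, the auction rule implies that whenever agent $k$ wins item $\thetau$ one has $\vithetau\le(\beta^\tau_k/\beta^\tau_i)\,u^\tau_k$, hence $\langle v_i(\gamma),x_k\rangle\le\sum_{\tau\le t}(\beta^\tau_k/\beta^\tau_i)\,u^\tau_k$; comparing the ratio $\beta^\tau_k/\beta^\tau_i$ to its equilibrium value $\beta^*_k/\beta^*_i$ and using $\langle v_i(\gamma),x_i\rangle=t\bar u^t_i=t/(n\beta^{t+1}_i)$ together with $\sum_{\tau\le t}u^\tau_k=t/(n\beta^{t+1}_k)$, the envy is bounded by $\vinfty\sum_{\tau\le t}(\|\beta^\tau_k-\beta^*_k\|+\|\beta^\tau_i-\beta^*_i\|)$ up to lower-order terms. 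A Cauchy--Schwarz over the time index then yields $\E[\Envy_{i,t}]\lesssim\sqrt{t\sum_{\tau\le t}\E[\|\beta^\tau-\beta^*\|\sq]}=\tilde O(\sqrt{qt})$, using the cumulative bound from Step~1.

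The main obstacle I expect is internal to Step~1 --- establishing the block-independent dual-averaging rate for this particular, non-smooth loss. The function $g_\tau(\beta)=\max_i\beta_i\vithetau$ is only piecewise linear, so its subgradient --- the indicator of the current auction winner --- jumps as $\beta$ crosses a tie hyperplane, and $\partial g_\tau(\beta^\tau)$ need not be close to $\partial g_\tau(\beta^*)$ even when $\beta^\tau\approx\beta^*$; controlling the resulting error requires the smoothness of the \emph{expected} loss $\int_\Theta\max_i\beta_iv_i(\theta)\,\bar Q(\d\theta)$ near $\beta^*$ and the box constraint keeping $\beta$ uniformly away from degenerate ties --- precisely the structure the general theory of \cref{sec:main-nonstat-da} is designed to exploit. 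A second, periodic-specific difficulty is that within a block the conditional distributions $Q^s(\theta^{1:s-1})$ can be arbitrarily far from $\bar Q$ in total variation, so the bias component of the running-average gradient does not vanish term by term; it must be estimated by grouping terms into whole blocks, where the block sums have zero mean (the within-block marginals averaging to $\bar Q$) and are i.i.d.\ across blocks, which is what produces the $q$-scaling of the nonstationarity penalty rather than a vacuous $O(1)$ bound.
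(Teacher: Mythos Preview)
Your outline---block-independent DA convergence for $\beta^\tau$, then translation to the primal quantities---is exactly the paper's proof structure. Steps~2--3 are in effect re-derivations of what the paper packages as \cref{lm:dual-to-expendicture-util} and \cref{lm:dual-to-regret} (imported from \citet{gao2021online}); your handling of the projection by a high-probability argument differs slightly from the deterministic, pathwise bound used there (which is why $C_u$ carries a $1/\delta_0^2$ factor), and the paper controls $\|\beta^\gamma-\beta^*\|$ not by a separate block concentration argument but by the triangle-inequality trick $\|\beta^\gamma-\beta^*\|^2\le2\|\beta^{t+1}-\beta^*\|^2+2\|\beta^{t+1}-\beta^\gamma\|^2$ of \cref{sec:deduction-to-hindsight-optima}, after showing that the same DA regret lower bound also controls $\|\beta^{t+1}-\beta^\gamma\|^2$.

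Two corrections. First, the obstacle you flag about non-smoothness of $\beta\mapsto\max_i\beta_iv_i(\theta)$ is not a real issue here: the DA analysis in \cref{app:non-stat-da} never appeals to smoothness of the expected loss or to any comparison between $\partial g_\tau(\beta^\tau)$ and $\partial g_\tau(\beta^*)$. It relies only on the bounded-subgradient assumption $\|g_\tau\|_*\le G$ and strong convexity of $\Psi$, entering through the deterministic regret inequality of \cref{fact:linxiao-intext}; the periodic case is handled precisely by the block-grouping/conditioning argument you sketch in your second obstacle (\cref{sec:period}). Second, your Step-1 rate is off by a factor of $q$: the paper's block-independent DA bound (\cref{thm:maintext-da-periodic}, \cref{thm:maintext-beta-conv-ergodic-periodc}) is $\E[\|\beta^{\tau+1}-\beta^*\|^2]=\tilde O(q^2/\tau)$, so the cumulative sum is $\tilde O(q^2)$ rather than $\tilde O(q)$, which is consistent with the $\tilde O(q^2/t)$ mean-square rate in the theorem's second display; plugging this into \cref{lm:dual-to-regret} gives regret and envy of order $t\sqrt{q^2/t}=q\sqrt{t}$.
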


If the length $q$ of the blocks is of order $o(t)$ then the time-averaged regret and envy are both vanishing. For the i.i.d.\ case, we can set $q=1$ to recover the previous results.

Now we comment on dependence on the period length $q$. Suppose the item sequence consists of $K$ blocks, and blocks are i.i.d.\ We still allow arbitrary dependence within a block. The proof of \cref{thm:main-envy-regret-periodic} essentially relies on the result (\cref{thm:maintext-da-periodic}) that DA produces iterates whose squared error is of order $O(q^2/t)$. 
Consider dual averaging with the knowledge of the block structure $q$. Then the rate $1/K = q/t$ can be achieved by executing DA using one randomly chosen data point within a block, throwing away the rest in that same block. Such selection produces $K$ i.i.d. samples from the distribution. In comparison, the rate in $O(q^2/t)$ is worse off by a factor of $q$ due to not knowing the block-structure information.

\section{Proof Technique: Nonstationary Dual Averaging}
\label{sec:main-nonstat-da}

The PACE dynamics can be cast as online dual averaging \citep{xiao2010dual} applied to the dual of the hindsight allocation program in \eqref{eq:EGprogram}. 
This will be discussed in more detail in \cref{sec:pace-as-da} and \cref{app:pace-as-da}. 
However, in order to characterize its performance under various types of nonstationary input, we first extend the general convergence results for dual averaging to incorporate nonstationary input. The convergence results that we developed for dual averaging under three different types of input models are novel and are of independent interest. 

We remark that the results for the stochastic setting given by \cite{xiao2010dual} cannot be used directly, since they rely on the stringent i.i.d.\ assumption. 
\cite{duchi2012ergodic} consider ergodic mirror descent (MD) for convex problems under some (but not all) of our nonstationary input models.
Their results and analysis cannot be used in our case either, since their results do not allow using the composite structure, whose strong convexity we leverage.
Moreover, unlike DA, an MD-based approach would require tuning parameters such as stepsizes.

In this section, after introducing the setup of DA in \cref{sec:setup-of-da}, we present a DA convergence result for independent but not identical input in \cref{sec:da-indept-with-proof}, for which we outline the proof idea and technical challenges in \cref{sec:da-indept-with-proof}. Due to space limitations, we present DA convergence results for ergodic and periodic inputs in \cref{sec:da-ergodic-and-periodic}.

In the nonstationary setup of DA, we emphasize that whenever we mention convergence, we mean convergence of DA iterates to the population-level optimum $w^*_\Pi$ (or sometimes the hindsight optimum), up to some error caused by nonstationarity.

\subsection{Review: The DA Algorithm}
\label{sec:setup-of-da}

We review the dual averaging setup in the strongly convex case \cite[\S 1.1]{xiao2010dual}. Consider a stochastic optimization problem of the form 
\# \label{eq:linproblem}
    \min_w \bigg\{\phi(w)  \defeq  \E_{z\sim\Pi} \big[F(w,z)\big] = \E_{z\sim \Pi}\big[f(w,z)\big] + \Psi(w) \bigg\} \;,
\#
where $w\in (\R^d, \|\cdot\|)$ is the variable, $\Psi$ is a closed convex function with closed domain $\dom \Psi := \{ w\in \R^n: \Psi(w)<\infty \}$.  The expectation is taken over a probability distribution $\Pi$ on a measurable space $Z$. 
For each $z\in Z$, the function $ f(\cdot, z)$ is convex and subdifferentiable (a subgradient always exists) on $\dom \Psi$. Let $F(w,z) = f(z,w) + \Psi(w)$.

\textbf{Assumptions.} Let $\sfG(w, z) $ be a fixed element in the set of subgradients $\partial_w f(w, z)$. 
\vspace{-.2cm}
\begin{enumerate}
  \item 
  For almost every $z$, it holds $\|\sfG(w,z)\| _* \leq G $, where $\| \cdot\|_* =\max _{\|w\| \leq 1}\langle s, w\rangle$ is the dual norm.
  \item 
  There exists an $\barF\in \mathbb R$ such that $ F(w,z) \leq \barF$ for all $w\in \dom \Psi$ and (almost every) $z$. 
  \item 
  $\Psi$ is $\sigma$-strongly convex, i.e., $\Psi(\alpha w+(1-\alpha) u) \leq \alpha \Psi(w)+(1-\alpha) \Psi(u)-\frac{\sigma}{2} \alpha(1-\alpha)\|w-u\|^{2}$ for $w,u\in \dom \Psi$. 
\end{enumerate}
\vspace{-.2cm}
  Because of our strong convexity assumption, the solution to \eqref{eq:linproblem} is unique. Associated with $\Pi$ we define 
  $
  w^*_\Pi \defeq \argmin \E_{z\sim\Pi} \big[F(w,z)\big]
  \;.
  $

The dual averaging algorithm (DA) \cite[Algorithm 1]{xiao2010dual} aims to produce a sequence converging to the optimal point $w^*_\Pi$ or minimize the associated regret \citep[\S1.2]{xiao2010dual}. The algorithmic details for DA are presented in \cref{alg:da}. Note that although \cite{xiao2010dual} only considers the case of i.i.d.\ data, DA iterates are defined for every input sequence $\zts$, regardless of any distributional properties of the sequence.

The first step in our analysis is a relationship between regret and the suboptimality $\|w_t - w^*_\Pi\|$ derived by \citet{xiao2010dual}. 
Consider the dual averaging algorithm with data $\zts$. 
We denote the one-step subgradient by $g_\tau
\defeq  \sfG(w_\tau, z_\tau)$
and the average subgradient by $\gbartau = (\sum_{s = 1}^\tau g_{s}) /\tau$.
Given data $\zts$, we define the regret and the sum of squared subgradient norms~\footnote{
  See the first equation on page 2584 in \cite{xiao2010dual}. In \cite{xiao2010dual}'s notation, set $\beta_\tau = 0$ all $\tau \geq 1$ and $\beta_0 = \sigma$, plug in the bound $h(w_2) \leq 2\|g_1\|_*\sq/\sigma$ and we have the expression of $\Delta_t$ in our paper.
}
\$
R_{t}(w) 
\defeq \sum_{\tau=1}^{t} \big(F(w_\tau, z_\tau) - F(w, z_\tau)\big) 
\;,\quad
\Delta_{t}  
\defeq
\frac{1}{2\sigma} \big( 5 \| g_1\|_*^2 + 
\sum_{\tau=1}^{t-1}{\| g _{\tau+1}\|_*^2}/{\tau}\big) 
.
\$
The bound $\Delta_t \leq {(6+\log t )G\sq}/({2\sigma})$ holds 
in a deterministic manner due to the bounded subgradient assumption. 

\citet{xiao2010dual} shows the following bound on suboptimality of $w_t$
\begin{lemma}[Regret Bound, Section B.2 in \cite{xiao2010dual}] \label{fact:linxiao-intext}
  For any sequence $\zts$, any $w \in \dom \Psi$, any $t=1,2,\dots$, it holds $ \left\|w_{t+1}-w\right\|^{2} \leq \frac{2}{\sigma t} \big(\Delta_{t}-R_{t}(w)\big) \; . $
\end{lemma}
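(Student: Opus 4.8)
The plan is to reconstruct the argument from \citet{xiao2010dual} specialized to the strongly convex case with $\beta_\tau = 0$ for $\tau \geq 1$ and $\beta_0 = \sigma$. The starting point is the standard dual averaging regret bound, which for the running average subgradient and the choice of regularization above takes the form $R_t(w) \leq \Delta_t - \frac{\sigma t}{2}\|w_{t+1}-w\|^2$ for all $w \in \dom\Psi$. Rearranging this inequality immediately yields $\|w_{t+1}-w\|^2 \leq \frac{2}{\sigma t}(\Delta_t - R_t(w))$, which is exactly the claim. So the real content is establishing the underlying regret inequality; I would not reprove it from scratch but rather cite \citet[Section B.2]{xiao2010dual} and verify that the constants match our normalization of $\Delta_t$.

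Concretely, I would proceed in three steps. First, recall the dual averaging update in \cref{alg:da}: $w_{\tau+1} = \argmin_w\{\langle \gbartau, w\rangle + \Psi(w) + \frac{\beta_\tau}{\tau}h(w)\}$ type recursion (with $h$ a prox-function and $\beta_\tau$ the regularization schedule). In the strongly convex regime one sets the extra prox-regularization to zero and lets $\Psi$'s own $\sigma$-strong convexity play that role; this is the $\beta_\tau=0$, $\beta_0=\sigma$ substitution. Second, invoke the generic dual averaging regret decomposition from \citet{xiao2010dual}, which bounds $R_t(w)$ by a ``stability'' term $\sum_\tau \frac{\|g_{\tau+1}\|_*^2}{2\sigma\tau}$ plus a boundary term $2\|g_1\|_*^2/\sigma$ (coming from bounding $h(w_2)$), minus the negative curvature term $\frac{\sigma t}{2}\|w_{t+1}-w\|^2$ that the strong convexity of the cumulative objective contributes. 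Collecting the positive terms into $\Delta_t = \frac{1}{2\sigma}(5\|g_1\|_*^2 + \sum_{\tau=1}^{t-1}\|g_{\tau+1}\|_*^2/\tau)$ matches the definition in the excerpt (the $5$ absorbs the $2\|g_1\|_*^2/\sigma = 4\|g_1\|_*^2/(2\sigma)$ boundary contribution together with the $\tau=1$-ish leading term). Third, rearrange to isolate $\|w_{t+1}-w\|^2$.

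The one genuinely delicate point — the ``hard part'' — is bookkeeping the constant in front of $\|g_1\|_*^2$ and making sure the regret decomposition is applied for an \emph{arbitrary fixed} comparator $w\in\dom\Psi$ (not the optimum), since the lemma must hold for every such $w$ and every $t$; fortunately the dual averaging regret bound in \citet{xiao2010dual} is already stated against an arbitrary comparator, so this transfers directly. A secondary subtlety is that the inequality is \emph{deterministic} — it holds pathwise for any input sequence $\zts$, with no distributional assumptions — which is precisely why it will serve as the workhorse for all three nonstationary input models later; I would emphasize that the only properties used are convexity of $f(\cdot,z)$, the update rule, and $\sigma$-strong convexity of $\Psi$, none of which reference how $\zts$ is generated. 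Once these constants are checked against the footnoted translation of \citet{xiao2010dual}'s notation, the proof is complete.
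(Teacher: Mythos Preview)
Your proposal is correct and aligned with how the paper handles this lemma: the paper does not give an independent proof but simply cites \citet[Section B.2]{xiao2010dual} and, in a footnote, records the parameter choices $\beta_0=\sigma$, $\beta_\tau=0$ for $\tau\geq 1$ together with the bound $h(w_2)\leq 2\|g_1\|_*^2/\sigma$ that produce the stated form of $\Delta_t$. Your plan to cite the dual averaging regret decomposition, verify the constant bookkeeping, and rearrange is exactly this, just spelled out in slightly more detail.
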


\subsection{Review: PACE as Dual Averaging} \label{sec:pace-as-da}

In this section we review how to cast PACE as dual averaging applied to the problem \eqref{eq:dual-beta-infinite}. This derivation was originally given in \citet{gao2021online}. Let $f(\beta, \theta)= \max_i \beta_i v_i(\theta)$, $\Psi(\beta) = - \frac1n \sumiton \log\beta_i$, in which case we get 
$
F(\beta, \theta) = f(\beta,\theta) + \Psi(\beta) = \max_{i\in[n]} \big\{\beta_i v_i(\theta) \big\} - \frac1n \sumiton \log \beta_i\;.
$
Following \cite[\S 5]{gao2020infinite}, since $f(\,\cdot\,,\theta)$ is a piecewise linear function, a subgradient is 
$
\sfG ( \beta, \theta) := v_{i^\tau}(\theta) e_{i^\tau}  \in \partial_\beta f(\beta,\theta)
\;,
$ 
where $i^\tau = \min\{ \argmax_i \beta_i v_i(\theta)\}$ is the index of the winning agent (see, e.g., \citet[Theorem 3.50]{beck2017first}).
Based on this instantiation of DA, we get that the iterates $\{ \betatau\}_{\tau = 1}^{t+1}$ generated by the PACE dynamics (\cref{alg:pace}) are exactly the iterates $w_\tau$ generated by $\mathrm{DA}(\sfG,\Psi,\gamma)$ (\cref{alg:da}). A proof is given in \cref{app:pace-as-da}.

\begin{algorithm}[t]
	\SetAlgoNoLine
	\KwIn{
    subgradient $\sfG$, 
    regularizer $\Psi$ and
    data $\zts$
    .}
    \textbf{Initialize:} set $\bar g_0 = 0$ and $w_1 = \argmin \Psi$.

    \For{$\tau = 1,\dots, t$
    }{
      Observe $z_\tau$ and compute 
      $g_{\tau} = \sfG(w_{\tau},z_{\tau})$.
      \label{item:da-compute-g(t)}
      
      Average subgradients (the \emph{dual average}) via 
      $\bar{g}_{\tau} = \frac{\tau-1}{\tau}\bar{g}_{\tau-1} + \frac{1}{\tau}{g}_{\tau}$.
      \label{item:da-update-g-bar} 
      
      Compute the next iterate $w_{\tau+1} = \argmin_w \{ \langle \bar{g}_{\tau}, w\rangle + \Psi(w) \}$. 
      \label{item:da-update-w}
    }
	\caption{DA$(\sfG,\Psi, \zts)$}
  \label{alg:da}
\end{algorithm}

Before moving on to showing our results, let us first touch on the fact that dual averaging provides worst-case regret guarantees. Naively, one may expect that these regret guarantees would directly translate into regret guarantees on the primal performance, meaning a bound on $\sup_{\gamma} \Reg_{i,t}(\gamma)$.
However, such dual regret bounds do not imply a worst-case primal regret bound.
From a technical perspective, it is unclear how a regret bound on the dual objective would translate to a regret bound on envy or utilities.
Secondly, based on results in the online fair allocation literature~\citep{azar2016allocate, banerjee2022online}, it is known that not only can we not get a no-regret guarantee on the primal performance, it is not even possible to achieve a constant competitive ratio in the worst-case setting.


\subsection{Independent Data with Adversarial Corruption} \label{sec:da-indept-with-proof}
 
We present a DA convergence result with independent data in this section. Theorem statements for the ergodic case and periodic case are presented in \cref{sec:da-ergodic-and-periodic}.
Discarding the i.i.d.\ assumption on the data $\zts$, we let $P$ be the joint distribution of $\{z_\tau \}_\tau$ and let $P^\tau$ be the marginal distribution of $z_\tau$. 
We study the relationship between the DA iterates $w_{t+1}$ and $w^*_\Pi$ by bounding the mean-square difference 
$\E_{\zts \sim P }\big[\|w_{t+1} - w^*_\Pi \|\sq\big]$, 
and thus demonstrate in what sense the data distribution $P$ should stay close to the i.i.d.\ distribution $\Pi$ in order to preserve DA convergence. 

We first introduce a variant of $\sfC^{\mathrm{ID}}(\delta)$ with a target distribution $\Pi$: 
\# \label{eq:da-indep-dist-class}
\sfC^{\mathrm{ID}}(\delta; \Pi ) \defeq  \bigg\{ {{P}} \in \Delta({\Theta})^{t}: \frac1t \sumtau \| P^\tau -  \Pi \|_\TV \leq \delta \bigg\} \;.
\#
\begin{theorem}[DA Convergence, Independent Data with Adversarial Corruption] \label{thm:maintext-da-indep}
If $\zts \sim P$ and $P\in\sfC^{\mathrm{ID}}(\delta,\Pi)$. Then for $t\geq1$,
  \$
  \E_{\zts \sim P} \big[\| w_{t+1}-w^*_\Pi \|\sq\big]
  \leq 
  \frac{(6 + \log t) G^2 }{\sigma^2 t} + \frac{8\bar F }{\sigma }\delta = \tilde O(\delta + 1/t ) \;.
  \$
Moreover, the rate $ \tilde O(\delta + 1/t )$ applies to $\E\big[ \| w_{t+1} - w^*_\gamma\|_2^2 \big]$ and $\E\big[ \| w^*_\gam - w^*_\Pi \|_2^2 \big]$ (See \cref{sec:deduction-to-hindsight-optima}).
\end{theorem}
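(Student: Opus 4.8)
\emph{Proof proposal.} The plan is to feed the fixed comparator $w = \wst$ into the deterministic bound of \cref{fact:linxiao-intext} and then take expectations, splitting the right-hand side into the ``i.i.d.\ part'' (coming from $\Delta_t$) and the ``nonstationarity part'' (coming from the expected regret $\E[R_t(\wst)]$). Concretely, \cref{fact:linxiao-intext} gives $\|w_{t+1} - \wst\|^2 \le \frac{2}{\sigma t}(\Delta_t - R_t(\wst))$ along every sample path. Since $\Delta_t \le (6+\log t)G^2/(2\sigma)$ holds deterministically by the bounded-subgradient assumption, the term $\frac{2}{\sigma t}\Delta_t$ already contributes the claimed $(6+\log t)G^2/(\sigma^2 t)$; it remains to show $\E_{\zts \sim P}[-R_t(\wst)] \le 4\barF\, t\,\delta$, which upon multiplication by $2/(\sigma t)$ yields the second term $8\barF\delta/\sigma$.

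To control $\E[-R_t(\wst)] = \sumtau \E[F(\wst, z_\tau) - F(w_\tau, z_\tau)]$, I would condition on $z_{1:\tau-1}$: the DA iterate $w_\tau$ is $z_{1:\tau-1}$-measurable, and by independence $z_\tau \sim P^\tau$ conditionally, so with $\phi^\tau(w) := \E_{z\sim P^\tau}[F(w,z)]$ and $\phi(w) := \E_{z\sim\Pi}[F(w,z)]$ the conditional expectation of $F(w_\tau, z_\tau) - F(\wst, z_\tau)$ equals $\phi^\tau(w_\tau) - \phi^\tau(\wst)$. Adding and subtracting $\phi$ decomposes this into $[\phi(w_\tau) - \phi(\wst)] + [(\phi^\tau - \phi)(w_\tau) - (\phi^\tau-\phi)(\wst)]$; the first bracket is nonnegative because $\wst$ minimizes $\phi$, and each term in the second bracket is bounded in absolute value by $2\barF\|P^\tau - \Pi\|_\TV$ by the standard bounded-test-function estimate together with $F \le \barF$. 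Hence $\phi^\tau(w_\tau) - \phi^\tau(\wst) \ge -4\barF\|P^\tau-\Pi\|_\TV$; summing over $\tau$ and using the defining constraint $\frac1t\sumtau\|P^\tau-\Pi\|_\TV \le \delta$ of $\sfC^{\mathrm{ID}}(\delta;\Pi)$ gives $\E[R_t(\wst)] \ge -4\barF t\delta$, as needed.

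For the ``moreover'' statement about the hindsight optimum $w^*_\gamma := \argmin_w \frac1t\sumtau F(w,z_\tau)$, I would exploit that $w \mapsto \frac1t\sumtau F(w,z_\tau)$ is $\sigma$-strongly convex (inherited from $\Psi$), so $\|\wst - w^*_\gamma\|^2 \le \frac2\sigma\big(\frac1t\sumtau F(\wst,z_\tau) - \frac1t\sumtau F(w^*_\gamma, z_\tau)\big)$ pathwise. The expectation of the first average is at most $\phi(\wst) + 2\barF\delta$ by the same TV estimate; for the second, applying \cref{fact:linxiao-intext} with comparator $w^*_\gamma$ gives $\frac1t\sumtau F(w^*_\gamma,z_\tau) \ge \frac1t\sumtau F(w_\tau,z_\tau) - \frac1t\Delta_t$, whose expectation is at least $\phi(\wst) - 2\barF\delta - \frac1t\E[\Delta_t]$ by the conditioning argument above. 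Combining yields $\E\|\wst - w^*_\gamma\|^2 = \tilde O(\delta + 1/t)$, and then $\|w_{t+1} - w^*_\gamma\|^2 \le 2\|w_{t+1} - \wst\|^2 + 2\|\wst - w^*_\gamma\|^2$ transfers the rate to $\E\|w_{t+1}-w^*_\gamma\|^2$.

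The step I expect to be the crux is the lower bound on $\E[R_t(\wst)]$: in the i.i.d.\ case this expected regret against the population optimum is automatically nonnegative and can simply be discarded, whereas under corruption one must keep the comparator pinned at $\wst$ while extracting exactly one factor of the per-step TV discrepancy, using only the one-sided bound $F\le\barF$ and the past-measurability of $w_\tau$. The hindsight-optimum deduction is then essentially bookkeeping built on the same two tools (strong convexity and \cref{fact:linxiao-intext}).
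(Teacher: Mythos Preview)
Your argument for the main bound is essentially identical to the paper's: plug $w=\wst$ into \cref{fact:linxiao-intext}, use the deterministic bound on $\Delta_t$, and lower bound $\E[R_t(\wst)]$ by conditioning on the past, adding and subtracting $\phi=\phi_\Pi$, dropping the nonnegative $\phi(w_\tau)-\phi(\wst)$ term, and controlling the residual by $4\barF\|P^\tau-\Pi\|_\TV$ via the TV estimate. The paper writes the decomposition as three separate sums (I, II, III) rather than via $\phi^\tau$, but the content and constants match exactly.

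Where you diverge is the hindsight deduction. The paper's route is shorter: it first observes, pathwise, that
\[
R_t(w^*_\gamma) \;=\; \sumtau F(w_\tau,z_\tau) - t\,\phi_\gamma(w^*_\gamma) \;\ge\; \sumtau F(w_\tau,z_\tau) - t\,\phi_\gamma(\wst) \;=\; R_t(\wst),
\]
so \cref{fact:linxiao-intext} with $w=w^*_\gamma$ immediately gives the \emph{same} upper bound for $\E\|w_{t+1}-w^*_\gamma\|^2$ as for $\E\|w_{t+1}-\wst\|^2$, and then the triangle inequality $\|w^*_\gamma-\wst\|^2\le 2\|w_{t+1}-\wst\|^2+2\|w_{t+1}-w^*_\gamma\|^2$ handles the remaining quantity. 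Your route---bound $\E\|\wst-w^*_\gamma\|^2$ first via strong convexity of $\phi_\gamma$, lower-bounding $\phi_\gamma(w^*_\gamma)$ through $R_t(w^*_\gamma)\le\Delta_t$ and the conditioning argument, then transfer to $\E\|w_{t+1}-w^*_\gamma\|^2$ by the triangle inequality in the opposite direction---is correct and yields the same $\tilde O(\delta+1/t)$ rate, but it costs an extra pass through the TV and $\Delta_t$ estimates and loses a small constant factor. The paper's one-line monotonicity $R_t(w^*_\gamma)\ge R_t(\wst)$ is the cleaner way to get both hindsight statements for free.
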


\begin{remark}
  From this result we can tell when DA retains last-iterate convergence. Suppose the number of corrupted data points is of order $o(t)$ (assuming corruption on each data is of the same order), then the corruption per item is $\delta = o(1)$ and DA converges to the optimal solution as $t\to\infty$. Furthermore, if the corruption per data point is of order $\delta = O(1/t)$, then the fast rate $1/t$ is retained.
\end{remark}

By \cref{sec:pace-as-da}, we get as a corollary that the PACE iterates $\{\beta_t \}$ will converge to $\beta^*$, the solution to the infinite-dimensional dual program \eqref{eq:dual-beta-infinite}. This is the building block for our results in \cref{sec:input-model}.  

While the full proof is too long to fit in the paper, we now give a proof sketch to show the main ideas behind how we derive \cref{thm:maintext-da-indep}.
We begin the proof by noticing \cref{fact:linxiao-intext} is deterministic and valid for any $\zts$. 
Now set $w =  w^*_\Pi $ in \cref{fact:linxiao-intext}.
If the input data $\zts$ were i.i.d.\ from $\Pi$, i.e., $P=\Pi^{\otimes t}$, then $\E[R_t(w^*_\Pi)]$ would be greater than zero, and we would obtain 
\$
  \E \big[\|w_{t+1}-w^*_\Pi \|^{2} \big] 
  \leq 
  \frac{2}{\sigma t} \Big(\E[\Delta_t]  - \E\big[R_t(\wst)\big]\Big) 
  \leq 
  \frac{ (6+\log t)G\sq}{\sigma\sq t} \;.
\$

However, in the nonstationary case, the regret $\E[R_t(w^*_\Pi)]$ might be negative. At a high level, our results are achieved by  introducing appropriate measures of the nonstationarity and then lower bounding $\E[R_t(w^*_\Pi)]$ based on those measures.

To this end, we decompose the regret as follows.
    Write
    \$
      R_t({w^*_\Pi}) = &
      \underbrace{\sum_{\tau=1}^{t} \big(F(w_\tau, z_\tau) - {\phi_\Pi}(w_\tau) \big)
     + \sum_{\tau=1}^{t} \big({\phi_\Pi}({w^*_\Pi}) - F({w^*_\Pi}, z_\tau) \big)}_{\text{I}}
      + 
      \underbrace{\sum_{\tau=1}^{t} \big({\phi_\Pi}(w_\tau) - {\phi_\Pi}({w^*_\Pi})\big)}_{\text{II}} \;.
    \$
    By optimality of ${w^*_\Pi}$ we have $\II \geq 0$. Using the bound on the TV distance between $\{ P^\tau \}_\tau$ and $\Pi$, and boundedness of $F$ we can control the other two terms. The key is, conditional on $\cF_{\tau-1}$, the iterate $\wtau$ is deterministic and the distribution of $\ztau$ is $P^\tau$ due to independence assumption. For each term in the first summation, we condition on $\cF_{\tau-1}$ and obtain
    \$
| \E[ F(w_\tau, z_\tau) - {\phi_\Pi}(w_\tau) |\cFtaum]   | 
& = \bigg | \E\bigg[ \int_\cZ F(w_\tau, z) \Ptaugiventaumdz - \int_\cZ F(w_\tau, z)\diff \Pi (z)|\cFtaum\bigg]\bigg |
\\
& = \bigg | \E\bigg[ \int_\cZ F(w_\tau, z) P^\tau(\diff z) - \int_\cZ F(w_\tau, z)\diff \Pi (z)|\cFtaum\bigg]\bigg |
\\
& \leq  \E\Bigg[ \bigg | \int_\cZ  F(w_\tau, z)P^\tau(\diff z) - \int_\cZ F(w_\tau, z)\diff \Pi (z)\bigg | |\cFtaum\Bigg]
\\ 
& \leq 2 \bar F   \|P^\tau - \Pi \|_\TV \;.
\$
The second sum can be handled similarly. For the detailed proof and a generalization see \cref{sec:arb} in \cref{app:non-stat-da}.

  We now discuss how this paper handles nonstationarity differently from the existing literature.
  Let us use \citet{balseiro2020best} as a reference point, since they consider very similar categories of nonstationary inputs. From a online constrained optimization perspective, \citet{balseiro2020best} relies on the fact that their objective is separable across timesteps in order to handle nonstationarity. In contrast, this structure does not exist for the online fair allocation problem, because the objective takes the logarithm of the utility over time.
Concretely, in \citet{balseiro2020best}, the objective is of the form $\sum_{\tau=1}^t f_\tau(x)$. This type of time-separability occurs in all works that we are aware of for non-stationary inputs, also e.g. the ergodic mirror descent paper \citet{duchi2012ergodic} (\citet{duchi2012ergodic} is not an online setting, but the expectation in their objective is analogous to time separability). Time separability is used as a key property for deriving  regret bounds in those papers. The separability enables translating dual regret to primal regret by a weak duality argument (see e.g. Prop.~1 in \citet{balseiro2020best} where a time-separable dual allows weak duality).
  
  In contrast, our problem has time-separability only in the dual formulation, but not in the primal one, which is where we ultimately want guarantees (since we are interested in utilities converging). Our contribution to handling nonstationarity in online fair allocation is showing that a dual approach works even without the separability structure.  We begin by deriving convergence guarantees on the last dual iterate, which we achieve by modifying the dual averaging proof to take into account nonstationarity and analyzing the stability of the dual variables in dual averaging. We note that this technique depends on strong convexity, unlike for the time-separable case.
  Next, to go from dual last-iterate convergence to primal regret bounds we use the first-order optimality condition for EG program, which are specific techniques for our problem (such techniques were also used in the previous PACE paper \citet{gao2021online}).
    
\section{Conclusion}
We established new convergence results for dual averaging under nonstationary data input models, namely, adversarial corruption, ergodic, and block-independent input models. 
Leveraging these results, we showed that, for online fair allocation problems with item arrivals generated from the above nonstationary data input models, the PACE algorithm automatically adapts to them and achieves asymptotic fairness and efficiency without any parameter tuning. 
Numerical experiments demonstrated the effectiveness of PACE under these data input models.

\ack{
This research was supported by the Office of Naval Research Young Investigator
Program under grant N00014-22-1-2530, and the National Science Foundation award
IIS-2147361.
}
\newpage
\appendix
\bibliographystyle{apalike}
\bibliography{refs.bib}
\newpage
\section*{Checklist}

The checklist follows the references.  Please
read the checklist guidelines carefully for information on how to answer these
questions.  For each question, change the default \answerTODO{} to \answerYes{},
\answerNo{}, or \answerNA{}.  You are strongly encouraged to include a {\bf
justification to your answer}, either by referencing the appropriate section of
your paper or providing a brief inline description.  For example:
\begin{itemize}
  \item Did you include the license to the code and datasets? \answerYes{See Section~\ref{gen_inst}.}
  \item Did you include the license to the code and datasets? \answerNo{The code and the data are proprietary.}
  \item Did you include the license to the code and datasets? \answerNA{}
\end{itemize}
Please do not modify the questions and only use the provided macros for your
answers.  Note that the Checklist section does not count towards the page
limit.  In your paper, please delete this instructions block and only keep the
Checklist section heading above along with the questions/answers below.

\begin{enumerate}

\item For all authors...
\begin{enumerate}
  \item Do the main claims made in the abstract and introduction accurately reflect the paper's contributions and scope?
    \answerYes{See \cref{sec:input-model}}
  \item Did you describe the limitations of your work?
    \answerNo{Our work is mostly theoretical and studies the fair-allocation algorithm PACE under various nonstationary settings.}
  \item Did you discuss any potential negative societal impacts of your work?
    \answerNo{Not applicable.}
  \item Have you read the ethics review guidelines and ensured that your paper conforms to them?
    \answerYes{We read them.}
\end{enumerate}

\item If you are including theoretical results...
\begin{enumerate}
  \item Did you state the full set of assumptions of all theoretical results?
    \answerYes{See \cref{sec:input-model} and \cref{sec:main-nonstat-da}.}
        \item Did you include complete proofs of all theoretical results?
    \answerYes{See \cref{sec:pace_as_da}, \cref{app:non-stat-da} and \cref{sec:proof_for_pace}.}
\end{enumerate}

\item If you ran experiments...
\begin{enumerate}
  \item Did you include the code, data, and instructions needed to reproduce the main experimental results (either in the supplemental material or as a URL)?
    \answerYes{See \cref{sec:experiments}.}
  \item Did you specify all the training details (e.g., data splits, hyperparameters, how they were chosen)?
    \answerYes{See \cref{sec:experiments}.}
        \item Did you report error bars (e.g., with respect to the random seed after running experiments multiple times)?
    \answerNo{No applicable.}
        \item Did you include the total amount of compute and the type of resources used (e.g., type of GPUs, internal cluster, or cloud provider)?
    \answerYes{See \cref{sec:experiments}.}
\end{enumerate}

\item If you are using existing assets (e.g., code, data, models) or curating/releasing new assets...
\begin{enumerate}
  \item If your work uses existing assets, did you cite the creators?
    \answerNo{We use synthetic data.}
  \item Did you mention the license of the assets?
  \answerNo{N/A.}
  \item Did you include any new assets either in the supplemental material or as a URL?
  \answerNo{N/A.}
  \item Did you discuss whether and how consent was obtained from people whose data you're using/curating?
  \answerNo{N/A.}
  \item Did you discuss whether the data you are using/curating contains personally identifiable information or offensive content?
  \answerNo{N/A.}
\end{enumerate}

\item If you used crowdsourcing or conducted research with human subjects...
\begin{enumerate}
  \item Did you include the full text of instructions given to participants and screenshots, if applicable?
  \answerNo{N/A.}
  \item Did you describe any potential participant risks, with links to Institutional Review Board (IRB) approvals, if applicable?
  \answerNo{N/A.}
  \item Did you include the estimated hourly wage paid to participants and the total amount spent on participant compensation?
  \answerNo{N/A.}
\end{enumerate}

\end{enumerate}

\section{Related Work} \label{app:relatedwork}

Since our work studies competitive equilibrium computation, online fair resource allocation and stochastic optimization, while PACE employs the idea of pacing in auction mechanism design, we further discuss related work in these areas.

\paragraph{Convex optimization for computing competitive equilibria.}
Convex optimization algorithm (especially first-order methods) and their theory have been used to design and analyze algorithms for computing competitive equilibria, often through equilibrium-capturing convex programs \cite{birnbaum2011distributed,cole2017convex,cheung2020tatonnement,gao2020first,gao2021online}.
Applying a first-order method to such a convex program often leads to (recovers) interpretable market dynamics that emulate real-world economic behaviors, such as the proportional response dynamics \cite{birnbaum2011distributed,zhang2011proportional,cheung2018dynamics,gao2020first} and t{\^a}tonnement \citep{cheung2020tatonnement}.
The PACE algorithm of \citet{gao2021online} is no exception: it results from applying dual averaging to a specific convex program. 
Discrete variants of these convex programs have also been used for fair indivisible allocation~\citep{caragiannis2019unreasonable}, which yields some efficiency and fairness guarantees, though the discreteness breaks the connection to competitive equilibria.

\paragraph{(Online) fair resource allocation.} 
\citet{azar2016allocate} consider an online Fisher market with arbitrary item arrivals. 
They focus on a quality measure that is minimized at a competitive equilibrium and give an online algorithm that achieves a competitive ratio logarithmic in the size of the market and the ratio between the maximum and minimum (nonzero) buyer valuations over individual items. 
This algorithm requires solving a nontrivial linear program per iteration and is not known to improve with stochastic arrivals.
\citet{banerjee2022online} considers the problem of online allocation of divisible items to maximize Nash social welfare. 
They show that, under arbitrary item arrivals but with access to meaningful predictions of each buyer's total utility given all items, an online algorithm of the primal-dual type achieves a logarithmic competitive ratio. 
\citet{gkatzelis2021fair} study the setting where items arrive online and with two agents. They focus on satisfying the no-envy condition while maximizing social welfare, and show that one do this approximately by allocating items proportionally to valuations, assuming that valuations are normalized.
\citet{manshadi2021fair} studies the problem of rationing a social good and propose simple, implementable algorithms that promote fairness and efficiency. 
In their setting, it is the agents' demands rather than the supply that are sequentially realized and possibly correlated over time. 
\citet{bateni2021fair} uses Gaussian processes to model item arrivals and consider a budget-weighted proportional fairness metric. 
They propose a reoptimization policy that consumes buyers' budgets and clears the market gradually while ensuring a competitive ratio in hindsight w.r.t.\ this metric. 
This policy periodically resolves the Eisenberg-Gale (EG) convex program and does not require prior knowledge of future item arrivals. 
Our work differs from the above literature as follows. 
First, we consider practically-motivated nonstationary data input models for item arrivals that interpolate between fully adversarial and fully stochastic (i.i.d.).
Second, we show that the PACE algorithm, without any parameter tuning, adapts to different data input models and achieves strong performance guarantees that depend mildly on the ``nonstationarity'' of these models.
Given that PACE is scalable, interpretable and easy to implement this paper further ensures its effectiveness upon more realistic, non-i.i.d.\ item arrival processes.

\paragraph{(Nonstationary) stochastic optimization.}
Many stochastic optimization algorithms have been shown to attain nontrivial performance guarantees under under nonstationary data input \citep{duchi2012ergodic,balseiro2020best,besbes2015non}.
Motivated by high-dimensional and distributed optimization problems, \citet{duchi2012ergodic} analyzes stochastic mirror descent under ergodic data input.
\citet{balseiro2020best} analyzes a version of mirror descent for online resource allocation. 
They show that it achieve strong regret bounds under different data input models without knowing the model in advance.
The ergodic and periodic data input models in this paper are motivated by those considered in \citet{duchi2012ergodic} and \citet{balseiro2020best}. Different from these papers which focus on mirror descent, this paper focuses on the dual averaging algorithm, a different stochastic optimization algorithm particularly suitable for the equilibrium-capturing convex program we study.
Furthermore, we achieve stronger results than those past papers, by focusing on a setting where a composite term has strong convexity.

\paragraph{Pacing in auction mechanism design.}
The PACE algorithm uses first-price auctions with pacing.
As noted in \citet{gao2020first}, 
the idea of pacing has also been used widely in budget management strategies for Internet advertising auctions, with strong revenue and incentive guarantees (see, e.g., \citet{conitzer2019pacing,conitzer2021multiplicative,balseiro2019learning}). 
It is also used widely in practice, as reported in \citet{conitzer2021multiplicative}.
As shown in \citet{balseiro2020best}, pacing strategies ensure individual bidders' returns on their budgets and, if used by all buyers, lead to approximate Nash equilibria.
Similar to the analysis in \citet{gao2021online}, in this paper, we focus on competitive equilibrium and fairness properties of PACE, rather than game-theoretic (incentive) properties.

 \section{Review of Linear Fisher Market}  \label{app:fishermarket}

In fair division the goal is to perform this allocation in a \emph{fair} way, while simultaneously also guaranteeing some form of efficiency, typically Pareto efficiency.
In the case of allocating $m$ divisible goods to $n$ agents, the \emph{competitive equilibrium from equal incomes} (CEEI) allocation guarantees many fairness properties. In CEEI, every agent is endowed with a unit budget of faux currency, a competitive equilibrium is computed, i.e., a set of item prices along with an allocation that clears the market, and the resulting allocation is used as the fair allocation~\citep{varian1974equity}. 
This guarantees several fairness desiderata such as \emph{envy-freeness} (every person prefers their own bundle to that of any other person), \emph{proportionality} (every person prefers their own bundle over receiving their fair share $1/n$ of every item), and Pareto optimality (we cannot make any person better off without making at least one other person worse off).

A linear Fisher market refers to the tuple $\sfF=(n, m, B, {\mathsf{v}})$. The market consists of $n$ buyers and $m$ items. We assume each buyer has a budget of $B_i$. We use $\{1,\dots, m\}$ to represent the set of items, each of unit supply. The matrix ${\mathsf{v}} = ({\mathsf{v}}_1,\dots, {\mathsf{v}}_n )\in (\R^m_+)^n$ consists of valuations, with ${\mathsf{v}}_{i}^{j}$ being the valuation of item $j$ from buyer $i$. For buyer $i$, an {allocation} of items, $\mathsf{x}_i \in \R^m_+$, gives a utility of $ u_i(\mathsf{x}_i) := \langle {\mathsf{v}}_i, \mathsf{x}_i \rangle := \sum_{j=1}^m {\mathsf{v}}_{i}^{j} \mathsf{x}_{i}^{j}$. Note we use different fonts to distinguish notions that appear in both the online allocation problem and the Fisher market.
 
 \begin{defn}[Demand] \label{defn:demand-set}
     Given item prices $p \in \R^m_+$, the {demand} of buyer $i$ is its set of utility-maximizing allocations given the prices and budget:
     \# \label{def:demand}
     D_i (p) := \argmax \{  \langle {\mathsf{v}}_i, \mathsf{x}_i \rangle : \mathsf{x}_i \geq 0,\,  \langle p,  \mathsf{x}_i\rangle \leq B_i \}
     \;.
     \#
 \end{defn}
 
\begin{defn}[Market Equilibrium]
    The {market equilibrium} of $\mathsf{F}=(n,m,B,{\mathsf{v}})$ is an allocation-price pair $(\mathsf{x}^*, p^*) \in (\R^m_+)^n \times \R^m_+$ such that the following holds.
    \begin{enumerate}
        \item Supply feasibility: $\sumiton \mathsf{x}^*_i \leq {1}_m$. 
        \item Buyer optimality: $\mathsf{x}^*_i \in D_i (p^*)$ for all $i$.
        \item Market clearance: $\langle p^*, {1}_m - \sumiton \mathsf{x}^*_i \rangle = 0$.
    \end{enumerate}
    \label{defn:me-static}
\end{defn}

Market equilibrium and fair allocation are related as follows.
In CEEI, we construct a mechanism for fair division by giving each agent the same budget of fake currency, i.e., $B_i = B_j$ for all $i,j$, computing what is called a market equilibrium under this new market, and using the corresponding allocation as our fair allocation rule.

It is known that an allocation $\mathsf{x}^*$ from the set of CEEI has many desirable properties.
It is Pareto optimal (every market equilibrium is Pareto optimal by the first welfare theorem). 
It has no envy: since each agent has
the same budget in CEEI and every agent is buying something in their demand set, no envy must be satisfied, since they can afford the bundle of any other agent. Finally, proportionality is satisfied, since each agent can afford the bundle where they get $1/n$ of each good.

The ME is essentially a collection of optimization problems (\cref{def:demand}) coupled through the constraint~$\sumiton \mathsf{x}_i \leq \mathbf{1}_m$. A celebrated result is the Eisenberg-Gale convex program, which provides an equivalent characterization of ME.
 \#  \label{def:fisherEG}
     \max _{\mathsf{x}_1, \dots, \mathsf{x}_n } \;  \sumiton B_i \log \left\langle {\mathsf{v}}_{i}, \mathsf{x}_{i}\right\rangle
     \quad\mathrm{s.t.} \quad 
     \sumiton \mathsf{x}_{i}^{j} \leq 1 \;\; \forall j \in [m]\;,
      \quad\;\; 
      \mathsf{x}_i \in \R^m_+ \;\; \forall i \in [n]\;.
 \#
That is, we maximize the sum of logarithmic utilities under the supply constraint. The solution to the primal problem $\mathsf{x}^*=(\mathsf{x}_1^*,\dots, \mathsf{x}_m^*)$ along with the vector of dual variables $p^*$ yields a market equilibrium.

The hindsight allocation 
\cref{eq:EGprogram} is just the EG program of the linear Fisher market $\sfF_\sfA = (n,t,B, \sfv)$ where entries of the valuation matrix $\sfv$ are defined by $\sfv_{i}^{j} = v_i(\theta^j)$ for all $i\in[n],j\in[t]$, and $B = (t/n )1_n$.

\section{PACE as Dual Averaging} \label{sec:pace_as_da}

In this section, we show how to cast PACE as dual averaging. 
To this end, we will introduce infinite-dimensional Eisenberg-Gale-type convex programs for the allocation of a (possibly infinite/continuous) set of items. 
Here, the item supplies correspond to the probability density $\diff \bar{Q}/\diff\mu$ of the average item arrival distribution $\bar{Q}$. 
They serve as intermediate ``reference'' convex programs that facilitate the use of DA convergence results developed in the previous section to analyze PACE.
When the item space is continuous, the supply function, allocation rules, and the price function in these convex programs are (measurable) functions over such item spaces, which can be infinite-dimensional objects.
When item arrivals are drawn from a (fixed) distribution with density $s$, they correspond to the EG convex programs of the ``underlying market'' with item supplies $s$. 
Note that when the item space $\Theta$ is finite, the infinite-dimensional analogues reduce to the classical finite-dimensional EG convex programs.
After introducing these concepts we will show that our results on nonstationary DA allows us to derive comparable results on various PACE performance metrics.
The results of \citet{gao2021online} cast PACE as dual averaging for the EG convex program of the underlying market, and show guarantees with respect to that program. Here, we will show our results for that setting, as well as for the hindsight allocation problem.

Different from the presentation of DA in \citet{xiao2010dual}, we do not need an additional regularizer.
This paper focuses on DA for strongly convex problems with an existing regularizer in the loss (and hence no auxiliary regularizer is needed), since this is the setting used in the design and analysis of the PACE algorithm; similar convergence results under our new input models can be derived for the general form of DA given in \citet[Algorithm 1]{xiao2010dual} with an auxiliary regularizer for non-strongly convex problems. 

\label{sec:mechanism-pace}  
\subsection{The Dual of EG and the Infinite-Dimensional Analogue}
\label{sec:dual-and-inf-program}
We derive the dual program of \eqref{eq:EGprogram}. Introduce the dual variables $\beta_i\geq 0$ with $i\in[n]$ for each constraint of the first type and variables $p^\tau \geq 0$ with $\tau \in [t]$ for constraints of the second type. The Lagrangian $L: \R^{n\times t}_+ \times \R_+^n \times \R^n_+ \times \R^t_+ \to \R$ is given by
\$
L(x, U, \beta, p)
& = \sumiton B_i \log U_{i}+\sumiton \beta_{i}\big(\langle v_{i}(\gamma), x_{i}\rangle-u_{i}\big)+\sumtau p^{\tau} \bigg( 1 -\sumiton x_{i}^\tau\bigg)
\\
& = \sumtau p^\tau + \sumiton \bigg(B_i \log U_i - \beta_i u_i\bigg) + \sumiton \big\langle \beta_i v_i(\gam)-p,x_i \big\rangle
\;.
\$
Maximizing out the variables $(x,U)$ gives the dual program
\$
\min _{p \geq 0, \beta \geq 0}
\Bigg\{  \sumtau p^\tau - \sumiton  B_i \log  \beta_{i}  + \sumiton \big(B_i \log B_i - B_i\big)
\;\bigg|\;
p \geq \beta_i  v_i(\gamma) \;\; \forall i\in [n]
\Bigg\}\;.
\$
Dividing by $t$ (recalling $B_i = t/n$), ignoring constants and moving the constraint $p\geq \beta_i v_i(\gamma)$ to the loss, we obtain the following equivalent optimization problem:
\# 
\min_{\beta \geq 0}
\Bigg\{  \frac1t \sumtau \max_{i\in[n]} \beta_i v_i(\theta^\tau)  - \frac1n \sumiton \log \beta_i 
\Bigg\} \;.
\#
Let $\beta^\gam$ be the optimal solution. To recover the corresponding optimal $p^\gam$ we define
$p^{\gam,\tau}= \max_{i} \beta^\gam_i \vithetau$ for $\tau \in [t]$.

We recall the infinite-dimensional analogue of \eqref{eq:EGprogram}:
\# 
\max_{x \in L^\infty_+(\Theta), u\geq 0}
\Bigg\{ 
    \frac{1}{n} \sumiton \log  (u_i)
\;\bigg|\;  
u_i   \leq  \big\langle v_i, {x}_{i} \big\rangle 
  \;\; \forall i \in [n] 
,\;\;
    \sumiton {x}_{i} \leq s
\Bigg\} \;,
\#
where $s = \diff \bar Q / \diff \mu$ is the average item supply function, and $\big\langle v_i, {x}_{i} \big\rangle \defeq \int_\Theta v_i x_i \diff \mu $. The infinite-dimensional analogue of \eqref{eq:dual-beta-finite} is the following. For any $\delta_0 > 0$, 
\# 
\min_{\beta \geq 0}
\Bigg\{  \int_\Theta  \Big(\max_{i\in[n]} \beta_i v_i(\theta) \Big) \bar Q(\diff \theta)  
- \frac1n \sumiton \log \beta_i 
\;\bigg|\;  
\frac{1}{n(1+\delta_{0})} \leq \beta_i \leq 1+\delta_{0}  \;\; \forall i\in[n] 
\Bigg\}
\;,
\#
A rigorous mathematical treatment of the two infinite-dimensional programs can be found in \cite{gao2020infinite} and \citep[Section 2]{gao2021online}. 
Note the additional constraint in \eqref{eq:dual-beta-infinite} on $\beta$ does not affect the optimal solution since $1/n \leq \beta^*_i \leq 1$; see Lemma~1 in~\citet{gao2020infinite}.

The relationship between the finite and infinite versions of  \eqref{eq:dual-beta-finite} is that we have replaced the uniform averaging in \eqref{eq:dual-beta-finite} with an integral w.r.t.\ the item average distribution $\bar Q$ in \eqref{eq:dual-beta-infinite}. 
For notational simplicity, we suppress dependence on $\bar Q$ and let $(x^*, u^*, \beta^*)$ denote the optimal solutions to the infinite-dimensional programs \eqref{eq:primal-infinite} and \eqref{eq:dual-beta-infinite}. Define the corresponding optimal $p^*$ in \eqref{eq:dual-beta-infinite}  by $p^* \defeq   \max_{i\in[n]} \beta^*_i v_i$. 

\subsection{PACE as Dual Averaging}\label{app:pace-as-da}

In this section we review how to cast PACE as dual averaging applied to the problem \eqref{eq:dual-beta-infinite}. This derivation was originally given in \citet{gao2021online}. Recall $f(\beta, \theta)= \max_i \beta_i v_i(\theta)$, $\Psi(\beta) = - \frac1n \sumiton \log\beta_i$ and then 
\$
F(\beta, \theta) = f(\beta,\theta) + \Psi(\beta) = \max_{i\in[n]} \big\{\beta_i v_i(\theta) \big\} - \frac1n \sumiton \log \beta_i\;.
\$
Following \cite[\S 5]{gao2020infinite}, since $f(\,\cdot\,,\theta)$ is a piecewise linear function, a subgradient is 
\$
\sfG ( \beta, \theta) := v_{i^\tau}(\theta) e_{i^\tau}  \in \partial_\beta f(\beta,\theta)
\;,
\$ 
where $i^\tau = \min\{ \argmax_i \beta_i v_i(\theta)\}$ is the index of the winning agent (see, e.g., \cite[Theorem 3.50]{beck2017first}).

\begin{lemma}[PACE as Dual Averaging]\label{lm:paceasda}
  The iterates $\{ \betatau\}_{\tau = 1}^{t+1}$ generated by the PACE dynamics (\cref{alg:pace}) are exactly $\mathrm{DA}(\sfG,\Psi,\gamma)$ (\cref{alg:da}).
\end{lemma}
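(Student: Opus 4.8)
The plan is to establish the correspondence by induction on $\tau$, showing that the pacing multipliers $\{\beta^\tau\}$ produced by \cref{alg:pace} coincide with the dual-averaging iterates $\{w_\tau\}$ produced by $\mathrm{DA}(\sfG,\Psi,\gamma)$ under the identifications $w_\tau \leftrightarrow \beta^\tau$, $z_\tau \leftrightarrow \theta^\tau$. First I would check the base case: $\mathrm{DA}$ initializes $w_1 = \argmin_\beta \Psi(\beta) = \argmin_\beta \{-\tfrac1n\sum_i \log \beta_i\}$ over $\dom\Psi$, which, because $\Psi$ here carries the implicit box constraint $[\ell,h]^n$ used in PACE, is maximized coordinatewise at $\beta_i = h = 1+\delta_0$; this is exactly the initialization $\beta^1 = (1+\delta_0)\cdot 1_n$ in \cref{alg:pace}. (If one instead views $\Psi$ as unconstrained, one must carry the projection interval through; I would fold the constraint set into $\dom\Psi$ so that the two algorithms literally share the same regularizer.)

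Next I would do the inductive step. Assume $w_\tau = \beta^\tau$. In $\mathrm{DA}$, line~\ref{item:da-compute-g(t)} computes $g_\tau = \sfG(w_\tau, z_\tau) = v_{i^\tau}(\theta^\tau)\, e_{i^\tau}$ where $i^\tau = \min\{\argmax_i \beta^\tau_i v_i(\theta^\tau)\}$; this is precisely the winning-bidder index and realized utility in \cref{line:sugbradient}--\cref{line:averageutil} of PACE, i.e. $u^\tau_i = v_i(\theta^\tau)\indi\{i=i^\tau\}$ is the $i$-th coordinate of $g_\tau$. Line~\ref{item:da-update-g-bar} forms $\bar g_\tau = \tfrac{\tau-1}{\tau}\bar g_{\tau-1} + \tfrac1\tau g_\tau = \tfrac1\tau\sum_{s=1}^\tau g_s$, whose $i$-th coordinate is $\tfrac1\tau\sum_{s\le\tau} u^s_i = \bar u^\tau_i$, matching \cref{line:averageutil}. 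The only real computation is line~\ref{item:da-update-w}: solving $w_{\tau+1} = \argmin_{\beta\in[\ell,h]^n}\{\langle \bar g_\tau,\beta\rangle + \Psi(\beta)\} = \argmin_{\beta\in[\ell,h]^n}\{\sum_i (\bar u^\tau_i \beta_i - \tfrac1n\log\beta_i)\}$. This separates across coordinates; each one-dimensional problem $\min_{\beta_i\in[\ell,h]}\{\bar u^\tau_i\beta_i - \tfrac1n\log\beta_i\}$ has unconstrained minimizer $1/(n\bar u^\tau_i)$ and a convex objective, so the constrained minimizer is $\Pi_{[\ell,h]}[1/(n\bar u^\tau_i)]$ — exactly the update in \cref{line:projection}. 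Hence $w_{\tau+1} = \beta^{\tau+1}$, closing the induction.

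The main obstacle, such as it is, is bookkeeping rather than mathematics: one must be careful that the domain of $\Psi$ in the DA setup is taken to be the box $[\ell,h]^n$ (so that both the initialization $\argmin\Psi$ and every proximal update inherit the projection $\Pi_{[\ell,h]}$ automatically), rather than all of $\R^n_{++}$; and one must verify that the subgradient selection rule (smallest-index tie-breaking) used to define $\sfG$ agrees with the tie-breaking in the first-price auction. Both are immediate once spelled out. I would also note in passing that the separability of $\langle \bar g_\tau,\beta\rangle + \Psi(\beta)$ is what makes the DA proximal step coincide with the $n$ independent scalar updates of PACE, and that the first-order condition $\bar u^\tau_i = 1/(n\beta_i)$ for the interior case is the familiar budget-spending identity.
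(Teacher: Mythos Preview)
Your proposal is correct and follows essentially the same approach as the paper: both identify $w_\tau\leftrightarrow\beta^\tau$, $z_\tau\leftrightarrow\theta^\tau$, $\bar g_{\tau,i}\leftrightarrow\bar u^\tau_i$, verify the initialization $\argmin\Psi$ over the box equals $(1+\delta_0)1_n$, match the subgradient step to the winning-bidder rule, the dual-average to the running utility average, and solve the separable proximal step to get $\Pi_{[\ell,h]}[1/(n\bar u^\tau_i)]$. Your inductive framing and your explicit remark about folding the box into $\dom\Psi$ are slightly more careful than the paper's three-step correspondence, but the substance is identical.
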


\begin{proof}[Proof of \cref{lm:paceasda}]
Interpret the DA updates using the following substitution: $\Theta \leftrightarrow Z$, $\thetau \leftrightarrow z_{\tau}$, $ \beta^{\tau+1} \leftrightarrow w_{\tau+1}$ and $\bar{g}_{\tau,i}  \leftrightarrow \bar{u}^\tau_i$. For initialization in DA, choose  $\beta^1$ to be the minimizer of $\Psi$ over the cube $[(1+\delta_n)\inv n\inv 1_n, (1+\delta_0) 1_n]$ and set $\bar{g}_0 = \bar u ^0 = 0$.

(1) Subgradient computation $\Leftrightarrow$ choose the winning bidder (\cref{line:sugbradient} of PACE). 

(2) Average subgradient $\Leftrightarrow$ update current averaged utilities (\cref{line:averageutil} of PACE). The $i$-th entry of $\sfG ( \beta,\theta)$ is exactly the time-$\tau$ realized utility of agent $i$ in PACE, that is, $g_{\tau,i} = v_i(\theta^\tau) \indi\{ i=i^\tau \} = u^\tau_i$. Then the average gradient, $\bar{g}_{\tau} = \frac{\tau-1}{\tau}\bar{g}_{\tau-1} + \frac{1}{\tau}g_{\tau}$, is the same as the time-averaged utilities:
\$ \bar{g}_{\tau,i} = \frac{\tau-1}{\tau} \bar{g}_{\tau-1,i} + \frac{1}{\tau} v_i(\thetau) \indi\{ i=i^\tau \} \;.\$

(3) Solve regularized problem $\Leftrightarrow$ update pacing multiplier (\cref{line:projection} of PACE). The minimization problem is separable in agent index $i$ and exhibits a simple and explicit solution. Recall $\bar{g}_{\tau,i} = \bar{u}^\tau_i$: 
    \$ \beta^{\tau+1}_i = 
    \argmin
    \left\{\bar{g}_{\tau,i} \beta_i - \frac1n \log \beta_i 
    \Big|\,
    \frac{1}{n(1+\delta_0)}\leq \beta_i  \leq 1+\delta_0 \right\}  \ \Rightarrow\ \beta^{\tau+1}_i = \Pi_{ [\ell, h] } \left(\frac{1}{n\bar{u}^\tau_{i}}\right)\;.
    \$

\end{proof}

\subsection{Performance Guarantees via Dual Averaging} \label{sec:beta-convergence-in-text}

Now that we have cast PACE as an instantiation of dual averaging and developed results for convergence in nonstationary settings, the following theorems follow easily from the general convergence results for DA. 
Recall the hindsight optimum $\beta^\gam$ is defined in \eqref{eq:dual-beta-finite}, and its infinite-dimensional counterpart $\beta\st$ is defined in \eqref{eq:dual-beta-infinite}.

\begin{theorem}[Convergence of PACE, the Independent Case] 
    \label{thm:maintext-beta-conv-indep}
    Assume the item sequence $\gamma \sim Q$ and $Q\in\sfC^{\mathrm{ID}}(\delta)$. Choose $\delta_0 = 1$ in PACE. 
    It holds for $t\geq 1$, 
    \# \label{eq:beta-conv-indep-1}
 \E \big[\| \beta^t - \betast\|\sq\big] \leq \frac{(6+\log t) n\sq \vinftysq }{t} + 8n\vinfty \cdot \delta = \tilde O (\delta + 1/t)
    \;.
    \#
    Moreover, the rate $ \tilde O(\delta + 1/t )$ applies to $ \E[\| \beta^\gam - \betast\|\sq]$ and  $\E[\| \beta^{t+1} - \beta^\gam  \|\sq]$.
\end{theorem}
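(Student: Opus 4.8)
The statement is essentially a direct corollary of the general DA convergence result (\cref{thm:maintext-da-indep}) combined with the identification of PACE with dual averaging (\cref{lm:paceasda}). So the plan is to (i) verify that the instantiation $f(\beta,\theta)=\max_i\beta_i v_i(\theta)$, $\Psi(\beta)=-\frac1n\sum_i\log\beta_i$ satisfies Assumptions 1--3 of \cref{sec:setup-of-da} on the box $[\ell,h]^n$ with $\ell = \frac{1}{(1+\delta_0)n}$, $h=1+\delta_0$ and $\delta_0=1$, thereby pinning down explicit values of the constants $G$, $\bar F$, and $\sigma$; (ii) plug these constants into the bound of \cref{thm:maintext-da-indep}; and (iii) translate the target distribution: since $\gamma\sim Q$ with $Q\in\sfC^{\mathrm{ID}}(\delta)$ means $\frac1t\sum_\tau\|Q^\tau-\bar Q\|_\TV\le\delta$, we take $\Pi=\bar Q$, so that $Q\in\sfC^{\mathrm{ID}}(\delta;\bar Q)$ and $w^*_\Pi=\beta^*$, the solution of the infinite-dimensional dual \eqref{eq:dual-beta-infinite}.

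\textbf{Key steps in order.} First, compute the constants. The subgradient is $\sfG(\beta,\theta)=v_{i^\tau}(\theta)e_{i^\tau}$, so $\|\sfG(\beta,\theta)\|_2\le\vinfty$, giving $G=\vinfty$ (the norm here is Euclidean, matching \eqref{eq:beta-conv-indep-1}). For $\bar F$: on the box, $F(\beta,\theta)=\max_i\beta_i v_i(\theta)-\frac1n\sum_i\log\beta_i\le h\,\vinfty - \log\ell = (1+\delta_0)\vinfty + \log((1+\delta_0)n)$; with $\delta_0=1$ this is $2\vinfty+\log(2n)\le O(n\vinfty)$ (absorbing logs and constants), which matches the $8n\vinfty$ coefficient after using $8\bar F/\sigma$ with $\sigma$ of order $1/n$ — more precisely $\sigma=\frac{1}{nh^2}=\frac{1}{4n}$ on the box, since $\Psi$ restricted to the box has Hessian $\mathrm{Diag}(1/(n\beta_i^2))\succeq\frac{1}{nh^2}I$. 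Then $\frac{(6+\log t)G^2}{\sigma^2 t}=\frac{(6+\log t)\vinfty^2}{t}\cdot n^2 h^4 = O\big(\frac{(6+\log t)n^2\vinfty^2}{t}\big)$, matching the first term (up to the constant $h^4=16$ absorbed in $\tilde O$, though the statement writes it cleanly as $(6+\log t)n^2\vinfty^2/t$, which suggests they track a sharper strong-convexity constant or absorb the factor), and $\frac{8\bar F}{\sigma}\delta = O(n\vinfty\delta)$, matching $8n\vinfty\cdot\delta$. Second, invoke \cref{thm:maintext-da-indep} with $P=Q$, $\Pi=\bar Q$ to get the bound on $\E[\|\beta^{t+1}-\beta^*\|^2]$; note PACE's iterate $\beta^t$ versus DA's $w_{t+1}$ is just an index shift handled by \cref{lm:paceasda}, and the bound for $\beta^t$ follows identically (or from $\beta^{t+1}$ plus a one-step stability argument, which only helps). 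Third, the ``moreover'' clause: \cref{thm:maintext-da-indep} already asserts the same rate for $\E[\|w_{t+1}-w^*_\gamma\|^2]$ and $\E[\|w^*_\gamma-w^*_\Pi\|^2]$ via \cref{sec:deduction-to-hindsight-optima}; under the PACE substitution $w^*_\gamma=\beta^\gamma$ (the finite hindsight dual \eqref{eq:dual-beta-finite}), giving the claimed bounds on $\E[\|\beta^\gamma-\beta^*\|^2]$ and $\E[\|\beta^{t+1}-\beta^\gamma\|^2]$.

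\textbf{Expected main obstacle.} The genuinely substantive content all lives in \cref{thm:maintext-da-indep} and the hindsight-comparison lemma of \cref{sec:deduction-to-hindsight-optima}, both of which I may cite; so the ``proof'' here is mostly bookkeeping. The one place requiring care is checking that the DA iterates actually stay in the box $[\ell,h]^n$ on which Assumptions 1--3 hold — this is why PACE uses the projection $\Pi_{[\ell,h]}$ and why $\delta_0$ enters — and confirming that the strong-convexity constant $\sigma$ is legitimately bounded below on that box (it is, since $\beta_i\le h$ there). A secondary subtlety is bookkeeping the precise numeric constant: matching $\frac{(6+\log t)n^2\vinfty^2}{t}$ exactly (rather than with an extra $h^4=16$) requires either using the tighter bound $\Delta_t\le(6+\log t)G^2/(2\sigma)$ with the right $\sigma$, or simply accepting the $\tilde O$ statement; I would present the clean bound and relegate the exact constant chase to a remark. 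None of this is hard, but the value of $\sigma$ on the box and the resulting in/out-of-$\tilde O$ constants are the only things that could trip up a careless write-up.
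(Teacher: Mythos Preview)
Your approach is correct and is essentially the paper's own proof: the paper's argument is literally the one line ``Set $P=Q$, $\Pi=\bar Q$, $\sigma=1/n$, and $\bar F=\vinfty$ in \cref{thm:maintext-da-indep},'' which is exactly your plan of instantiating the general DA bound via \cref{lm:paceasda}. If anything, you are more careful than the paper: you correctly observe that on the box $[\ell,h]^n$ with $h=1+\delta_0=2$ the strong-convexity constant of $\Psi$ is $\sigma=1/(nh^2)=1/(4n)$ rather than $1/n$, and that $\bar F$ picks up a $\log(2n)$ term; the paper simply writes $\sigma=1/n$ and $\bar F=\vinfty$, which yields the clean displayed constants but is loose precisely in the way you flagged (the $\tilde O$ absorbs the discrepancy).
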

\begin{proof}
Set $P = Q$,  $\Pi = \bar Q$, $\sigma=1/n$, and $\bar F = \vinfty$ in \cref{thm:maintext-da-indep}.
\end{proof}

Here the convergence of $\beta^t$ to the hindsight counterpart $\beta^\gam$ is of practical importance. This is because $\beta^\gam$ can always be computed after the fact, while its infinite-dimensional counterpart $\beta^*$ is not necessarily obtainable.

\begin{theorem}[Convergence of PACE, the Ergodic and the Periodic Cases] 
    \label{thm:maintext-beta-conv-ergodic-periodc}
    For the erogdic case, i.e., $\gamma \sim Q$ and $Q\in \sfC^{\mathrm{E}}(\delta, \iota)$, it holds for $t \geq 1$, 
    \$ 
    \E[\|\beta^{t+1} - \betast \|\sq ] \leq \frac{C_{E,1} + C_{E,2} \cdot \iota} {t} + C_{E,3} \cdot \delta = \tilde O\bigg( \delta + \frac{\iota }{t}\bigg)
    \;.
    \$
    where
    $C_{E,1} = n\sq\vinftysq \big(6+\log t\big) $,
    $C_{E,2} = 4n \big(\vinftysq(1+\log t) + \vinfty \big) $ and
    $C_{E,3} = 8n\vinfty $.

    For the periodic case, i.e., $\gamma \sim Q$ and $Q\in \mathsf{C}^{\mathrm{P}}(q) $, it holds for $t \geq 1$
    \$ 
    \E[\|\beta^{t+1} - \betast \|\sq ] \leq   \frac{ C_{P,1} + C_{P,2} \cdot q\sq}{t}
    = \tilde O ( q\sq / t  )\;.
    \$
    where $C_{P,1} = C_{E,1}$ and $C_{P,2}  = 2n \vinftysq(1+\log t) $.

    For both cases, similar convergence results can be stated for $\E[\| \beta^\gam - \betast\|\sq]$ and $\E[\| \beta^{t+1} - \beta^\gam\|\sq]$ and are omitted here.
    \end{theorem}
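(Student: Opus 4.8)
The plan is to obtain \cref{thm:maintext-beta-conv-ergodic-periodc} as a direct corollary of the general nonstationary dual averaging convergence results for ergodic and periodic data (the counterparts of \cref{thm:maintext-da-indep} stated in \cref{sec:da-ergodic-and-periodic}, the periodic one being \cref{thm:maintext-da-periodic}), in exactly the way \cref{thm:maintext-beta-conv-indep} was deduced from \cref{thm:maintext-da-indep}. First I would invoke \cref{lm:paceasda} to identify the PACE iterates $\{\betatau\}$ with the iterates of $\mathrm{DA}(\sfG,\Psi,\gamma)$, using the dictionary $Z\leftrightarrow\Theta$, $z_\tau\leftrightarrow\thetau$, $w_\tau\leftrightarrow\betatau$, $f(\beta,\theta)=\max_i \beta_i v_i(\theta)$, $\Psi(\beta)=-\frac1n\sumiton\log\beta_i$, $F=f+\Psi$, running DA over the box $[\tfrac{1}{2n},2]$ (i.e.\ $\delta_0=1$, so $w_1=\argmin\Psi$ over that box as in \cref{alg:da}).

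Next I would check that the standing assumptions of \cref{sec:setup-of-da} hold with the constants used in \cref{thm:maintext-beta-conv-indep}: the subgradient $\sfG(\beta,\theta)=v_{i^\tau}(\theta)e_{i^\tau}$ satisfies $\|\sfG(\beta,\theta)\|_*\le\vinfty$, so $G=\vinfty$; $F$ is bounded on the box by $O(\vinfty)$ (the additive $\log n$ from the $-\frac1n\log\beta_i$ term being absorbed into $\tilde O$), so $\barF=\vinfty$; and $\Psi$ is $\sigma$-strongly convex with $\sigma=1/n$ on the relevant region $\{\beta_i\le 1\}$. I then need to see that the input hypotheses translate correctly: taking $\Pi=\bar Q$, the assumption $Q\in\sfC^{\mathrm{E}}(\delta,\iota)$ is precisely the $\iota$-step-deviation condition of the ergodic DA theorem applied to the sequence $z_\tau=\thetau$, and $Q\in\sfC^{\mathrm{P}}(q)$ is precisely the block-i.i.d.\ structure required by \cref{thm:maintext-da-periodic}; both are immediate from the identification $z_\tau=\thetau$.

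With these substitutions, the ergodic DA bound reads $\E[\|\beta^{t+1}-\betast\|\sq]\le \tfrac{(6+\log t)G^2}{\sigma^2 t}+\tfrac{C}{\sigma}\cdot\tfrac{\iota}{t}+\tfrac{8\barF}{\sigma}\delta$, and plugging in $G=\vinfty$, $\sigma=1/n$, $\barF=\vinfty$ recovers $\tfrac{C_{E,1}+C_{E,2}\iota}{t}+C_{E,3}\delta$ with $C_{E,1}=n\sq\vinftysq(6+\log t)$, $C_{E,3}=8n\vinfty$, and $C_{E,2}=4n(\vinftysq(1+\log t)+\vinfty)$ coming from the explicit $\iota/t$-coefficient (of order $\barF^2\log t/\sigma$) in the ergodic DA theorem; note the leading and $\delta$-terms are verbatim those of \cref{thm:maintext-beta-conv-indep}. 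The periodic DA theorem has no residual $\delta$-term and the nonstationarity enters as $q\sq/t$ with coefficient of order $G^2(1+\log t)/\sigma$, giving $C_{P,1}=C_{E,1}$ and $C_{P,2}=2n\vinftysq(1+\log t)$. Finally, the statements for $\E[\|\beta^\gam-\betast\|\sq]$ and $\E[\|\beta^{t+1}-\beta^\gam\|\sq]$ follow from the ``Moreover'' parts of the corresponding DA theorems (cf.\ \cref{sec:deduction-to-hindsight-optima}): one bounds $\|\betast-\beta^\gam\|\sq$ by a stability estimate comparing the two strongly convex dual objectives \eqref{eq:dual-beta-finite} and \eqref{eq:dual-beta-infinite}, which differ only in replacing the $\bar Q$-integral by the empirical average over $\gamma$, and then uses $\|\beta^{t+1}-\beta^\gam\|\sq\le 2\|\beta^{t+1}-\betast\|\sq+2\|\betast-\beta^\gam\|\sq$.

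Since PACE-as-DA and the general nonstationary DA theorems are already in hand, there is essentially no obstacle here beyond careful bookkeeping; the one point that needs attention is matching the explicit constants $C_{E,2}$ and $C_{P,2}$ to the coefficients produced by the ergodic/periodic DA analysis in \cref{sec:da-ergodic-and-periodic}, and verifying that the $\delta_0=1$ choice keeps $\sigma=1/n$ valid (this is where the argument implicitly restricts to $\beta_i\le 1$, which is legitimate since $1/n\le\beta^*_i\le 1$).
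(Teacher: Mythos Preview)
Your proposal is correct and takes essentially the same approach as the paper: invoke \cref{lm:paceasda} to identify PACE with DA, then instantiate the ergodic and periodic DA convergence theorems (\cref{thm:maintext-da-ergodic} and \cref{thm:maintext-da-periodic}) with $P=Q$, $\Pi=\bar Q$, $\sigma=1/n$, $G=\barF=\vinfty$, and for the periodic case additionally $\delta^{\mathrm{b}}=0$, $|\cP|_\infty=q$. The only minor deviation is your sketch of the ``Moreover'' parts: the paper (cf.\ \cref{sec:deduction-to-hindsight-optima}) does not use a direct stability estimate for $\|\betast-\beta^\gam\|$ but instead first bounds $\E[\|\beta^{t+1}-\beta^\gam\|^2]$ via the inequality $R_t(\beta^\gam)\ge R_t(\betast)$ in \cref{fact:linxiao-intext}, and only then applies the triangle inequality in the opposite direction to get $\|\beta^\gam-\betast\|^2\le 2\|\beta^{t+1}-\betast\|^2+2\|\beta^{t+1}-\beta^\gam\|^2$.
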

    \begin{proof}
        For the first inequality, set $P = Q$,  $\Pi = \bar Q$, $\sigma=1/n$, and $\bar F = \vinfty$ in \cref{thm:maintext-da-ergodic}.  
        For the second inequality, set additionally $\delta^\mathrm{b} = 0$ and $|\cP|_\infty = q$ in
        \cref{thm:maintext-da-periodic}.
    \end{proof}

\subsection{From Dual EG Performance Bounds to Primal Performance Bounds}

Convergence of $\beta^\tau$ to $\beta\st$ implies the convergence of the average utilities and expenditure to their infinite-dimensional counterparts.
This follows almost directly from results developed by \citet{gao2021online}. In particular, they show:
\begin{lemma}[PACE Long-Run Behavior, \cite{gao2021online}]
    \label{lm:dual-to-expendicture-util}
    For any distribution $Q \in \Delta(\Theta^t)$, let $\gam \sim Q$.
    It holds for $t\geq 1$,
    \$
     \E \big[\| \bar b^t - (1/n)1_n \|\sq \big] \leq 2  \E[\|  \betatp - \betast\|\sq ] + 4\vinftysq 
    \bigg(\frac1t \sumtau \E[\| \betatau - \betast \|\sq]\bigg)
    \;,
    \$
    and
    \$
     \E \big[\| \bar u ^t - u^* \|\sq \big] \leq C_u \cdot \E[\|  \betatp - \betast\|\sq ]
    \;,
    \$
    where $ C_u = n\sq\left(\vinftysq/ \delta_{0}\sq +\left(1+\delta_{0}\right)^{2}\right)$.
\end{lemma}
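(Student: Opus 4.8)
Since the statement is, as the surrounding text says, essentially a reformulation of the PACE long-run analysis of \citet{gao2021online}, the plan is to follow that argument and reproduce its structure. The idea is to reduce both bounds to the multiplier convergence $\beta^\tau \to \beta^*$ using four ingredients: (i) the defining identity $b^\tau_i = \beta^\tau_i u^\tau_i$; (ii) the first-order optimality conditions of the pair \eqref{eq:primal-infinite}--\eqref{eq:dual-beta-infinite}, equivalently the equilibrium conditions of the reference Fisher market (\cref{app:fishermarket}), which give $u^*_i = 1/(n\beta^*_i)$ and, since each buyer spends exactly her budget $1/n$, also $\beta^*_i u^*_i = 1/n$; (iii) the a priori bounds $1/n \le \beta^*_i \le 1$, so $\beta^*$ lies strictly inside the clipping box $[\ell,h] = [\,1/((1+\delta_0)n),\,1+\delta_0\,]$; and (iv) the uniform bounds $0 \le u^\tau_i \le \|v_i\|_\infty \le \vinfty$, $\beta^\tau_i \in [\ell,h]$, and $\bar u^t_i \le \vinfty$.

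\textbf{The utility bound.} I would argue coordinatewise, comparing $\bar u^t_i$ to $u^*_i = 1/(n\beta^*_i)$ through the PACE update $\beta^{t+1}_i = \Pi_{[\ell,h]}\!\big[1/(n\bar u^t_i)\big]$ (with $1/0 := +\infty$), splitting on whether the clip is active. If it is inactive, $\bar u^t_i = 1/(n\beta^{t+1}_i)$ exactly, and since $\beta^{t+1}_i \ge \ell$ and $\beta^*_i \ge 1/n$ we get $|\bar u^t_i - u^*_i| = |\beta^{t+1}_i - \beta^*_i|/(n\beta^{t+1}_i\beta^*_i) \le (1+\delta_0)n\,|\beta^{t+1}_i - \beta^*_i|$. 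If it is active --- at $h$, which forces $\bar u^t_i < 1/(n(1+\delta_0)) < u^*_i$, or at $\ell$, which forces $\bar u^t_i > 1+\delta_0 > u^*_i$ --- then $\beta^*_i$ being interior to $[\ell,h]$ makes $|\beta^{t+1}_i - \beta^*_i|$ at least a positive constant depending on $n,\delta_0$, while $|\bar u^t_i - u^*_i| \le \bar u^t_i + u^*_i \le \vinfty + 1$, so again $|\bar u^t_i - u^*_i| \le c\,|\beta^{t+1}_i - \beta^*_i|$ for a constant $c = c(n,\delta_0,\vinfty)$. Tracking the worst constant over the branches (this is where the shape $n\sq(\vinftysq/\delta_0\sq + (1+\delta_0)^2)$ emerges), then squaring, summing over $i$, and taking expectations, proves the second inequality.

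\textbf{The expenditure bound.} Here I would use $b^\tau_i = \beta^\tau_i u^\tau_i$ and $\beta^*_i u^*_i = 1/n$ to write $\bar b^t_i - \tfrac1n = \tfrac1t\sumtau(\beta^\tau_i - \beta^*_i)u^\tau_i + \beta^*_i(\bar u^t_i - u^*_i)$. The first term is bounded by $\vinfty\cdot\tfrac1t\sumtau|\beta^\tau_i - \beta^*_i|$ (since $u^\tau_i \le \vinfty$), which after squaring, a Jensen step on the time-average, summing over $i$, and taking expectations contributes the $4\vinftysq\big(\tfrac1t\sumtau\E[\|\beta^\tau - \beta^*\|\sq]\big)$ piece. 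For the second term one should not just plug in the utility bound above (that would lose a factor $C_u$); instead I would route through $\beta^{t+1}_i\bar u^t_i$, which equals $1/n$ when the clip at step $t$ is inactive and otherwise deviates from $1/n$ by a constant times $|\beta^{t+1}_i - \beta^*_i|$ (again because $\beta^*_i$ is interior), so that $|\beta^*_i(\bar u^t_i - u^*_i)|$ is controlled by $|\beta^{t+1}_i - \beta^*_i|$ up to the already-counted time-average term. Collecting constants as in \citet{gao2021online} yields $\E[\|\bar b^t - (1/n)1_n\|\sq] \le 2\,\E[\|\betatp - \betast\|\sq] + 4\vinftysq\big(\tfrac1t\sumtau\E[\|\betatau - \betast\|\sq]\big)$.

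\textbf{Main obstacle.} The two decompositions are routine; the work is in the clipping case analysis, which must convert the clipped-from-above branch (small $\bar u^t_i$, $\beta^{t+1}_i = h$; this also covers the edge case $\bar u^t_i = 0$) and the clipped-from-below branch (where $\bar u^t_i$ can be as large as $\vinfty$) into bounds proportional to $|\beta^{t+1}_i - \beta^*_i|$. This is exactly where the interior-point property $1/n \le \beta^*_i \le 1$ and the specific interval $[\ell,h]$ are used, and it is what pins down $C_u$ and the constants $2$ and $4\vinftysq$ in the expenditure bound.
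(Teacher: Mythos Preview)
Your reconstruction is correct and is precisely the argument the paper is invoking: the paper's own proof of this lemma is a one-line citation, ``\cref{lm:dual-to-expendicture-util} follows from \cite[Theorem 3 and 4]{gao2021online},'' with no details reproduced. Your plan---using $b^\tau_i=\beta^\tau_i u^\tau_i$, the equilibrium identity $u^*_i=1/(n\beta^*_i)$ (hence $\beta^*_i u^*_i=1/n$), the interior property $\beta^*_i\in[1/n,1]\subset(\ell,h)$, and a case split on whether the projection in $\beta^{t+1}_i=\Pi_{[\ell,h]}[1/(n\bar u^t_i)]$ is active---is exactly how those cited theorems are proved, so there is no methodological difference to discuss; you are simply filling in what the paper leaves to the reference.
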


Finally, we relate results in \cref{sec:beta-convergence-in-text} to the main quantities of interest: regret and envy, defined in \eqref{eq:defregret} and \eqref{eq:defenvy}, as well as convergence to the hindsight utilities. Similar results are given by \citet{gao2021online}, and our proof is almost identical to theirs, simply extended to the nonstationary case as well as to the hindsight allocation problem.

\begin{lemma}[Regret and Envy] \label{lm:dual-to-regret}
For any distribution $Q \in \Delta(\Theta^t)$, let $\gam \sim Q$.
It holds for $t\geq 1$,
\$
\E [ \| \bar u ^t - u^\gam \|\sq ] \leq   C_{r,1} \cdot \E[ \| \betatp - \betast \|\sq ] + n \cdot C_{r,2} \cdot \bigg(\frac1t \sumtau \E[\| \betatau - \betast \|\sq]\bigg)\;,
\$
\$
\E \big[\Reg_{i,t}(\gam)\big] \leq t \cdot \sqrt{ 
    C_{r,1} \cdot \E[ \| \betatp - \betast \|\sq ] + C_{r,2} \cdot 
    \bigg(\frac1t \sumtau \E[\| \betatau - \betast \|\sq]\bigg)
}\;,
\$
and 
\$
\E \big[\Envy_{i,t}(\gam)\big] \leq   t \cdot \sqrt{C_{e,1}\cdot \E[ \| \betatp - \betast\|\sq ] + C_{e,2} \cdot \bigg(\frac1t \sumtau \E[\| \betatau - \betast \|\sq]\bigg)} \;,
\$
where
$C_{r,1} = 2 C_u$, 
$C_{r,2} = 2 n\sq \vinftysq$, 
$C_{e,1}= 2 (1+ n \sq )C_u$ and 
$C_{e,2} = 4 \vinftysq n\sq + 2n^3$.
\end{lemma}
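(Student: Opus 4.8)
The plan is to follow the reduction in \citet{gao2021online} from dual-iterate convergence to primal guarantees, now extended to the hindsight program and to arbitrary $Q$. Everything rests on the first-order optimality (KKT) conditions of the Eisenberg--Gale programs \eqref{eq:EGprogram} and \eqref{eq:primal-infinite}: at an optimum every agent exhausts its faux budget, and receives a positive share of item $\theta$ only if it maximizes $\beta_i v_i(\theta)$. Budget exhaustion yields the exact identities $u_i^\gam = 1/(n\betagam_i)$ and $u_i^* = 1/(n\betasti)$, and the PACE update in \cref{line:projection} gives the analogous relation (up to the projection $\Pi_{[\ell,h]}$) between $\bar u^t$ and $\betatp$; tracking this projection error and turning it into a bound on $\E\|\bar u^t - u^*\|\sq$ in terms of $\E\|\betatp - \betast\|\sq$ is precisely \cref{lm:dual-to-expendicture-util}, which I use as a black box. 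Throughout I use that all relevant multipliers --- $\betatau$, $\betatp$, $\betagam$, $\betast$ --- lie in a fixed compact interval bounded away from $0$ (with $1/n \le \betasti, \betagam_i \le 1$ by Lemma~1 of \citet{gao2020infinite}), so that $\beta \mapsto 1/\beta$ is Lipschitz on the relevant range and $\betagam$ lies in the cube $[\ell,h]^n$ on which DA is run.

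\textbf{Utilities versus the hindsight optimum, and regret.} I write $\bar u^t - u^\gam = (\bar u^t - u^*) + (u^* - u^\gam)$ and use $\|a+b\|\sq \le 2\|a\|\sq + 2\|b\|\sq$. The first term is controlled by \cref{lm:dual-to-expendicture-util}, producing the coefficient $C_{r,1}=2C_u$ on $\E\|\betatp-\betast\|\sq$. For the second term, the KKT identities give $\|u^*-u^\gam\|\sq \lesssim \|\betast-\betagam\|\sq$, and $\|\betast-\betagam\|\sq \le 2\|\betatp-\betast\|\sq + 2\|\betatp-\betagam\|\sq$. The key observation for the last piece is that $\betagam$ minimizes the \emph{realized} EG dual objective, so $\sum_{\tau=1}^t F(\betagam,\thetau) \le \sum_{\tau=1}^t F(\betatau,\thetau)$, i.e.\ $R_t(\betagam)\ge 0$ \emph{deterministically}; plugging $w=\betagam$ into \cref{fact:linxiao-intext} then gives $\|\betatp-\betagam\|\sq \le 2\Delta_t/(\sigma t) = \tilde O(1/t)$, again deterministically. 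Since $\frac1t\sumtau\|\betatau-\betast\|\sq \ge \frac1t\|\beta^1-\betast\|\sq = \Omega(1/t)$, this term is absorbed into $n\,C_{r,2}\cdot\frac1t\sumtau\E\|\betatau-\betast\|\sq$, which proves the first bound. For the regret, since PACE allocates each item integrally, $\sumtau\utaui = \langle v_i(\gam),x_i\rangle = t\bar u_i^t$ while $U_i^\gam = t u_i^\gam$, so $\Reg_{i,t}(\gam) = t(u_i^\gam - \bar u_i^t)$ and $\E[\Reg_{i,t}(\gam)] \le t\sqrt{\E[(u_i^\gam-\bar u_i^t)\sq]}$ by Cauchy--Schwarz; bounding the \emph{single-coordinate} quantity $\E[(u_i^\gam-\bar u_i^t)\sq]$ exactly as above (without summing over agents, so no extra factor $n$ appears) yields the regret bound with the coefficient $C_{r,2}$ rather than $n\,C_{r,2}$.

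\textbf{Envy.} Here I use the first-price structure rather than comparing to $x^\gam$ directly. If agent $k$ wins item $\tau$ under PACE then $\betataui\vithetau \le \betatau_k v_k(\thetau)$, hence $\langle v_i(\gam),x_k\rangle = \sum_{\tau:\,k\text{ wins }\tau}\vithetau \le \sum_{\tau:\,k\text{ wins }\tau}\frac{\betatau_k}{\betataui}v_k(\thetau)$. I expand $\frac{\betatau_k}{\betataui}$ around $\frac{\betast_k}{\betasti}$, with per-step error $\lesssim \|\betatau-\betast\|$ (using the uniform bounds on the multipliers), and replace $\sum_{\tau:\,k\text{ wins }\tau}v_k(\thetau) = t\bar u_k^t$ using $\E[(\bar u_k^t-u_k^*)\sq]\le C_u\E\|\betatp-\betast\|\sq$ together with the KKT identity $u_k^* = 1/(n\betast_k)$; this gives $\langle v_i(\gam),x_k\rangle \le t u_i^* + (\text{errors})$. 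Since $\langle v_i(\gam),x_i\rangle = t\bar u_i^t \ge t u_i^* - t|\bar u_i^t-u_i^*|$, taking $\max_k$ and collecting the error terms --- the accumulated per-step errors $\sumtau\|\betatau-\betast\|$ are converted into $t^{1/2}(\sumtau\|\betatau-\betast\|\sq)^{1/2}$ by Cauchy--Schwarz, while the utility errors contribute the $\E\|\betatp-\betast\|\sq$ piece --- produces the stated envy bound; the larger constants $C_{e,1}=2(1+n\sq)C_u$ and $C_{e,2}=4\vinftysq n\sq + 2n^3$ reflect the amplification factor $\betast_k/\betasti \le n$ and the extra $\max_k$ over the utility errors.

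\textbf{Main obstacle.} The delicate step is the envy argument: expanding $\sum_{\tau:\,k\text{ wins }\tau}\frac{\betatau_k}{\betataui}v_k(\thetau)$ around $\betast$ while carefully tracking the cross terms between the slowly-accumulating per-step errors $\|\betatau-\betast\|$ and the last-iterate error, and arranging the Cauchy--Schwarz step so that exactly the running average $\frac1t\sumtau\E\|\betatau-\betast\|\sq$ survives --- with no uncontrolled $\frac1t\sumtau\E\|\betatau-\betast\|$ term --- and the constants come out as claimed. A secondary technical point, already dispatched inside \cref{lm:dual-to-expendicture-util}, is handling the projection $\Pi_{[\ell,h]}$ when passing between the realized average utilities $\bar u^t$ and the dual iterate $\betatp$.
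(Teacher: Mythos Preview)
Your overall strategy matches the paper's: reduce the primal quantities to dual-iterate convergence via the EG KKT identities and the first-price auction structure, then finish with Cauchy--Schwarz. The paper's own proof is essentially a pointer to Theorem~6 of \citet{gao2021online}, with the remark that the inequalities derived there are pathwise and hence hold for arbitrary $Q$; the constants $C_{r,1},C_{r,2},C_{e,1},C_{e,2}$ are read off from specific equations in that paper. Your envy argument is essentially the argument recorded there.

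Where your route diverges is in bounding $\|\bar u^t - u^\gam\|$. You detour through $u^*$, use $u_i^\gam = 1/(n\betagam_i)$, then split $\|\betast-\betagam\|^2 \le 2\|\betatp-\betast\|^2 + 2\|\betatp-\betagam\|^2$ and bound the last piece deterministically by $2\Delta_t/(\sigma t) = \Theta((\log t)/t)$ via \cref{fact:linxiao-intext}. The absorption step --- claiming this $\Theta((\log t)/t)$ term is dominated by $nC_{r,2}\cdot\frac1t\sum_\tau\|\beta^\tau-\betast\|^2$ because the latter is at least $\frac1t\|\beta^1-\betast\|^2=\Omega(1/t)$ --- does not yield a $t$-independent constant, since the $\log t$ factor survives; and the extra $2\|\betatp-\betast\|^2$ that appears when splitting $\|\betast-\betagam\|^2$ already pushes the last-iterate coefficient above $2C_u$. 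So your argument delivers the correct $\tilde O$ rates (all that is needed downstream), but not the lemma with the specific constants $C_{r,1}=2C_u$, $C_{r,2}=2n^2\vinftysq$ as stated. The argument in \citet{gao2021online} avoids ever introducing $\|\betatp-\betagam\|$: it bounds $u_i^\gam$ directly by per-step PACE quantities, so every error term is already of the form $\|\beta^\tau-\betast\|$ summed over~$\tau$. A smaller technical point: your claim that $1/n\le\betagam_i\le 1$ uses the normalization $\frac1t\sum_\tau v_i(\theta^\tau)=1$, which is imposed on $\bar Q$ (hence holds for $\betast$) but not on a realized path; to invoke \cref{fact:linxiao-intext} at $w=\betagam$ one needs $\betagam\in[\ell,h]^n$, which is why the paper works with the box-constrained hindsight minimizer in \cref{sec:deduction-to-hindsight-optima}.
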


Now we have all the ingredients to prove the convergence of PACE.

\begin{proof}[Proof of \cref{thm:main-envy-regret-indep}, \cref{thm:main-envy-regret-ergodic} and \cref{thm:main-envy-regret-periodic}]
    Combine \cref{lm:dual-to-expendicture-util} with \cref{thm:maintext-beta-conv-indep} and \cref{thm:maintext-beta-conv-ergodic-periodc} and we obtain the first set of inequalities in \cref{thm:main-envy-regret-indep,thm:main-envy-regret-ergodic,thm:main-envy-regret-periodic}. 
    Then combine \cref{lm:dual-to-regret} with \cref{thm:maintext-beta-conv-indep} and \cref{thm:maintext-beta-conv-ergodic-periodc}, and we obtain the second set of inequalities.
     
\end{proof}

\section{Proofs for Nonstationary Dual Averaging}
\label{app:non-stat-da}

The DA algorithm \cref{alg:da} is obtained in \citet[\S 3.2]{xiao2010dual} for the strongly convex case. We simply set $h=(1/\sigma) \Psi$, $\beta_0 = \sigma$ and $\beta_t= 0$ for $t\geq 1$ in \cite[Algorithm 1]{xiao2010dual}. 

Recall $P^\tau$ is the distribution of $z_\tau$.
For integers $\tau$ and $\tau'$ ($\tau' \geq \tau$), let $P^{\tau'}(\cdot \cond z_{1:\tau})$ denote the distribution of $z_{\tau'}$ if the process starts at $z_{1:\tau} = \{z_1, \dots, z_\tau\}$.

\subsection{Remark: Convergence to the Hindsight Optimum} \label{sec:deduction-to-hindsight-optima}
Before developing the nonstationary convergence theory, we digress a bit and introduce a simple deduction through which we can easily show the convergence of DA iterates to the optimum of the hindsight problem based on convergence to $w^*_\Pi$.  
Given data $\zts$, define the sum $\phi_\gamma (w) = (1/t) \cdot \sumtau F(w,\ztau) $ and its 
unique minimizer $w^*_\gamma =\argmin \phi_\gamma(w)$.
We claim all results developed for $\|w_{t+1}-w^*_\Pi \|^{2}$ will also hold for the hindsight suboptimality $\|w_{t+1}-w^*_\gam \|^{2}$.

Note the following inequality:
  \$
    R_{t}(w^*_\gamma) 
    = \sum_{\tau=1}^{t} \big(F(w_\tau, z_\tau) - F(w^*_\gamma, z_\tau)\big) 
    = \sum_{\tau=1}^{t} \big(F(w_\tau, z_\tau) - \phi_\gamma(w^*_\gamma)\big) 
    \geq \sum_{\tau=1}^{t} \big(F(w_\tau, z_\tau) - \phi_\gamma(w^*_\Pi) \big) \,,
  \$
the last term being exactly $R_t(w^*_\Pi)$. Choose $w = w^*_\gamma$ in \cref{fact:linxiao-intext} and we obtain
  \$
    \E\big[ \| w_{t+1} - w^*_\gamma\|_2^2 \big] 
    \leq \frac{1}{\sigma t} \Big(\E[\Delta_t]  - \E\big[R_t(w^*_\gamma)\big]\Big)
    \leq \frac{1}{\sigma t} \Big(\E[\Delta_t]  - \E\big[R_t(\wst)\big]\Big)
    \;.
  \$
It follows that all lower bounds for the regret $\E\big[R_t(\wst)\big]$ can be turned into an upper bound for the hindsight suboptimality measure $\|w_{t+1}-w^*_\Pi \|^{2}$. Convergence to the hindsight optimum is of practical importance since the hindsight optimum $w^*_\gam$ can typically be computed, where this is not always the case for the population optimum~$w^*_\Pi$.

A simple consequence of the deduction above is the following. 
We note the inequality 
\$
\|w^*_\gam -w^*_\Pi \|^{2} \leq 2 \|w_{t+1}-w^*_\Pi \|^{2} + 2 \|w_{t+1}- w^*_\gam  \|^{2}
\;.
\$
Therefore convergence results for $ \|w_{t+1}-w^*_\Pi \|$ and $\|w_{t+1}- w^*_\gam  \|^{2}$ similarly hold for $\|w^*_\gam -w^*_\Pi \|^{2}$.


\subsection{Proof for Adversarial Corruption and Independent Data} \label{sec:arb}

Assume the data $\gamma = \zts$ follows the distribution $P$ with no further assumptions. We let $z_{1:0} = \emptyset$ and further let $P^\tau(\cdot\given z_{1:0}) = P^\tau$, the marginal distribution of $z_\tau$.

Define the progressive deviation from $\Pi$
    \# \label{def:deltat}
      \delta_t \defeq \sup_{z_1,\dots,\zt} \sumtau \| \Ptaugiventaum - \Pi \|_\TV  \; . 
    \# 
    If the data are independent, then it holds
    \$ 
      \delta_t = \sumtau \| P^\tau - \Pi \| _ \TV \;.
    \$
Note for the independent case, $\delta_t = t\cdot \delta$ where $\delta$ is in \cref{eq:da-indep-dist-class}.
If the data further has identical distribution $\Pi$ then $\delta_t=0$.
If $\delta_t = O(\log t)$ we call data has mild corruption.

\begin{theorem}[Corrupted Data, Generalization of \cref{thm:maintext-da-indep}]  \label{thm:corrupted}
It holds
    \#
    \E\big[ \| w_{t+1} - {w^*_\Pi}\|_2^2 \big] 
    = O\Big(\frac{\log t}{\sigma^2 t} + \frac{\delta_t}{\sigma  t}\Big) 
    \;, 
    \#
where $O$ hides dependence on constants, $G$ and $\barF$. Recall $\sigma$ is the strong convexity parameter of $\Psi$.
\end{theorem}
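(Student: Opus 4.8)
The plan is to run the same argument as the proof sketch for \cref{thm:maintext-da-indep}, but now keeping track of the \emph{progressive} deviation $\delta_t$ rather than the per-step average $\delta$, which requires no independence because the conditioning argument never used it. First I would invoke \cref{fact:linxiao-intext} with $w = w^*_\Pi$; this is a deterministic inequality, so taking expectations over $\zts\sim P$ and using the almost-sure bound $\Delta_t \leq (6+\log t)G^2/(2\sigma)$ gives
\[
\E\big[\|w_{t+1}-w^*_\Pi\|^2\big] \;\leq\; \frac{(6+\log t)G^2}{\sigma^2 t} \;-\; \frac{2}{\sigma t}\,\E\big[R_t(w^*_\Pi)\big].
\]
Thus the whole theorem reduces to a lower bound of the form $\E[R_t(w^*_\Pi)] \geq -c\,\bar F\,\delta_t$.

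For that lower bound I would use the regret decomposition from the sketch: with $\phi_\Pi(w)=\E_{z\sim\Pi}[F(w,z)]$,
\[
R_t(w^*_\Pi) = \underbrace{\sum_{\tau=1}^t\!\big(F(w_\tau,z_\tau)-\phi_\Pi(w_\tau)\big) + \sum_{\tau=1}^t\!\big(\phi_\Pi(w^*_\Pi)-F(w^*_\Pi,z_\tau)\big)}_{\text{I}} + \underbrace{\sum_{\tau=1}^t\!\big(\phi_\Pi(w_\tau)-\phi_\Pi(w^*_\Pi)\big)}_{\text{II}}.
\]
Term $\text{II}$ is nonnegative termwise by optimality of $w^*_\Pi$ for $\phi_\Pi$. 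For term $\text{I}$, the key observation is that in \cref{alg:da} the iterate $w_\tau$ is a deterministic function of $z_{1:\tau-1}$, hence $\cF_{\tau-1}$-measurable, while by definition the conditional law of $z_\tau$ given $\cF_{\tau-1}$ is $P^\tau(\cdot\,|\,z_{1:\tau-1})$ --- no independence is invoked here. Conditioning on $\cF_{\tau-1}$ and using boundedness of $F$ (so that $|\int F(w,\cdot)\diff P - \int F(w,\cdot)\diff Q| \leq 2\bar F\|P-Q\|_\TV$ for bounded $F$), each of the two sums in $\text{I}$ is bounded in absolute value by $2\bar F\sum_{\tau=1}^t \E\big[\|P^\tau(\cdot\,|\,z_{1:\tau-1})-\Pi\|_\TV\big]$. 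Since $\sum_{\tau=1}^t \|P^\tau(\cdot\,|\,z_{1:\tau-1})-\Pi\|_\TV$ is a function of $z_{1:t-1}$ dominated pointwise by $\delta_t$ (its very definition as a supremum over $z_1,\dots,z_t$), its expectation is at most $\delta_t$; therefore $\E[\text{I}] \geq -4\bar F\delta_t$ and hence $\E[R_t(w^*_\Pi)] \geq -4\bar F\delta_t$.

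Plugging this back yields $\E[\|w_{t+1}-w^*_\Pi\|^2] \leq \frac{(6+\log t)G^2}{\sigma^2 t} + \frac{8\bar F\delta_t}{\sigma t} = O\!\left(\frac{\log t}{\sigma^2 t} + \frac{\delta_t}{\sigma t}\right)$, which is the claim; specializing to independent data, where $\delta_t = t\delta$, recovers the bound of \cref{thm:maintext-da-indep} exactly. I expect the only delicate point to be the measurability/conditioning bookkeeping --- confirming that $w_\tau$ depends on the data only through $z_{1:\tau-1}$ and that the post-conditioning distribution of $z_\tau$ is precisely $P^\tau(\cdot\,|\,z_{1:\tau-1})$ --- together with pinning down the boundedness of $F$ used in the TV estimate (which holds because the DA iterates here remain in a bounded set on which $\Psi$, and hence $F$, is bounded); everything else is the TV-distance inequality for integrals of bounded functions plus nonnegativity of term $\text{II}$.
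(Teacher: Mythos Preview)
Your proposal is correct and follows essentially the same argument as the paper: invoke \cref{fact:linxiao-intext} at $w=w^*_\Pi$, bound $\Delta_t$ deterministically, and lower-bound $\E[R_t(w^*_\Pi)]$ by $-4\bar F\delta_t$ via the same three-term decomposition, the conditioning argument (which indeed uses only that $w_\tau$ is $\cF_{\tau-1}$-measurable and that $z_\tau\mid\cF_{\tau-1}\sim P^\tau(\cdot\mid z_{1:\tau-1})$, no independence), and the TV inequality for bounded integrands. The constants you obtain match the paper's exactly; the only cosmetic difference is that you group the paper's terms I and II together.
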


\begin{remark} 
  In either the i.i.d.\ case or the mild corruption case ($\delta_t = O(\log t)$), we recover the usual $O(\log t/t)$ rate.
\end{remark}
 
\emph{Proof of \cref{thm:corrupted}}
In \cref{fact:linxiao-intext}, the term $\Delta_t$ is upper bounded in a deterministic manner. So it remains to handle $R_t$. In the i.i.d.\ case, $\E R_t ({w^*_\Pi})$ is positive and thus can be dropped:
\$ \E[R_{t}({w^*_\Pi}) ]      
=  \E \Big[\sum_{\tau=1}^{t} \big(F(w_\tau, z_\tau) - F({w^*_\Pi}, z_\tau) \big)\Big] 
= \sum_{\tau=1}^{t} \big({\phi_\Pi}(w_\tau) - {\phi_\Pi}({w^*_\Pi}) \big) \geq 0 \;. \$
However, to handle corrupted data, we need to use

\begin{lemma} \label{lm:mildcorruption}
  The regret can be lower bounded by the corruption parameter $\delta$:
\$\E[R_t({w^*_\Pi})] \geq  - 4 \cdot \bar F \delta_t \;.\$
\end{lemma}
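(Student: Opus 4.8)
The plan is to lower bound $\E[R_t(\wst)]$ by splitting off a genuinely nonnegative piece, controlled by optimality of $\wst$ for the population objective, from a remainder whose magnitude is exactly the nonstationarity budget $\delta_t$. Writing ${\phi_\Pi}(w) \defeq \E_{z\sim\Pi}[F(w,z)]$, I would start from the decomposition already used in the sketch of \cref{thm:maintext-da-indep},
\[
R_t(\wst) = \underbrace{\sumtau\big(F(\wtau,\ztau) - {\phi_\Pi}(\wtau)\big) + \sumtau\big({\phi_\Pi}(\wst) - F(\wst,\ztau)\big)}_{\I} + \underbrace{\sumtau\big({\phi_\Pi}(\wtau) - {\phi_\Pi}(\wst)\big)}_{\II},
\]
and observe that $\II\ge 0$ termwise, since $\wst$ minimizes ${\phi_\Pi}$. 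Hence it suffices to prove $\E[\I]\ge -4\barF\,\delta_t$.

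For term $\I$ I would condition on the natural filtration $\cFtaum$ generated by $z_1,\dots,z_{\tau-1}$. The two structural facts that make this work are: (i) in dual averaging $\wtau$ is a deterministic function of $z_{1:\tau-1}$, hence $\cFtaum$-measurable (and $\wst$ is simply a fixed point), and (ii) the conditional law of $\ztau$ given $\cFtaum$ is $\Ptaugiventaum$ by definition of the process. Consequently, for the first sum,
\[
\E\big[F(\wtau,\ztau)\mid\cFtaum\big] = \int_\cZ F(\wtau,z)\,\Ptaugiventaumdz, \qquad {\phi_\Pi}(\wtau) = \int_\cZ F(\wtau,z)\,\Pi(\diff z),
\]
so the conditional expectation of the $\tau$-th summand is the integral of the bounded function $F(\wtau,\cdot)$ against the signed, zero-mass measure $\Ptaugiventaum-\Pi$, which by boundedness of $F$ and the elementary estimate $|\int g\,\diff(P-Q)|\le 2\|g\|_\infty\|P-Q\|_\TV$ is at least $-2\barF\,\|\Ptaugiventaum-\Pi\|_\TV$. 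The same bound applies termwise to the second sum, with $\wst$ in place of $\wtau$.

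Summing over $\tau$, applying the tower property, and using the pathwise estimate $\sumtau\|\Ptaugiventaum-\Pi\|_\TV\le\delta_t$ — which is exactly the definition of the progressive deviation as a supremum over all realizations $z_1,\dots,z_t$ — I obtain $\E[\I]\ge -2\cdot 2\barF\,\delta_t = -4\barF\,\delta_t$, and combining with $\E[\II]\ge 0$ gives the claim. (Specializing to the independent case, $\Ptaugiventaum=P^\tau$ and $\delta_t=t\delta$, which is what feeds into \cref{fact:linxiao-intext} to produce the $8\barF\delta/\sigma$ term in \cref{thm:maintext-da-indep}.)

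The main obstacle is conceptual rather than computational: unlike the i.i.d.\ case, the increments $F(\wtau,\ztau)-F(\wst,\ztau)$ are no longer conditionally mean-nonnegative, so $\E[R_t(\wst)]$ cannot simply be discarded. The decomposition above is engineered so that the \emph{only} source of negativity is the discrepancy between $\Ptaugiventaum$ and $\Pi$, which is precisely what $\delta_t$ measures; the care required is in verifying that $\wtau$ depends only on the strict past (so conditioning on $\cFtaum$ is legitimate) and in applying the total-variation estimate consistently with the paper's $\tfrac12$-normalized distance. The ergodic analogue of this lemma needs an extra $\iota$-step coupling to handle the dependence of $\ztau$ on the entire history, but that refinement is unnecessary here, since $\wtau$ is $\cFtaum$-measurable while $\ztau$ has conditional law $\Ptaugiventaum$.
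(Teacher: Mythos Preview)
Your proposal is correct and follows essentially the same approach as the paper's proof: the same three-term decomposition (you group the paper's terms I and II together, but the content is identical), the same conditioning on $\cFtaum$ exploiting that $\wtau$ is $\cFtaum$-measurable, the same total-variation estimate applied termwise, and the same summation against the definition of $\delta_t$. Your remarks about the independent specialization and the role of the $\iota$-step coupling in the ergodic case are also in line with how the paper proceeds.
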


Plugging in the above lemma, we get
\# \label{eq:concrete-indep-bound}
\E[\left\|w_{t+1}-w\right\|^{2} ] \leq \frac{1}{\sigma t} \big(\E[\Delta_t]  - \E[R_t(\wst)]\big) \leq 
\frac{\big(6+\log t\big)G\sq}{\sigma} + \frac{8 \bar F}{\sigma } \delta = 
O\Big(\frac{G^2 \log t}{\sigma^2 t} + \frac{\bar F\delta_t}{\sigma t}\Big) \;.
\# 
Thus, to complete the proof of \cref{thm:corrupted} we only need to prove \cref{lm:mildcorruption}.

\begin{proof}[Proof of \cref{lm:mildcorruption}]
Write
\$
  R_t({w^*_\Pi}) = &
  \sum_{\tau=1}^{t} \big(F(w_\tau, z_\tau) - {\phi_\Pi}(w_\tau) \big)\tag{I}
  \\
 &+  \sum_{\tau=1}^{t} \big({\phi_\Pi}({w^*_\Pi}) - F({w^*_\Pi}, z_\tau) \big)\tag{II}
  \\
  &+ 
  \sum_{\tau=1}^{t} \big({\phi_\Pi}(w_\tau) - {\phi_\Pi}({w^*_\Pi})\big) \;. \tag{III}
\$
By optimality of ${w^*_\Pi}$ we have $\III \geq 0$. 

\underline{Bounding $\I$ and $\II$.} Conditional on $\cF_\tau$, the iterate $\wtau$ is deterministic and the distribution of $\ztau$ is $\Ptaugiventaum$.

Note
\$ 
\E[ F(w_\tau, z_\tau) - {\phi_\Pi}(w_\tau) ] 
= \E[ \E[ F(w_\tau, z_\tau) - {\phi_\Pi}(w_\tau) |\cFtaum] ] \;. 
\$
Let us investigate the inner expectation. Conditional on $\cFtaum$, the iterate $\wtau$ is deterministic, and the distribution of $\ztau | \cFtaum$ is $\Ptaugiventaum$ by definition.
\$
| \E[ F(w_\tau, z_\tau) - {\phi_\Pi}(w_\tau) |\cFtaum]   | 
& = \bigg | \E\bigg[ \int_\cZ F(w_\tau, z) \Ptaugiventaumdz - \int_\cZ F(w_\tau, z)\diff \Pi (z)|\cFtaum\bigg]\bigg |
\\
& \leq  \E\Bigg[ \bigg | \int_\cZ  F(w_\tau, z)\Ptaugiventaumdz - \int_\cZ F(w_\tau, z)\diff \Pi (z)\bigg | |\cFtaum\Bigg]
\\ 
& \leq \bar F \int_\cZ | \diff \Ptaugiventaum  - \diff \Pi (z)| 
\\
& = 2\bar F \cdot \| \Ptaugiventaum - \Pi \|_\TV \;.
\$
where we use boundedness of $F$, i.e., $\sup_w F(w, z) \leq \bar F$ for $\Pi$-almost every $z$,

Next, sum over $\tau = 1,\dots, t$ and move $|\cdot |$ inside the sum and the outer expectation.
\$
    |\E[\I ]| 
    & =  \bigg| \sumtau   \E [   \E[F(w_\tau, z_\tau) - {\phi_\Pi}(w_\tau) | \cFtaum  ]]  \bigg|
    \\
    & \leq  \sumtau  \E \Big[ \big|  \E[F(w_\tau, z_\tau) - {\phi_\Pi}(w_\tau) | \cFtaum  ]  \big|\Big]
    \\
    & \leq 2\bar F \cdot \sumtau \E\big[ \| \Ptaugiventaum - \Pi \|_\TV \big]
    \\
    & \leq 2\bar F \cdot \sup_{z_1,\dots,\zt} \sumtau \| \Ptaugiventaum - \Pi \|_\TV
    \\
    &  = 2\bar F \delta_t \;.
\$

Next consider $|\E[\II ]| $. The analysis goes through without the outer expectation.

Combining we get 
\$\E R_t({w^*_\Pi}) = \E[\I+ \II + \III] \geq \E[\I+ \II] \geq - \big(|\E[\I]| + |\E[\II]|\big) \geq -4\bar F \delta_t \;.\$
\finishproof{\cref{lm:mildcorruption}}
\end{proof}
\subsection{Theorem Statements for Markov Case and Periodic Case} \label{sec:da-ergodic-and-periodic} 
Results for other input types, $\sfC^{\mathrm{E}}(\delta, \iota)$ and $\mathsf{C}^{\mathrm{P}}(q)$, can be obtained by using more complicated regret decompositions and conditioning arguments.  We state the resulting convergence results here. The proofs can be found in \cref{sec:mixing} and \cref{sec:period}.

\begin{theorem}[DA Convergence, Ergodic Case] \label{thm:maintext-da-ergodic}
  For the input distribution $P$ define the $\iota$-step deviation from $\Pi$ for an integer $1\leq \iota \leq t-1$:
  \$
       \epsiota \defeq  \sup_{z_1,\dots, z_t} \sup_{\tau = 1,\dots, t-\iota} \| \Ptaupigiventau - \Pi \|_\TV \;.
  \$
  Then, for all $t\geq 1$ and any $1 \leq \iota \leq t-1$, 
\$
    \E_{\zts \sim P} \big[ \| w_{t+1} - {w^*_\Pi}\|_2^2 \big]
    & 
    \leq 
    \frac{\big(6+\log t\big)G\sq}{\sigma\sq t}   + \frac{2\big(4\bar F\epsiota t + 2G \sq \iota (\log t+1) + 2\iota \bar F\big)}{\sigma t}
    \\
    & =
    \tilde O( \epsiota +  \iota /t ) 
    \;.
\$
Moreover, the rate $ \tilde O( \epsiota +  \iota /t ) $ applies to $\E\big[ \| w_{t+1} - w^*_\gamma\|_2^2 \big]$ and $\E\big[ \| w^*_\gam - w^*_\Pi \|_2^2 \big]$ by the deduction in \cref{sec:deduction-to-hindsight-optima}.
\end{theorem}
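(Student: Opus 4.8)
The plan is to mirror the structure of the independent-data argument (\cref{thm:corrupted}, \cref{lm:mildcorruption}) but to replace its one-step conditioning by an $\iota$-step lookback. Starting from the deterministic bound \cref{fact:linxiao-intext} with $w=w^*_\Pi$ and taking expectations, and using the deterministic estimate $\E[\Delta_t]\le (6+\log t)G^2/(2\sigma)$, the whole theorem reduces to establishing a lower bound of the form
\[
\E\big[R_t(w^*_\Pi)\big]\;\ge\; -\,4\barF\,\epsiota\, t\;-\;2G^2\iota(\log t+1)\;-\;2\iota\barF .
\]
As in the independent case I would decompose $R_t(w^*_\Pi)=\I+\II+\III$ with $\I=\sumtau\big(F(w_\tau,z_\tau)-\phi_\Pi(w_\tau)\big)$, $\II=\sumtau\big(\phi_\Pi(w^*_\Pi)-F(w^*_\Pi,z_\tau)\big)$, and $\III=\sumtau\big(\phi_\Pi(w_\tau)-\phi_\Pi(w^*_\Pi)\big)\ge 0$ by optimality of $w^*_\Pi$. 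The obstruction relative to the corrupted i.i.d.\ case is that conditioning $z_\tau$ on $\cFtaum$ no longer makes its law close to $\Pi$; the ergodicity hypothesis only guarantees closeness $\iota$ steps later.

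For $\II$ this is mild: $w^*_\Pi$ is deterministic, and for $\tau>\iota$ the marginal law $P^\tau$ of $z_\tau$ is the $z_{1:\tau-\iota}$-average of the conditional laws $P^{\tau}(\cdot\mid z_{1:\tau-\iota})$, so by convexity of total variation $\|P^\tau-\Pi\|_\TV\le\epsiota$ and hence $|\phi_\Pi(w^*_\Pi)-\E[F(w^*_\Pi,z_\tau)]|\le 2\barF\epsiota$ using boundedness of $F$; the $\iota$ boundary terms are each at most $2\barF$. For $\I$ I would, for each $\tau>\iota$, insert the earlier iterate $w_{\tau-\iota}$, which is $\cF_{\tau-\iota}$-measurable, and write
\[
F(w_\tau,z_\tau)-\phi_\Pi(w_\tau)=\big(F(w_{\tau-\iota},z_\tau)-\phi_\Pi(w_{\tau-\iota})\big)+\Big(\big(F(w_\tau,z_\tau)-F(w_{\tau-\iota},z_\tau)\big)-\big(\phi_\Pi(w_\tau)-\phi_\Pi(w_{\tau-\iota})\big)\Big).
\]
Conditioning the first parenthesis on $\cF_{\tau-\iota}$ and using that the conditional law of $z_\tau$ there is $P^{\tau}(\cdot\mid z_{1:\tau-\iota})$, which is $\epsiota$-close to $\Pi$ in TV, bounds its expectation below by $-2\barF\epsiota$, exactly as in \cref{lm:mildcorruption}. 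In the second parenthesis the regularizer $\Psi$ cancels between $F=f+\Psi$ and $\phi_\Pi=\E_\Pi f+\Psi$, leaving $\big(f(w_\tau,z_\tau)-f(w_{\tau-\iota},z_\tau)\big)-\big(\E_\Pi f(w_\tau,\cdot)-\E_\Pi f(w_{\tau-\iota},\cdot)\big)$, which is $\ge -2G\,\|w_\tau-w_{\tau-\iota}\|$ by $G$-Lipschitzness of $f(\cdot,z)$ (the bounded-subgradient assumption). This cancellation is what frees us from needing $\Psi$ itself to be Lipschitz, which it is not.

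It remains to control the iterate drift $\|w_\tau-w_{\tau-\iota}\|$, and here the composite strong-convexity structure is essential. Since $w_{j+1}=\argmin_w\{\langle\bar g_j,w\rangle+\Psi(w)\}$ with $\Psi$ being $\sigma$-strongly convex, the minimizer is $(1/\sigma)$-Lipschitz in the dual average, and $\|\bar g_j-\bar g_{j-1}\|_*=\tfrac1j\|g_j-\bar g_{j-1}\|_*\le 2G/j$; hence $\|w_{j+1}-w_j\|\le 2G/(\sigma j)$ and $\|w_\tau-w_{\tau-\iota}\|\le\tfrac{2G}{\sigma}\sum_{j=\tau-\iota}^{\tau-1}\tfrac1j$. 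Summing over $\tau=\iota+1,\dots,t$ and swapping the order of summation, each $1/j$ is counted at most $\iota$ times, giving $\sum_{\tau>\iota}\|w_\tau-w_{\tau-\iota}\|\le \tfrac{2G\iota}{\sigma}(1+\log t)$ and thus the $O(G^2\iota\log t)$ term; the $\iota$ boundary terms $\tau\le\iota$ of $\I$ (this is where $\iota\le t-1$ is used) are bounded crudely via boundedness of $F$ (equivalently of $f$ after the $\Psi$-cancellation), contributing another $O(\iota\barF)$. Adding the three contributions yields a lower bound of the displayed form, and substituting into \cref{fact:linxiao-intext} gives the theorem; the ``moreover'' claims about $\E[\|w_{t+1}-w^*_\gamma\|^2]$ and $\E[\|w^*_\gamma-w^*_\Pi\|^2]$ follow verbatim from the reduction in \cref{sec:deduction-to-hindsight-optima}. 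The main obstacle is precisely the second-parenthesis drift term: one cannot condition one step ahead because adjacent $z_\tau$'s are strongly correlated, so one must pay for how far dual averaging moves over $\iota$ steps, and this is affordable only because that movement is $\sum 1/j$-summable — a consequence of strong convexity of the composite term $\Psi$, the very structure that ergodic mirror-descent analyses such as \cite{duchi2012ergodic} do not exploit.
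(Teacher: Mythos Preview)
Your proposal is correct and follows essentially the same route as the paper. The paper's proof (a four-term decomposition $R_t(w^*_\Pi)=\rA+\rB+\rC+\rD$ obtained by pairing $w_\tau$ with $z_{\tau+\iota}$ for $\tau\le t-\iota$) is just a re-indexing of your scheme: after the substitution $\tau\mapsto\tau-\iota$, your ``first parenthesis'' in $\I$ together with $\II$ becomes their $\rA$, your ``second parenthesis'' becomes their $\rB$, your $\III$ becomes their $\rC$, and your boundary terms $\tau\le\iota$ become their $\rD$. Both arguments then use the same two ingredients: the $\iota$-step mixing bound via total variation, and the prox-Lipschitz drift estimate $\|w_{j+1}-w_j\|\le 2G/(\sigma j)$ summed over $\iota$ consecutive steps.

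One point where your write-up is actually cleaner than the paper's: you explicitly observe that in the drift term the regularizer $\Psi$ cancels between $F=f+\Psi$ and $\phi_\Pi=\E_\Pi f+\Psi$, so only the $G$-Lipschitzness of $f(\cdot,z)$ is needed. The paper's proof asserts that $w\mapsto F(w,z)$ is $G$-Lipschitz, which as stated requires $\Psi$ to contribute nothing to the Lipschitz constant; your cancellation makes this rigorous under the stated assumptions.
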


\begin{remark}[Comparison with EMD \cite{duchi2012ergodic}]
  Now we specialize \cref{thm:maintext-da-ergodic} to the setting of \cref{rm:markov-input}, and we briefly compare our result with the Ergodic Mirror Descent (EMD) results of \citet{duchi2012ergodic}. EMD considers nonsmooth convex optimization problems of the form $f^* = \min \big\{ f(w)= {\E}_{\Pi}[F(w ; \xi)] \given w\in \cW\big\}$ for a closed convex set $\cW$. Differently from our setting, they do not assume strong convexity in $f$, and do not allow a composite term $\Psi$ which is not linearized. 
  Assume the Markov chain that generates $\zts$ are fast mixing with $\epsilon_t(\iota) 
  \leq M\rho^\iota$ for some $M>0$ and $\rho \in [0,1)$, then the EMD algorithm produces iterates that satisfy the 
  following convergence rate~\footnote{In Eq.\ (3.2) of \cite{duchi2012ergodic}, set 
  $\kappa_1 = M$ and $\kappa_2 = 1/\log(\rho\inv)$ and ignore the parameters $(G,D,\kappa_1)$.}
  \$
  \E \big[ f(w_{t+1}) - f\st \big]  = \tilde O 
  \Bigg( 
    \bigg( 1+\frac{1}{\log(\rho \inv)} \bigg) ^{1/2}
  \cdot 
    \frac{1}{\sqrt{t}}
  \Bigg)
    \;.
  \$
  As in \cref{rm:markov-input}, for the same fast mixing Markov chain, we set $ \iota =  O\Big(\frac{\log t}{\log(\rho\inv )} \Big)$ and $ \epsilon_t(\iota)  = 1/t$ in \cref{thm:maintext-da-ergodic}, we obtain the rate
  \$
  \E \big[ \| w_{t+1} - w\st \|\sq \big] = \tilde O 
  \Bigg(
    \bigg( 1+\frac{1}{\log(\rho \inv)} \bigg) \cdot \frac1t 
  \Bigg)
  \;.
  \$
  which is also the rate for $\E [ f(w_{t+1}) - f\st ]$. Both results characterize the dependence of convergence rate on $\rho$, the mixing parameter of the Markov chain. However,
  our result exploits the strong convexity of the optimization problem and achieves the faster rate $1/t$, while also achieving convergence in iterates rather than only in function values.
  \end{remark}

\begin{theorem}[DA Convergence, Block-Independent Case]\label{thm:maintext-da-periodic}
Fix an integer $K \geq 1$. Let $\{1=\tau_1 < \tau_2 < \dots \tau_{K+1} = t\}$ be an increasing subsequence of $[t]$. Using each two consecutive points, form the interval $I_{k} \defeq [\tauk, \taukp - 1]$. 
Then $\cP \defeq \{ I_k \}_{k=1}^K$ is a partition of 
$[t]$.
Define by $|I_k| \defeq \taukp - \tauk\geq 1$ the length of the interval and $|\cP|_\infty \defeq \max_{k} | I_k|$ the maximum length of the intervals. Associated with the input distribution $P$ and the partition $\cP$ define the block-wise deviation from $\Pi$ by
\$
\delta^{\mathrm{b}} \defeq \frac1t \sumk |I_k| \cdot \bigg\|    \Pi  - \frac{1}{|I_k|} \sumtauinIk  P^\tau \bigg\|_\TV \; .
\$
Assume $\zts$ are block-wise independent according to the partition $\cP$. Then, for all $t\geq 1$, 
\# \label{eq:rateperiodic-in-text}
  \E_{\zts \sim P} \big[ \| w_{t+1} - {w^*_\Pi}\|_2^2 \big] 
  & \leq
  \frac{(6 + \log t) G^2 }{\sigma^2 t} + 
  \frac{ 2\big( 4\bar F \cdot \delta^{\mathrm{b}} t + G\sq |\cP|_\infty\sq( \log t + 1)\big)}
  {\sigma t}  \notag
  \\ 
  & 
  =
  \tilde O ( \delta^{\mathrm{b}} + |\cP|\sq_\infty /t )
  \;.
\#
Moreover, the rate $ \tilde O ( \delta^{\mathrm{b}} + |\cP|\sq_\infty /t )$ applies to $\E\big[ \| w_{t+1} - w^*_\gamma\|_2^2 \big]$ and $\E\big[ \| w^*_\gam - w^*_\Pi \|_2^2 \big]$ by the deduction in \cref{sec:deduction-to-hindsight-optima}.
\end{theorem}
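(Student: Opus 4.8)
The plan is to follow the template of \cref{thm:corrupted}/\cref{lm:mildcorruption}, replacing its per-step conditioning argument by a per-block one. Apply \cref{fact:linxiao-intext} with $w={w^*_\Pi}$; since $\E[\Delta_t]\le(6+\log t)G^2/(2\sigma)$ holds deterministically, it suffices to lower bound the expected regret $\E[R_t({w^*_\Pi})]$. I would use the same three-way split as in \cref{lm:mildcorruption}, $R_t({w^*_\Pi})=\I+\II+\III$ with $\III\ge0$ by optimality of ${w^*_\Pi}$, together with the observation that $F(w,z)=f(w,z)+\Psi(w)$ while $\phi_\Pi(w)=\E_{z\sim\Pi}[f(w,z)]+\Psi(w)$, so the regularizer cancels inside $\I$: writing $\phi^f_\Pi(w):=\E_{z\sim\Pi}[f(w,z)]$ gives $\I=\sum_{\tau=1}^t\big(f(w_\tau,z_\tau)-\phi^f_\Pi(w_\tau)\big)$, and both $f(\cdot,z)$ and $\phi^f_\Pi$ are $G$-Lipschitz by the bounded-subgradient assumption.

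For $\II$ I would group the sum over the blocks $I_1,\dots,I_K$: with $\bar P^{I_k}:=\frac{1}{|I_k|}\sum_{\tau\in I_k}P^\tau$ we get $\E[\II]=\sum_{k=1}^K|I_k|\big(\phi_\Pi({w^*_\Pi})-\E_{z\sim\bar P^{I_k}}[F({w^*_\Pi},z)]\big)$, so boundedness of $F$ and the TV estimate yield $|\E[\II]|\le2\barF\sum_k|I_k|\,\|\Pi-\bar P^{I_k}\|_\TV=2\barF\,\delta^{\mathrm{b}}\,t$, precisely the block-deviation quantity. The crux is $\I$, since within a block the samples may be arbitrarily dependent and the step-by-step conditioning of \cref{lm:mildcorruption} breaks down. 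Instead I would \emph{freeze the iterate at the start of each block}: for $\tau\in I_k$ write
\begin{align*}
f(w_\tau,z_\tau)-\phi^f_\Pi(w_\tau)
&= \big(f(w_{\tauk},z_\tau)-\phi^f_\Pi(w_{\tauk})\big)\\
&\quad+\big(f(w_\tau,z_\tau)-f(w_{\tauk},z_\tau)\big)+\big(\phi^f_\Pi(w_{\tauk})-\phi^f_\Pi(w_\tau)\big).
\end{align*}

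The last two terms are at most $2G\|w_\tau-w_{\tauk}\|$ in absolute value; using that the dual-averaging map $g\mapsto\argmin_w\{\langle g,w\rangle+\Psi(w)\}$ is $(1/\sigma)$-Lipschitz (from $\sigma$-strong convexity of $\Psi$) and $\|\bar g_s-\bar g_{s-1}\|_*=\frac1s\|g_s-\bar g_{s-1}\|_*\le\frac{2G}{s}$, one obtains iterate stability $\|w_{s+1}-w_s\|\le\frac{2G}{\sigma s}$, hence $\|w_\tau-w_{\tauk}\|\le\frac{2G}{\sigma}\sum_{s=\tauk}^{\tau-1}\frac1s$; bounding the partial harmonic sums and summing this cumulative drift over $\tau\in I_k$ and over $k$, with $|I_k|\le|\cP|_\infty$, contributes a term of the claimed order $\tilde O\big(G^2|\cP|_\infty^2(\log t+1)/\sigma\big)$. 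For the first (block-average) term, $w_{\tauk}$ depends only on $z_1,\dots,z_{\tauk-1}$ and is therefore measurable with respect to $\cG_{k-1}:=\sigma(z_\tau:\tau\in I_1\cup\dots\cup I_{k-1})$; by block-independence $\{z_\tau\}_{\tau\in I_k}$ is independent of $\cG_{k-1}$ with $z_\tau$ having marginal $P^\tau$, so $\sum_{\tau\in I_k}\E[f(w_{\tauk},z_\tau)-\phi^f_\Pi(w_{\tauk})\mid\cG_{k-1}]=|I_k|\big(\E_{z\sim\bar P^{I_k}}[f(w_{\tauk},z)]-\phi^f_\Pi(w_{\tauk})\big)$, whose absolute value is at most $2\barF|I_k|\,\|\bar P^{I_k}-\Pi\|_\TV$; taking outer expectations and summing over $k$ this piece is bounded by $2\barF\,\delta^{\mathrm{b}}\,t$. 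Adding the three pieces gives $\E[\I]\ge-2\barF\,\delta^{\mathrm{b}}\,t-\tilde O\big(G^2|\cP|_\infty^2(\log t+1)/\sigma\big)$.

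Combining with $\E[\II]\ge-2\barF\,\delta^{\mathrm{b}}\,t$ and $\E[\III]\ge0$ yields $\E[R_t({w^*_\Pi})]\ge-4\barF\,\delta^{\mathrm{b}}\,t-G^2|\cP|_\infty^2(\log t+1)/\sigma$ (absorbing constants), and plugging into \cref{fact:linxiao-intext} gives the bound \eqref{eq:rateperiodic-in-text}. The transfer of the rate to $\E[\|w_{t+1}-w^*_\gamma\|^2]$ and $\E[\|w^*_\gamma-{w^*_\Pi}\|^2]$ is then immediate from the deduction recorded in \cref{sec:deduction-to-hindsight-optima}. I expect the main obstacle to be the analysis of $\I$: because the objective has no time-separable structure and no step-level conditioning is available inside a block, one must pay for the drift of the dual-averaging iterate over an entire block, which is exactly where $\sigma$-strong convexity of $\Psi$ (and the resulting iterate stability) is indispensable, and which is the source of the block-length factor $|\cP|_\infty^2$ that replaces the lone logarithm of the i.i.d.\ case.
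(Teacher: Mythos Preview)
Your proposal is correct and follows essentially the same approach as the paper's proof in \cref{sec:period}: both lower bound $\E[R_t(w^*_\Pi)]$ by freezing the iterate at the block start $w_{\tau_k}$, handling the resulting ``block-average'' piece via conditioning on the past blocks and the TV bound (giving $2\bar F\,\delta^{\mathrm b}t$), and controlling the ``drift'' piece via the $(1/\sigma)$-Lipschitz property of the dual-averaging prox map and $\|w_{s+1}-w_s\|\le 2G/(\sigma s)$ (giving the $G^2|\cP|_\infty^2(\log t+1)/\sigma$ term). The only organizational difference is that the paper builds the block-freezing directly into its three-term decomposition (replacing $\phi_\Pi(w_\tau)$ by $\phi_\Pi(w_{\tau_k})$ in the nonnegative term and then splitting the first sum into $A_k$ and $B_k$), whereas you keep the original $\I+\II+\III$ split of \cref{lm:mildcorruption} and introduce the freezing only inside $\I$; your version thus incurs two drift terms (for $f$ and for $\phi^f_\Pi$) rather than one (for $F$), which costs an innocuous factor of $2$ but has the minor advantage that $\Psi$ cancels explicitly and only the $G$-Lipschitzness of $f$ is used.
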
 

  Let us briefly we comment on the dependence on $|P|_\infty$.
  Suppose there are in total $K$ blocks, each of equal length $|\cP|_\infty = q$, and blocks are i.i.d.\  We still allow arbitrary dependence within a block. Moreover, we choose $\Pi = \bar P$ in the definition of $\delta^{\mathrm{b}}$. This implies $\delta^{\mathrm{b}}=0$ and then the rate in \cref{thm:maintext-da-periodic} is $q\sq / t$.

  Consider dual averaging {with} the knowledge of the block structure $q$.  Then the rate $1/K = q/t$ can be achieved by executing DA using one randomly chosen data point within a block, throwing away the rest in that same block. Such selection produces $K$ i.i.d.\ samples from $\bar P$. In comparison, the rate in \cref{eq:rateperiodic-in-text} is worse off by a factor of $q$ due to not knowing the block-structure information.

\subsection{Proofs for Markov Case} \label{sec:mixing}

Now we consider data that are not necessarily independent across time. We restrict our attention to ergodic processes, meaning data tend to be independent as they grow apart in time.

Define the $\iota$-step deviation from stationarity
\$
        \epsiota \defeq  \sup_{z_1,\dots, z_t} \sup_{\tau = 1,\dots, t-\iota} \| \Ptaupigiventau - \Pi \|_\TV \;.
\$
An equivalent quantity is the $\epsilon$-mixing time \cite{duchi2012ergodic}
\$
    \iota_{\mix}(\epsilon) \defeq \min \Big\{\iota : 1 \leq \iota \leq t-1, \sup_{z_1,\dots, z_t} \sup_{\tau = 1,\dots, t-\iota} \| \Ptaupigiventau - \Pi \|_\TV \leq \epsilon \Big\} \;.
\$

This means, no matter where and when we start the process, it takes only $\iota$ steps to get $\epsiota$-close to the stationary distribution $\Pi$. 
One could expect the deviation $\epsiota$ decreases as $\iota$ increases. This makes sense because for large $\iota$, the process has run long enough to reach stationarity. 

\begin{theorem}[Mixing Data, Restatement of \cref{thm:maintext-da-ergodic}]  \label{thm:mixing}
It holds for all $t\geq 1$ and any $1 \leq \iota \leq t-1$, 
\#
    \E \big[ \| w_{t+1} - {w^*_\Pi}\|_2^2 \big] = 
        O\Big(
            \frac{\log t}{\sigma^2 t} 
        + \frac{\iota \log t}{\sigma  t}
        + {\epsiota}/{\sigma }
        \Big) \,,
\#
    where $O(\cdot)$ hides dependence on constants, $G$ and $\barF$. Here there is a trade-off in $\iota$ in the last two terms.
\end{theorem}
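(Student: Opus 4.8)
The plan is to run the same strategy used for the independent case (\cref{sec:arb}), now carefully accommodating the temporal dependence. Applying \cref{fact:linxiao-intext} with $w = w^*_\Pi \in \dom\Psi$ and taking expectations gives $\E[\|w_{t+1}-w^*_\Pi\|^2] \le \frac{2}{\sigma t}\big(\E[\Delta_t] - \E[R_t(w^*_\Pi)]\big)$; since $\Delta_t \le (6+\log t)G^2/(2\sigma)$ holds pathwise, everything reduces to a lower bound on $\E[R_t(w^*_\Pi)]$. I would keep the three-term decomposition $R_t(w^*_\Pi) = \mathrm{I} + \mathrm{II} + \mathrm{III}$ with $\mathrm{I} = \sumtau\big(F(w_\tau,z_\tau) - \phi_\Pi(w_\tau)\big)$, $\mathrm{II} = \sumtau\big(\phi_\Pi(w^*_\Pi) - F(w^*_\Pi,z_\tau)\big)$, and $\mathrm{III} = \sumtau\big(\phi_\Pi(w_\tau) - \phi_\Pi(w^*_\Pi)\big) \ge 0$, so that only $\mathrm{I}$ and $\mathrm{II}$ must be bounded below.

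Term $\mathrm{II}$ involves only the fixed point $w^*_\Pi$: $\E[\mathrm{II}] = \sumtau\big(\E_{z\sim\Pi}[F(w^*_\Pi,z)] - \E_{z\sim P^\tau}[F(w^*_\Pi,z)]\big)$. For $\tau > \iota$ the marginal $P^\tau$ is an average over $z_{1:\tau-\iota}$ of the $\iota$-step conditionals $P^\tau(\cdot\mid z_{1:\tau-\iota})$, so by convexity of the total variation norm $\|P^\tau - \Pi\|_\TV \le \epsiota$; boundedness of $F$ then yields $|\E[\mathrm{II}]| \le 2\barF\big(\iota + t\,\epsiota\big)$, the first $\iota$ summands being controlled crudely by $2\barF$ each.

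Term $\mathrm{I}$ is where the new idea is needed, since the one-step conditional law of $z_\tau$ given $\cFtaum$ can be far from $\Pi$. The plan is to compare $w_\tau$ with the \emph{stale} iterate $w_{\tau'}$, $\tau' = \tau - \iota + 1$, which is $\cF_{\tau-\iota}$-measurable, while conditionally on $\cF_{\tau-\iota}$ the variable $z_\tau$ has law $P^\tau(\cdot\mid z_{1:\tau-\iota})$, the $\iota$-step conditional, which is within $\epsiota$ of $\Pi$ in total variation. Decomposing $F(w_\tau,z_\tau) - \phi_\Pi(w_\tau)$ as $[F(w_{\tau'},z_\tau) - \phi_\Pi(w_{\tau'})]$ plus $[f(w_\tau,z_\tau) - f(w_{\tau'},z_\tau)] - \E_\Pi[f(w_\tau,\cdot) - f(w_{\tau'},\cdot)]$ — crucially the non-smooth $\Psi$ terms cancel exactly — the second piece is at most $2G\|w_\tau - w_{\tau'}\|$ by the $G$-Lipschitzness of $f(\cdot,z)$ from the bounded-subgradient assumption, and the first piece, after conditioning on $\cF_{\tau-\iota}$ and using boundedness of $F$, is at most $2\barF\,\epsiota$ in absolute expectation, exactly as in the estimate behind \cref{lm:mildcorruption}. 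The quantitative input that remains is the stability of dual averaging: $\sigma$-strong convexity of $\Psi$ gives $\|w_{s+1}-w_s\| \le \sigma^{-1}\|\bar g_s - \bar g_{s-1}\|_* \le 2G/(\sigma s)$, since $\bar g_s - \bar g_{s-1} = (g_s - \bar g_{s-1})/s$ and $\|g_s\|_*, \|\bar g_{s-1}\|_* \le G$; hence $\|w_\tau - w_{\tau'}\| \le (2G/\sigma)\sum_{s=\tau'}^{\tau-1} 1/s$, and summing over $\tau$ collapses this to an $O(\log t)$ harmonic factor. Thus $\mathrm{I}$ contributes a term of order $G^2\iota\log t/\sigma$, together with $O(\barF\iota)$ from the boundary terms $\tau \le \iota-1$ (bounded by $2\barF$ each) and $O(\barF t\,\epsiota)$ from the first piece summed over $\tau$.

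Putting the pieces together, $\E[R_t(w^*_\Pi)] \ge -O\big(G^2\iota\log t/\sigma + \barF\iota + \barF t\,\epsiota\big)$; substituting into \cref{fact:linxiao-intext} with $\Delta_t \le (6+\log t)G^2/(2\sigma)$ and collecting terms (tracking the dependence on $\sigma$, $G$, $\barF$) yields the asserted rate $O\big(\log t/(\sigma^2 t) + \iota\log t/(\sigma t) + \epsiota/\sigma\big)$, in which the trade-off in $\iota$ between the last two terms is visible. The extension of this rate to $\E[\|w_{t+1}-w^*_\gam\|^2]$ and $\E[\|w^*_\gam - w^*_\Pi\|^2]$ then follows directly from the deduction in \cref{sec:deduction-to-hindsight-optima}. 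The main obstacle is precisely the handling of $\mathrm{I}$: under dependence one cannot condition on the ``fresh'' field $\cFtaum$ because its one-step conditional need not be near $\Pi$, and under a composite objective one cannot afford to Lipschitz-bound the variation of the non-smooth $\Psi$; resolving both simultaneously requires the stale-iterate substitution together with the exact cancellation of $\Psi$ and the $1/(\sigma s)$ iterate-stability estimate, after which the remaining work is careful bookkeeping of the boundary terms $\tau\le\iota$ and of the harmonic sum. This is also the only place where strong convexity is genuinely used, and is what separates our argument from the time-separable analyses in the related literature.
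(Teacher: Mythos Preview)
Your proof is correct and essentially the same as the paper's: both use the stale-iterate trick (compare against an iterate that is $\cF_{\tau-\iota}$-measurable so that the $\iota$-step conditional law of $z_\tau$ is $\epsiota$-close to $\Pi$) together with the stability estimate $\|w_{s+1}-w_s\|\le 2G/(\sigma s)$ from strong convexity. The paper organizes this into a four-term decomposition (A,B,C,D) by shifting the data index forward (writing $F(w_\tau,z_{\tau+\iota})$) rather than the iterate index backward, and in its term~B invokes Lipschitzness of $F$ directly, whereas your observation that the $\Psi$-terms cancel so only $f$'s $G$-Lipschitzness is needed is slightly cleaner; otherwise the arguments coincide.
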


\emph{Proof of \cref{thm:mixing}}
We use the proof in \cite{duchi2012ergodic}; see Eq.\ (6.2) in the paper. Decompose $R_t(\wst)$ as follows.
\$
R_t(\wst) 
= & \sumtautmi \Big( \big(F(\wtau,\ztaupi) - F(\wst,\ztaupi)\big) - \big({\phi_\Pi}(\wtau) - {\phi_\Pi}(\wst) \big)\Big) \tag{A}
\\
& + \sumtautmi \big(F(\wtaupi,\ztaupi) - F(\wtau,\ztaupi)\big) \tag{B}
\\
& + \sumtautmi \big({\phi_\Pi}(\wtau) - {\phi_\Pi}(\wst) \big) \tag{C}
\\
& + \sumtaui \big(F(\wtau,\ztau) - F(\wst,\ztau)\big)  \;. \tag{D}
\$

By optimality of $\wst$ we have $ \rC \geq 0$. By boundedness of $F$ we get $|\rD| \leq 2\iota \bar F$. Remains to handle A and B. We will show
\[ \text{"}\,\, \rA \leq \epsiota t, \quad \rB\leq \iota \,\,\text{"} \,.\]

\underline{Bounding A.} The key is $\ztaupi$ is almost independent of $\cFtaum$ if $\iota$ is moderately large. For each $\tau$,
\$
& |\E[F(\wtau,\ztaupi) - {\phi_\Pi}(\wtau)]| 
\\
& = |\E[ \E[ F(\wtau,\ztaupi) -  {\phi_\Pi}(\wtau) | \cFtaum]]|
\\
& =  \Bigg | \E\Bigg[  \E\Bigg[  \int_\cZ  F(w_\tau, z) \Ptaupigiventaumdz  - \int_\cZ F(w_\tau, z) \Pi (\diff z) |\cFtaum\Bigg] \Bigg] \Bigg | \tag{Key}
\\
& \leq   \E \Bigg[ \E\Bigg[  \Big |\int_\cZ  F(w_\tau, z) \Ptaupigiventaumdz  -  \int_\cZ F(w_\tau, z) \Pi (\diff z)\Big | |\cFtaum\Bigg] \Bigg]
\\ 
& \leq 2\bar F \cdot \E \big[ \| \Ptaupigiventaum - \Pi \|_\TV \big]
\\
& \leq 2\bar F \cdot \sup_{z_1,\dots, z_{\tau - 1}} \| \Ptaupigiventaum - \Pi \|_\TV \leq 2\bar F \epsiota \;.
\$
Analysis for $|\E[F(\wst,\ztaupi) - {\phi_\Pi}(\wst)]| $ is almost identical. Next sum over $\tau = 1,\dots, t-\iota$.
\$
| \E[ \rA ]| \leq 4\bar F \epsiota \cdot t \;.
\$

\underline{Bounding B.} The change in $F$ by $\iota$ steps of updates, starting from $\wtau$, is controlled by $c\cdot \iota G \cdot \frac1\tau$ where $1/\tau$ acting like a stepsize.

\begin{lemma} \label{lm:onestep}
Let $\Pi_{\Psi, \cW} (g) \defeq\argmin_{w\in \cW} \{ \langle g, w \rangle + \Psi(w) \}$. If $\Psi$ is $\sigma$-strongly convex, then 
    \$
    \| \Pi_{\Psi, \cW} (g) - \Pi_{\Psi, \cW} (g')\| \leq (1/\sigma) \|g - g'\|_*\;.
    \$
\end{lemma}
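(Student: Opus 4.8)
The plan is to prove that the map $g \mapsto \Pi_{\Psi,\cW}(g)$ is $(1/\sigma)$-Lipschitz from $(\R^d,\|\cdot\|_*)$ to $(\R^d,\|\cdot\|)$ via a standard strong-convexity argument. Write $w \defeq \Pi_{\Psi,\cW}(g)$ and $w' \defeq \Pi_{\Psi,\cW}(g')$, so that $w$ minimizes $h(u) \defeq \langle g, u\rangle + \Psi(u)$ over the convex set $\cW$ and $w'$ minimizes $h'(u) \defeq \langle g', u\rangle + \Psi(u)$ over $\cW$. Since the linear terms $\langle g,\cdot\rangle$ and $\langle g',\cdot\rangle$ do not affect curvature, both $h$ and $h'$ are $\sigma$-strongly convex on $\cW$.

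The key ingredient is the sharpened optimality inequality: if a $\sigma$-strongly convex function $h$ attains its minimum over a convex set $\cW$ at $w$, then $h(u) \geq h(w) + \tfrac{\sigma}{2}\|u-w\|^2$ for every $u \in \cW$. This follows from the first-order optimality condition at $w$ together with the strong-convexity lower bound for $h$; equivalently, restrict $h$ to the segment $[w,u]\subseteq\cW$ and use $h((1-t)w+tu)\geq h(w)$. I would apply this to $h$ at $u=w'$ and to $h'$ at $u=w$, obtaining
\[ \langle g, w'\rangle + \Psi(w') \;\geq\; \langle g, w\rangle + \Psi(w) + \tfrac{\sigma}{2}\|w-w'\|^2 , \]
\[ \langle g', w\rangle + \Psi(w) \;\geq\; \langle g', w'\rangle + \Psi(w') + \tfrac{\sigma}{2}\|w-w'\|^2 . \]
Adding these two inequalities, the $\Psi$-terms cancel and one is left with $\langle g-g',\, w'-w\rangle \geq \sigma\|w-w'\|^2$.

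It then remains only to bound the left-hand side: by the very definition of the dual norm, $\langle g-g',\, w'-w\rangle \leq \|g-g'\|_*\,\|w'-w\|$. Combining this with the previous estimate and dividing through by $\|w-w'\|$ (the case $w=w'$ being trivial) yields $\|w-w'\| \leq (1/\sigma)\|g-g'\|_*$, which is the claim. There is essentially no obstacle here; the only point needing a little care is the sharpened optimality inequality, which relies on $\cW$ being convex (so the segment $[w,w']$ lies in $\cW$) and on $\Psi$ being a closed proper convex function with $\cW\subseteq\dom\Psi$ — both hold in the PACE application, where $\cW$ is a box inside $\Rnpp$. I would state it as a one-line sub-lemma or cite a standard convex-analysis reference rather than reprove it in full.
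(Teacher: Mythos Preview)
Your argument is correct. The paper does not actually prove this lemma; it simply cites \cite[Lemma~6.1.2]{nesterov2003introductory}. Your derivation is precisely the standard proof behind that reference: apply the sharpened optimality inequality for a strongly convex objective at each of the two minimizers, add, and bound the resulting inner product via the dual norm. So you are not taking a different route---you are supplying the details that the paper outsources to Nesterov.
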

\begin{proof}
    See \cite[Lemma 6.1.2]{nesterov2003introductory}.
\end{proof}
Noting $\wtaupone = \Pi_{\Psi, \cW}(\gbartau)$ and $\wtau = \Pi_{\Psi, \cW}(\gbartaum)$, \cref{lm:onestep} gives
\# \label{eq:onestepdiff}
\| w_{\tau + 1} - \wtau \| \leq \| \gbartau - \gbartaum \|_* /\sigma  = \| \gbartaum - \gtau\|_* / (\tau \sig) \leq 2G / (\tau \sig) \;.
\#
It holds $\Pi$-a.s.\ that for each $\tau$, the map $w \mapsto F(w,\ztaupi)$ is Lipschitz with parameter $G$.
\$
|\E[F(\wtaupi,\ztaupi) - F(\wtau,\ztaupi)]|
& \leq G \cdot \E \big[\| \wtaupi - \wtau\|\big]
\\
& \leq G \cdot \sum_{t' = \tau}^{\tau + \iota - 1} \E\big[\|w_{t'+1} - w_{t'} \|\big]
\\
& \leq G \cdot \sum_{t' = \tau}^{\tau + \iota - 1} 2G / (\sigma t')
\\
& \leq  G \cdot \sum_{t' = \tau}^{\tau + \iota - 1} 2G / (\sigma \tau) = 2G\sq \iota/ \tau \;.
\$
Summing over $\tau = 1,\dots, t-\iota$, we get
\$
| \E[ \rB] | \leq 2G\sq \iota (\log t + 1)\;.
\$

Putting together, 
\$
\E[R_t(\wst)] 
&= \E[ \rA+\rB+ \rC+ \rD]
\\
& \geq \E[ \rA+ \rB+ \rD] 
\\
& \geq - \big(|\E[\rA]| +|\E[\rB]| +|\E[ \rD]| \big)
\\
& \geq - (4\bar F\epsiota t + 2G \sq \iota (\log t + 1) + 2\iota \bar F)\;,
\$
and 
\# \label{eq:concrete-markov bound}
\E[\left\|w_{t+1}-w\right\|^{2} ] 
& \leq \frac{1}{\sigma t} \big(\E[\Delta_t]  - \E[R_t(\wst)]\big) \notag
\\
& 
\leq \frac{\big(6+\log t\big)G\sq}{\sigma\sq t}   + \frac{2\big(4\bar F\epsiota t + 2G \sq \iota (\log t+1) + 2\iota \bar F\big)}{\sigma t}\;.
\#
We complete the proof of \cref{thm:mixing}.

\subsection{Proofs for Periodic Case} \label{sec:period}

    Assume $\zts$ are block-wise independent according to the partition $\cP$. Given the partition $\cP$, define 
    \$
        \delta^\block_t \defeq  \sumk |I_k| \cdot \Bigg\|    \Pi  - \frac{1}{|I_k|} \sumtauinIk  P^\tau \Bigg\|_\TV \; .
    \$
    Note we compute the deviation in a block-wise manner. Note $ \delta^\block_t = t \cdot \delta^\mathrm{b}$ with $\delta^\mathrm{b}$ defined in \cref{thm:maintext-da-periodic}.

\begin{theorem}[Block-wise Independent Data, Restatement of \cref{thm:maintext-da-periodic}] \label{thm:block}
    It holds 
    \#
        \E \big[ \| w_{t+1} - {w^*_\Pi}\|_2^2 \big] = 
            O\bigg(
                \frac{\log t}{\sigma^2 t} 
            + \frac{ | \cP|_\infty\sq \log t}{\sigma  t}
            + \frac{\delta^\block_t }{\sigma t}
            \bigg) \;,
    \#
    where $O(\cdot)$ hides dependence on constants, $G$ and $\barF$.
\end{theorem}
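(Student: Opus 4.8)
The plan is to mirror the ergodic argument behind \cref{thm:mixing}, replacing its ``$\iota$-step shift'' by a ``shift to the left endpoint of the enclosing block.'' Since \cref{fact:linxiao-intext} is deterministic, setting $w = \wst$ there gives $\|w_{t+1} - \wst\|^2 \leq \tfrac{2}{\sigma t}\big(\Delta_t - R_t(\wst)\big)$, and $\Delta_t \leq (6+\log t)G^2/(2\sigma)$ holds surely; so it remains only to lower bound $\E[R_t(\wst)]$ in terms of $|\cP|_\infty$ and $\delta^\block_t$. For each $\tau$ let $\kappa(\tau)$ denote the left endpoint of the block $I_k$ containing $\tau$ (so $\kappa(\tau)=\tau_k$ for $\tau\in I_k$), and note that $w_{\kappa(\tau)}$ is $\cF_{\kappa(\tau)-1}$-measurable and, by blockwise independence, independent of $z_\tau$.

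I would decompose $R_t(\wst) = \rA + \rB + \rC + \rD$ with
\$
\rA = \sumtau \big( F(\wtau, z_\tau) - F(w_{\kappa(\tau)}, z_\tau)\big), \qquad \rB = \sumtau \big( F(w_{\kappa(\tau)}, z_\tau) - \phi_\Pi(w_{\kappa(\tau)})\big), \\
\rC = \sumtau \big( \phi_\Pi(w_{\kappa(\tau)}) - \phi_\Pi(\wst)\big), \qquad \rD = \sumtau \big( \phi_\Pi(\wst) - F(\wst, z_\tau)\big),
\$
where $\rC \geq 0$ by optimality of $\wst$; hence $\E[R_t(\wst)] \geq -|\E[\rA]| - |\E[\rB]| - |\E[\rD]|$ and it suffices to control the last three.

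For $\rB$ I would group by blocks and condition on $\cF_{\tau_k-1}$: on that $\sigma$-field $w_{\tau_k}$ is frozen and, by blockwise independence, $z_\tau$ has its marginal law $P^\tau$ for each $\tau \in I_k$, so $\E\big[\sumtauinIk\big(F(w_{\tau_k}, z_\tau) - \phi_\Pi(w_{\tau_k})\big) \mid \cF_{\tau_k-1}\big] = |I_k|\int F(w_{\tau_k}, z)\big(\tfrac{1}{|I_k|}\sumtauinIk P^\tau - \Pi\big)(\diff z)$, whose absolute value is at most $2\barF |I_k| \cdot \big\| \tfrac{1}{|I_k|}\sumtauinIk P^\tau - \Pi \big\|_\TV$ by boundedness of $F$; summing over $k$ yields $|\E[\rB]| \leq 2\barF \delta^\block_t$. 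The deterministic term $\rD$ is bounded identically, $|\E[\rD]| \leq 2\barF \delta^\block_t$. For $\rA$ I would use the one-step stability estimate \eqref{eq:onestepdiff} (via \cref{lm:onestep}), $\|w_{s+1} - w_s\| \leq 2G/(\sigma s)$, so that for $\tau\in I_k$, $\|\wtau - w_{\tau_k}\| \leq \sum_{s=\tau_k}^{\tau-1} 2G/(\sigma s) \leq 2G|\cP|_\infty/(\sigma\tau_k)$; with the $G$-Lipschitz continuity of $F(\cdot,z)$ each block then contributes at most $|I_k|\cdot 2G^2|\cP|_\infty/(\sigma\tau_k) \leq 2G^2|\cP|_\infty^2/(\sigma\tau_k)$, and since $\tau_k\geq k$ we get $|\E[\rA]| \leq \tfrac{2G^2|\cP|_\infty^2}{\sigma}\sumk\tfrac1k \leq \tfrac{2G^2|\cP|_\infty^2(1+\log t)}{\sigma}$. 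Plugging $\E[R_t(\wst)] \geq -\big(2G^2|\cP|_\infty^2(1+\log t)/\sigma + 4\barF\delta^\block_t\big)$ back into \cref{fact:linxiao-intext} gives the claimed $O\big(\tfrac{\log t}{\sigma^2 t} + \tfrac{|\cP|_\infty^2\log t}{\sigma t} + \tfrac{\delta^\block_t}{\sigma t}\big)$ rate; the bounds for $\E[\|w_{t+1}-w^*_\gamma\|^2]$ and $\E[\|w^*_\gamma-w^*_\Pi\|^2]$ then follow from the reduction in \cref{sec:deduction-to-hindsight-optima}.

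The step I expect to be the crux is the treatment of $\rB$: one must argue that replacing the current iterate $\wtau$ (which is correlated with $z_\tau$ through earlier data in the same block) by the block's opening iterate $w_{\kappa(\tau)}$ is legitimate, and that blockwise independence is exactly what restores $z_\tau\sim P^\tau$ after conditioning on $\cF_{\kappa(\tau)-1}$. The price of that substitution is the within-block drift estimate controlling $\rA$, and that is where the quadratic dependence $|\cP|_\infty^2$ arises --- in contrast to the linear $\iota$ of the ergodic case --- because a block of length $|I_k|$ drifts by $\sim |I_k|/\tau_k$ and this drift is paid $|I_k|$ times. The remaining ingredients (the total-variation estimates for $\rB$ and $\rD$, nonnegativity of $\rC$, and the surely-valid bound on $\Delta_t$) are routine and parallel \cref{lm:mildcorruption} and the proof of \cref{thm:mixing}.
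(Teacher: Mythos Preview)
Your proposal is correct and follows essentially the same route as the paper: the four-term decomposition $R_t(\wst)=\rA+\rB+\rC+\rD$ (with $\kappa(\tau)=\tau_k$) coincides with the paper's block-wise split, and your treatments of $\rB$ (conditioning on $\cF_{\tau_k-1}$ plus the block-TV bound), $\rC\ge 0$, $\rD$ (block-grouped TV), and $\rA$ (within-block drift via \cref{lm:onestep}/\eqref{eq:onestepdiff}) are exactly the paper's arguments. The only cosmetic differences are that you retain the $1/\sigma$ from \eqref{eq:onestepdiff} in the drift bound (the paper silently drops it in its intermediate display) and that you bound $\sum_k 1/\tau_k$ via $\tau_k\ge k$ rather than via $\sum_\tau 1/\tau$; neither changes the stated big-$O$ conclusion.
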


Generally, compared with $\delta_t$ defined in \cref{def:deltat}, our new notion of deviation can be much smaller for block-structured data. This is especially true when each block of data, as a whole, forms a good estimate of $\Pi$, but each data point in the block deviates from $\Pi$ by a constant amount. The periodic case in \cref{rm:gainfromblock} examplifies this.

\begin{remark}[Extreme 1: Recover Independent Case]
    Setting $|\cP|_\infty = 1$ and $\delta_t^\block = \delta_t$ in \cref{def:deltat} we recover the usual rate under independence assumption (\cref{thm:corrupted}).
\end{remark}

\begin{remark}[Extreme 2: Fail to Recover Arbitrary Distribution Case]
If we allow arbitrary dependence in the whole sequence $\gamma = \zts$, then we can only set $|\cP|_\infty = t$ and the bound is useless.
\end{remark}

\begin{remark}[The Gain from Block Structure] \label{rm:gainfromblock}

    Although \cref{thm:corrupted} applies to block-structure data, we obtain significant improvement in \cref{thm:block}.

    Consider the periodic case where each block is of length $q$ and blocks are i.i.d.\ At the start of block $I_k$, we draw a sample from $\Pi$, i.e., $z_{t_k} \sim \Pi$, and then let rest of the $z_\tau$'s in that block equal $z_{t_k}$.
    In this case $\delta_t^\block=0$ because the marginal of every $\ztau$ is exactly $\Pi$. Then the bound in \cref{thm:block} becomes
    \#  \label{eq:rateperiodic}
    \frac{q\sq \log t}{t} \;,
    \#
    which converges to zero at the rate $q\sq / t$. However, the bound in \cref{thm:corrupted} fails to converge. To see this let us estimate $\delta_t$ in \cref{def:deltat}. Consider some $\tau$ in the interval $I_k$. If $ \tau \neq t_k$, then the conditional distribution $P^\tau(\cdot \given z_{1:\tau-1})$ is a point mass on $z_{t_k}$. If $\tau = t_k$ then $P^\tau(\cdot \given z_{1:\tau-1}) = \Pi$. Let $c=\sup_z \| \delta_z - \Pi \|_\TV$. The quantity $c$ is positive unless $\Pi$ is a point mass. Then
    \$
      \delta_t = \sup_{z_1,\dots,\zt} \sumk \sumtauinIk \| P^\tau(\cdot \given z_{1:\tau-1}) - \Pi \|_\TV = \sup_{z_1,\dots,\zt} \sumk (q-1) \| \delta_{z_{t_k}} - \Pi \|_\TV = K(q-1) c\; ,
    \$ 
    and then \cref{thm:corrupted} becomes
    \$
    \frac{\log t}{t} + c \to c\;.
    \$
\end{remark}

\emph{Proof of \cref{thm:block}}

We decompose the regret by blocks.
\# \label{eq:regret_periodic}
    R_t({w^*_\Pi}) =& 
  \sumk\sumtauinIk \big(F(w_\tau, z_\tau) - {\phi_\Pi}(w_\tauk) \big) +
  \sumtau \big({\phi_\Pi}({w^*_\Pi}) - F({w^*_\Pi}, z_\tau) \big) 
  \\ & +
  \sumk\sumtauinIk \big({\phi_\Pi}(w_\tauk) - {\phi_\Pi}({w^*_\Pi})\big) \;.
\#
Rewrite the first sum by adding and then subtracting the term $\sumk\sumtauinIk F(w_\tauk, \ztau)$.
\$
& \sumk\sumtauinIk \big(F(w_\tau, z_\tau) - {\phi_\Pi}(w_\tauk)\big) 
\\= & \sumk \Bigg( \underbracket{ \sumtauinIk \big(F(w_\tau, z_\tau) - F(w_\tauk, z_\tau) \big)}_{\defeq B_k} \Bigg) \tag{I}
  \\
 &+  \sumk  \Bigg( \underbracket{ \sumtauinIk \big(F(w_\tauk, z_\tau) - {\phi_\Pi}(w_\tauk) \big) }_{\defeq A_k}\Bigg)  \;.\tag{II}
\$
\underline{Bounding $A_k$.}
Use a conditioning argument. The key is, conditional on $\cF_{\tauk - 1}$, the iterate $w_\tauk$ is deterministic and the distribution of $\ztau$ is $P^\tau$ due to block-wise independence.
\$
|\E[A_k]| 
    & =  \Bigg| \sumtauinIk   \E [   \E[F(w_\tauk, z_\tau) - {\phi_\Pi}(w_\tauk) | \cF_{\tauk - 1} ]]  \Bigg|
    \\
    & \leq   \E \Bigg[ \bigg|  \sumtauinIk \E[F(w_\tauk, z_\tau) - {\phi_\Pi}(w_\tauk) | \cF_{\tauk - 1}  ]  \bigg|\Bigg]
    \\  
    & =   \E \Bigg[ \bigg|  \sumtauinIk \int_\cZ F(w_\tauk, z) P^\tau(\diff z) - \int_\cZ F(w_\tauk, z)\Pi(\diff z)    \bigg|\Bigg]
    \\
    & \leq 2\bar F \cdot  \bigg\| \sumtauinIk \big(P^\tau - \Pi\big) \bigg\|_\TV  \;.
\$
Then sum over $k=1,\dots, K$, and we have
\$
\big| \E[\I] \big|  \leq \sumk \big|\E[A_k ]\big| \leq 2\bar F \cdot 
 \sumk \bigg\| \sumtauinIk (P^\tau - \Pi) \bigg\|_\TV
 \leq 2\bar F \delta_t^\block \;.
\$
\underline{Bounding $B_k$.} Using \cref{lm:onestep} and \cref{eq:onestepdiff}, we have
\$
| \E[B_k] | 
& 
\leq G \cdot \sumtauinIk \E \big[\| \wtau - w_{\tauk} \|\big] 
\\
& 
\leq G \cdot \sumtauinIk G (\tau - \tauk) / \tauk 
\\
& 
\leq G \cdot \sumtauinIk G (\taukp - \tauk) / \tauk
\\
& = G\sq (\taukp - \tauk)\sq /\tauk \leq G\sq |\cP|_\infty\sq /\tauk \;.
\$
Then sum over $k=1,\dots, K$, and we have
\$
\big| \E[\II]\big | \leq \sumk \big|\E[B_k ]\big| 
= G\sq |\cP|_\infty\sq  \cdot \sumk \frac{1}{\tauk}
\leq  G\sq |\cP|_\infty\sq \cdot \sumtau \frac1\tau
\leq  G\sq |\cP|_\infty\sq ( \log t + 1)\,.
\$

It can be shown the second sum in the regret decomposition (\cref{eq:regret_periodic}) is upper bounded by $2\bar F \delta_t^\block$. The third sum is $\geq 0$. Using \cref{fact:linxiao-intext} we get
\# \label{eq:concrete-block-bound}
\E[\left\|w_{t+1}-w\right\|^{2} ] 
\leq \frac{1}{\sigma t} \big(\E[\Delta_t]  - \E[R_t(\wst)]\big) 
\leq \frac{(6 + \log t) G^2 }{\sigma^2 t}   + 
\frac{ 2\big( 4\bar F \delta_t^\block + G\sq |\cP|_\infty\sq( \log t + 1)\big)}
{\sigma t} \;.
\#
We complete the proof of \cref{thm:block}.

\section{Proofs for PACE}
\label{sec:proof_for_pace}

\subsection{Proof of \cref{thm:maintext-beta-conv-indep} and \cref{thm:maintext-beta-conv-ergodic-periodc}}
\label{appsub:thm:beta-various-cases-intex}

We show convergence of $\beta$ under different input assumption. Recall $\beta^{t}$ is the pacing multiplier generated by PACE, and $\betast$ is the solution to the optimization problem \cref{eq:dual-beta-infinite}. The vector $\gamma$ is the sequence of items.

\begin{theorem}[Restatement of \cref{thm:maintext-beta-conv-indep} and \cref{thm:maintext-beta-conv-ergodic-periodc}] \label{thm:beta-various-cases}
    For the independent case, i.e., $\gamma \sim Q$ and $Q\in  \mathsf{C}^\mathrm{ID}(\delta) $, it holds for $t \geq 1$
    \# \label{eq:beta-conv-1}
    \E[\|\beta^{t+1} - \betast \|\sq ] \leq   \frac{(6 + \log t) G^2 }{\sigma^2 t} + \frac{8\bar F }{\sigma }\delta 
    \;.
    \#
    and for $t \geq 3$, 
\# \label{eq:beta-conv-2}
    \frac1t \sumtau \E[\|\betatau - \betast \|\sq ] \leq  \frac{G\sq}{\sigma\sq} \left(6(1+\log t)+\frac{(\log t)^{2}}{2}\right) + (8\barF /\sigma ) \cdot \delta 
    \;.
\#
For the erogdic case, i.e., $\gamma \sim Q$ and $Q\in \sfC^{\mathrm{E}}(\delta, \iota)$, it holds for $t \geq 1$, 
\# \label{eq:beta-conv-3}
\E[\|\beta^{t+1} - \betast \|\sq ] \leq   \frac{\big(6+\log t\big)G\sq}{\sigma\sq t}   + \frac{2\big(4\bar F \delta t + 2G \sq \iota (\log t+1) + 2\iota \bar F\big)}{\sigma t} = \tilde O\bigg( \delta + \frac{\iota}{t}\bigg).
\#
and for $t \geq 3$, 
\# \label{eq:beta-conv-4}
\frac1t \sumtau \E[\|\betatau - \betast \|\sq ]=\tilde O\bigg( \delta + \frac{\iota}{t}\bigg)
\;.
\#
For the periodic case, i.e., $\gamma \sim Q$ and $Q\in \mathsf{C}^{\mathrm{P}}(q) $, it holds for $t \geq 1$
\# \label{eq:beta-conv-5}
\E[\|\beta^{t+1} - \betast \|\sq ] \leq 
\frac{(6 + \log t) G^2 }{\sigma^2 t}   + 
\frac{ 2  G\sq q \sq( \log t + 1)}
{\sigma t}  = \tilde O ( q\sq / t  )\;.
\#
and for $t \geq 3$, 
\# \label{eq:beta-conv-6}
\frac1t \sumtau \E[\|\betatau - \betast \|\sq ]=\tilde O(q\sq/t)
\;.
\#
\end{theorem}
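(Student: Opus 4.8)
The plan is to reduce everything to the correspondence ``PACE $=$ DA'' of \cref{lm:paceasda} together with the nonstationary DA convergence results of \cref{app:non-stat-da}. Concretely, under the instantiation $f(\beta,\theta)=\max_i\beta_i v_i(\theta)$, $\Psi(\beta)=-\tfrac1n\sum_i\log\beta_i$ (with domain the box $[\ell,h]^n$ enforced by the projection in \cref{alg:pace}), and subgradient $\sfG(\beta,\theta)=v_{i^\tau}(\theta)e_{i^\tau}$, the DA constants are $G=\vinfty$, $\sigma=1/n$, and $\bar F=\vinfty$ (after the normalization discussed in \cref{sec:dual-and-inf-program}; note that shifting $F$ by a constant changes neither the regret $R_t$, nor $\Delta_t$, nor the minimizer). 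Taking the reference distribution $\Pi=\bar Q$, the population objective $\E_{\theta\sim\bar Q}[F(\beta,\theta)]$ is exactly the infinite-dimensional dual \eqref{eq:dual-beta-infinite}, so $w^*_\Pi=\betast$; likewise $w^*_\gamma=\beta^\gamma$ from \eqref{eq:dual-beta-finite}. With these identifications the three DA theorems apply verbatim to PACE.

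For the last-iterate bounds \eqref{eq:beta-conv-1}, \eqref{eq:beta-conv-3}, \eqref{eq:beta-conv-5} I would simply invoke \cref{thm:maintext-da-indep}, \cref{thm:maintext-da-ergodic}, \cref{thm:maintext-da-periodic} with $P=Q$, after matching the nonstationarity parameter of each input class to the deviation quantity used in the corresponding DA theorem. For the independent case this is immediate, since $\sfC^{\mathrm{ID}}(\delta)$ is precisely $\sfC^{\mathrm{ID}}(\delta;\bar Q)$. For the ergodic case one needs a one-line reduction: the hypothesis of $\sfC^{\mathrm{E}}(\delta,\iota)$ controls the $\iota$-step deviation from some $\Pi_0$, and by convexity of total variation the marginals satisfy $\|\bar Q-\Pi_0\|_\TV\le \delta+\iota/t$, so the $\iota$-step deviation from $\bar Q$ is $O(\delta+\iota/t)$, which is absorbed into the claimed $\tilde O(\delta+\iota/t)$. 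For the periodic case, since $t=Kq$ and the $K$ blocks are i.i.d., the block-average marginal of every block equals $\bar Q$, hence the block-wise deviation $\delta^{\mathrm b}$ from $\Pi=\bar Q$ is $0$ and $|\cP|_\infty=q$, which yields \eqref{eq:beta-conv-5} directly.

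For the time-averaged bounds \eqref{eq:beta-conv-2}, \eqref{eq:beta-conv-4}, \eqref{eq:beta-conv-6} the idea is that \cref{fact:linxiao-intext} holds at \emph{every} horizon: applying it at each $\tau$ with $w=\betast$ gives $\|\beta^{\tau+1}-\betast\|^2\le \tfrac{2}{\sigma\tau}(\Delta_\tau-R_\tau(\betast))$. Taking expectations, bounding $\E[\Delta_\tau]\le (6+\log\tau)G^2/(2\sigma)$ deterministically, and inserting the regret lower bounds $\E[R_\tau(\betast)]\ge -(\text{nonstationarity at horizon }\tau)$ already established inside \cref{lm:mildcorruption} (for $\delta_\tau=\tau\delta$), \cref{thm:mixing}, and \cref{thm:block}, I would sum over $\tau$, divide by $t$, and use $\sum_{\tau\le t}\tfrac1\tau\le 1+\log t$ and $\sum_{\tau\le t}\tfrac{\log\tau}{\tau}\le \tfrac{(\log t)^2}{2}+O(1)$. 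The $\Delta_\tau$ contributions collapse into the $\tfrac{G^2}{\sigma^2}\!\big(6(1+\log t)+\tfrac{(\log t)^2}{2}\big)$ term, while the regret terms telescope into $(8\bar F/\sigma)\delta$ in the independent case, and into $\tilde O(\delta+\iota/t)$ resp.\ $\tilde O(q^2/t)$ in the other two; the leftover $\tfrac1t\|\beta^1-\betast\|^2$ is $O(1/t)$ since both points lie in the fixed box. Finally, the companion statements for $\E[\|\beta^\gamma-\betast\|^2]$ and $\E[\|\beta^{t+1}-\beta^\gamma\|^2]$ follow from the elementary deduction in \cref{sec:deduction-to-hindsight-optima}.

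I expect the main obstacle to be the boundary bookkeeping in the time-averaged dependent cases: re-running the ergodic/periodic regret decompositions at a generic intermediate horizon $\tau$ needs care when $\tau\le\iota$ in the ergodic case (where the $\iota$-step-ahead conditioning argument is unavailable and one falls back on the crude bound $|R_\tau(\betast)|\le O(\tau\bar F)$, contributing $\tilde O(\iota/t)$ after summation), and for the truncated last block in the periodic case (where the partition of $[\tau]$ differs from that of $[t]$, though still $|\cP_\tau|_\infty\le q$ and the partial block contributes only $O(q/\tau)$ to $\delta^{\mathrm b}_\tau$). Verifying that these boundary contributions stay at the claimed lower order, rather than polluting the leading term, is the only delicate part; the remaining work is the substitution of constants and the elementary logarithmic summations above.
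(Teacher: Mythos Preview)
Your proposal is correct and follows essentially the same route as the paper: substitute the PACE constants ($G=\bar F=\vinfty$, $\sigma=1/n$, $\Pi=\bar Q$) into the nonstationary DA theorems of \cref{app:non-stat-da} for the last-iterate bounds, then sum the per-horizon bounds via $\sum_{\tau\le t}\tfrac1\tau\le 1+\log t$ and $\sum_{\tau\le t}\tfrac{\log\tau}{\tau}\le\tfrac{(\log t)^2}{2}$ (the paper cites this as Corollary~4 of \cite{xiao2010dual}) for the time-averaged statements. Your extra care in the ergodic case---relating the arbitrary $\Pi_0$ in the definition of $\sfC^{\mathrm E}(\delta,\iota)$ to $\bar Q$ via the triangle inequality so that the target really is $\betast$---and in the boundary bookkeeping for intermediate horizons is well-placed; the paper simply sets $\Pi=\bar Q$ and sums the per-$\tau$ bounds without further comment.
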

\begin{proof}[Proof of \cref{thm:beta-various-cases}]
Set $\sigma = 1/n$ and $G=\barF = \vinfty$. 
\cref{eq:beta-conv-1} follows by \cref{thm:corrupted} and specifically \cref{eq:concrete-indep-bound}. The next inequality \cref{eq:beta-conv-2} follows by \cite[Corollary 4]{xiao2010dual}: for $t\geq 3$,
\$
\frac{1}{t} \sum_{\tau=1}^{t} \frac{(6+\log \tau) G^{2}}{\tau \sigma^{2}} \leq \frac{1}{t}\left(6(1+\log t)+\frac{(\log t)^{2}}{2}\right) \frac{G^{2}}{\sigma^{2}} 
\;.
\$
\cref{eq:beta-conv-3} follows by \cref{thm:mixing} and specifically \cref{eq:concrete-markov bound}. Following \cite[Corollary 4]{xiao2010dual}, we have for $t \geq 3$, 
\$
& \sumtau \E[\|\betatau - \betast \|\sq ]  
\\
& \leq \frac{G\sq}{\sigma\sq} \left(6(1+\log t)+\frac{(\log t)^{2}}{2}\right) + 8\barF  \cdot \delta t + \frac{4G\sq}{\sigma}  \left(6(1+\log t)+\frac{(\log t)^{2}}{2}\right) + \frac{4 \barF}{\sigma}(1+\log t) \cdot \iota
\\
& = \tilde O( \delta t + \iota)
\;,
\$
and thus \cref{eq:beta-conv-4} holds.

The inequality \cref{eq:beta-conv-5} follows from \cref{thm:block} and specifically \cref{eq:concrete-block-bound}. For the inequality \cref{eq:beta-conv-6}, apply the same strategy: for $t \geq 3$, 
\$
& \sumtau \E[\|\betatau - \betast \|\sq ]  
\\
& \leq \frac{G\sq}{\sigma\sq} \left(6(1+\log t)+\frac{(\log t)^{2}}{2}\right) + 
\frac{2G\sq }{\sigma\sq} \left(6(1+\log t)+\frac{(\log t)^{2}}{2}\right) \cdot q\sq
= \tilde O(q\sq)
\;.
\$
\end{proof}

\subsection{Proof of \cref{lm:dual-to-expendicture-util} and \cref{lm:dual-to-regret}}
\emph{Proof of \cref{lm:dual-to-expendicture-util}}

\cref{lm:dual-to-expendicture-util} follows from \cite[Theorem 3 and 4]{gao2021online}.

\emph{Proof of \cref{lm:dual-to-regret}}

Define the hindsight \emph{average} equilibrium utility $u^\gam_i \defeq (1/t) \cdot U^\gam _i$. Although results in  \cite{gao2021online} were stated for i.i.d.\ case, the proof in fact goes through for nonstationary input distributions.

\underline{Bounding $\E[\| \bar u^t - u^\gam \|\sq]$.}
For the first inequality we use the proof of \cite[Theorem 6]{gao2021online}.
Follow that paper, we define $r^t_i = \max\{0, \bar u^t_i - u^\gam_i \}$.
In Theorem 6 the authors show
\$
 \E [ (r^t_i)^{2}] \leq   C_{r,1} \cdot \E[ \| \betatp - \betast \|\sq ] + C_{r,2} \cdot \bigg(\frac1t \sumtau \E[\| \betatau - \betast \|\sq]\bigg)
 \;.
\$
In particular, the constant $C_{r,1}$ comes from the constant $C$ in \cite[Theorem 4]{gao2021online} and $C_{r,2}$ comes from \cite[Equation (11)]{gao2021online}

\underline{Bounding $\Reg_{i,t}$.}
Note $\Reg_{i,t} = t \cdot (u^{\gam}_i - \bar{u}^{t}_i) \leq t \cdot r^t_i$. Then we use Cauchy-Schwarz.
\$
\E \big[ \Reg_{i,t}\big] 
\leq t \E[ r^t_i ]
\leq  t \sqrt{\E[(r^t_i)\sq ]}
\;.
\$

\underline{Bounding $\Envy_{i,t}$.}
For the second inequality we use the proof of Theorem~6 in the same paper. Following that paper, we define \$\rho^t_i = (n/t) \cdot \max_{k\in [n]} \Big\{ \langle v_i(\gamma), x_k\rangle - \langle v_i(\gamma), x_i\rangle \Big\} 
\;.
\$ 
During the course of proving Theorem~6, the authors show
\$
\E[(\rho^t_i)\sq ] \leq n\sq \bigg(C_{e,1}\cdot \E[ \| \betatp - \betast\|\sq ] + C_{e,2} \cdot \Big(\frac1t \sumtau \E[\| \betatau - \betast \|\sq]\Big)\bigg)
\;.
\$
In particular, the constant $C_{e,1}$ comes from \cite[Theorem 4, Equations (5) and (13)]{gao2021online} and 
$C_{e,2}$ comes from  \cite[Equations (5), (13) and (15)]{gao2021online}

Then using Cauchy-Schwarz,
\$
\E \big[\Envy_{i,t}(\gamma)\big] = \E \big[ (t/n  )\cdot \rho^t_i  \big] \leq (t/n) \sqrt{\E\big[ (\rho^t_i)\sq  \big]}
\;.
\$
\finishproof{\cref{lm:dual-to-regret}}

\section{Experiments} \label{sec:experiments}

We conduct experiments on a market (a matrix of buyers' valuations on items) generated from the MovieLens dataset \citep{harper2016movielens} with $n=100$ buyers and $m=300$ items. The process of turning the MovieLens dataset into the market instance is described in \cite{kroer2021computing}. Here, we briefly describe the experiment settings. For more details on the experiment settings as well as all code and data to replicate the results, please refer to the Supplementary Material.

We generate item arrivals from the following data input models:
\begin{itemize}
    \item i.i.d.: Every item $\theta^t \in [m]$ is sampled independently from a fixed distribution $s^0\in {\Delta}_m$ (an $m$-dimensional probability vector). 
    \item Mild corruption: $\theta^t \sim s^t$, where $s^t\in {\Delta}_m$ is a distribution such that $\|s^t - s^0\|_1 = \Theta(1/t)$ for all $t$. Here, $s^t$ is generated by randomly perturbing each coordinate of $s^0$ followed by normalization.
    \item Markov: $(\theta^t)_{t\geq 1}$ is sampled from an irreducible Markov chain starting from an initial distribution $s^0$. It is a special case of ergodic input. 
    Here, the Markov chain is given by a $m\times m$ transition matrix (each row sums to $1$), which we generated randomly (and row-wise normalized).
    In this case, the ``reference'' item arrival distribution is the stationary distribution of this Markov chain which is in general different from the initial distribution.
    \item Periodic: The period length is $\ell=100$. Let $(s^k)_{k\in [\ell]}$ be a set of distributions (probability vectors). Here, each $s^k$ is sampled randomly and normalized. The item arrivals of each period is generated by sampling from each $s^k$ followed by a random permutation over the $\ell$ sampled items. 
\end{itemize}
For each (fixed) data input model, we generate $10$ sample paths of item arrivals and run PACE for $T=200n=20000$ time steps on each sample path. Then, we measure the convergence of the pacing multipliers and time-averaged cumulative utilities to their hindsight equilibrium values. 
More specifically, we record the following relative differences: $\max_i \{ {|\beta^t_i-\beta^{\rm HS}_i|}/{\beta^{\rm HS}_i}\}$ and $\max_i \{ {|\bar{u}^t_i-u^{\rm HS}_i|}/{u^{\rm HS}_i}\}$, where ${\rm HS}$ denote the hindsight equilibrium values of the ``sample-path'' market determined by the realized item arrivals. 
Equivalently, $u^{\rm HS}$ and $\beta^{\rm HS}$ are optimal solutions of the hindsight convex programs \eqref{eq:EGprogram} and \eqref{eq:dual-beta-finite}, respectively. 
We also measure the performance of a proportional-share baseline solution that divides each arriving item among all buyers proportionally w.r.t.\ their budgets: for an arrived item $\theta^t$, each buyer $i$ gets $B_i$ amount of it and receives utility $B_i v_i(\theta^t)$ (in this paper, the buyers' budgets are $B_i = 1/n$ for all $i$).
We compute the means and standard errors of the error measures across the $10$ sample paths and plot them in Figure~\ref{fig:movielens-small-plots-all-input-models}. 

As can be seen, for all data input models, the pacing multipliers and buyers' time-averaged utilities converge to their respective hindsight values and quickly outperform the baseline proportional-share solution. Similar convergence behavior can also be observed when the error metrics are w.r.t.\ to the true equilibrium values $\beta^*$, $u^*$ instead of the hindsight values.

\begin{figure}
    \centering
    \includegraphics[width=0.48\linewidth]{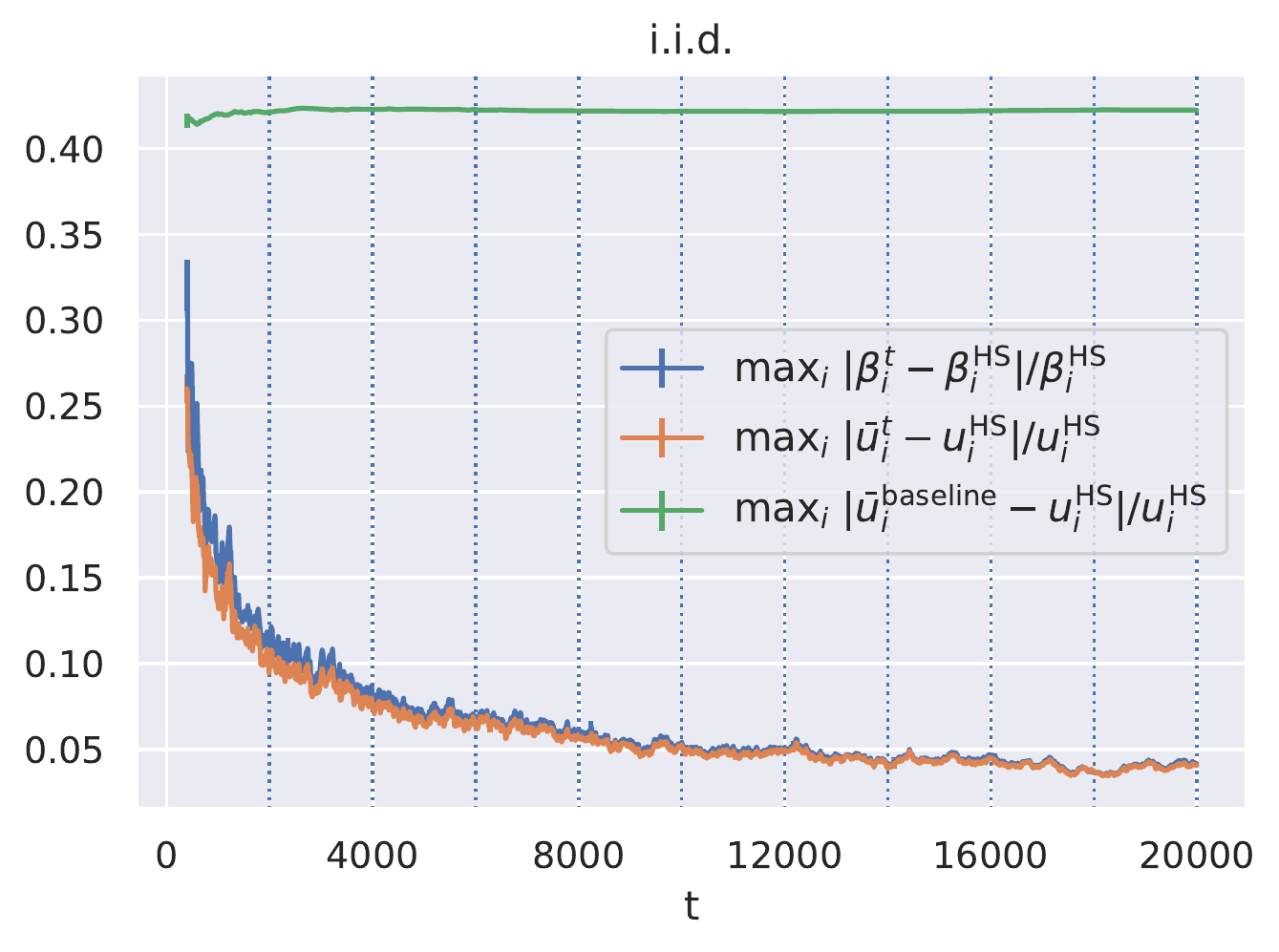}
    \includegraphics[width=0.48\linewidth]{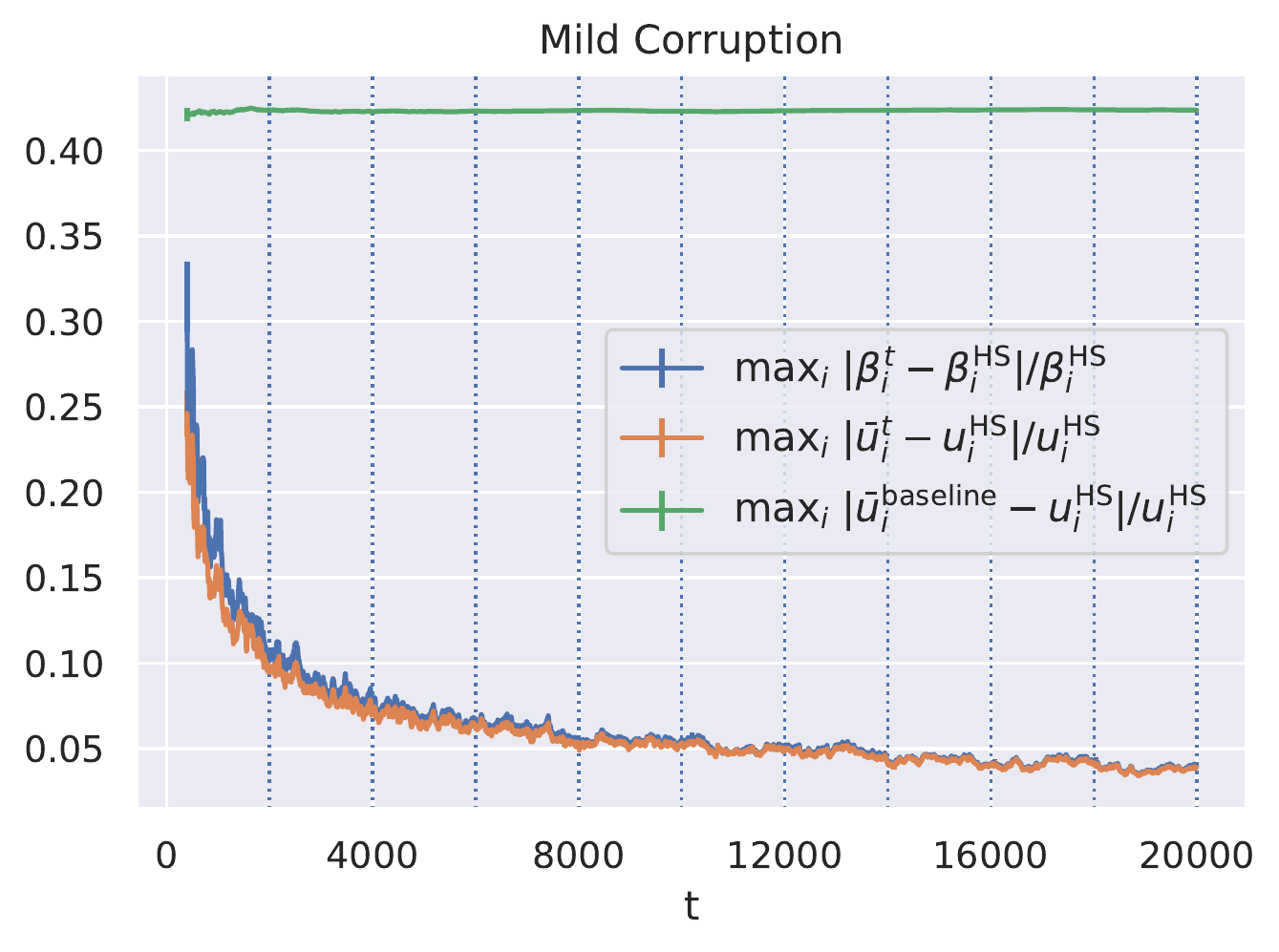} \\
    \includegraphics[width=0.48\linewidth]{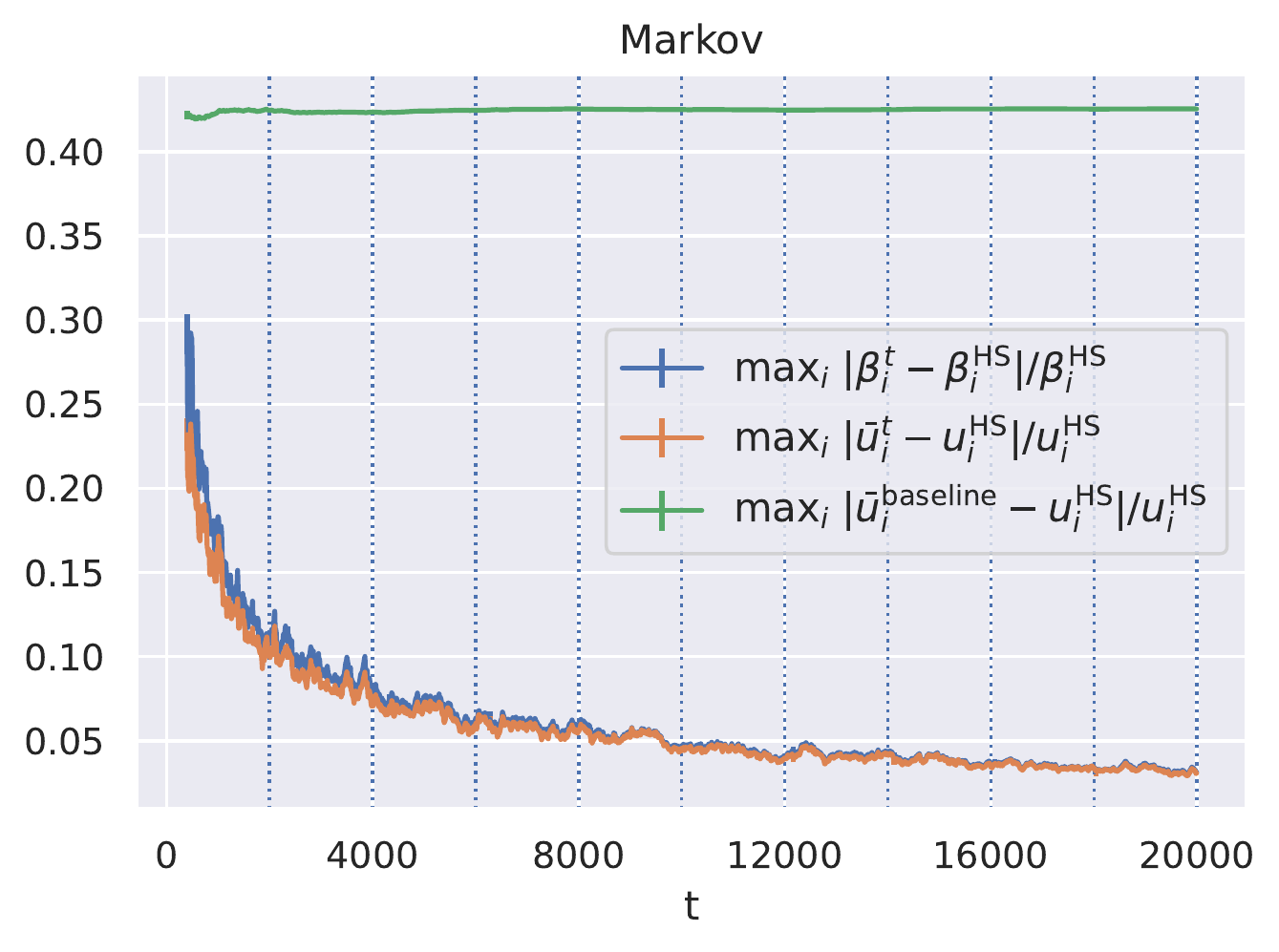}
    \includegraphics[width=0.48\linewidth]{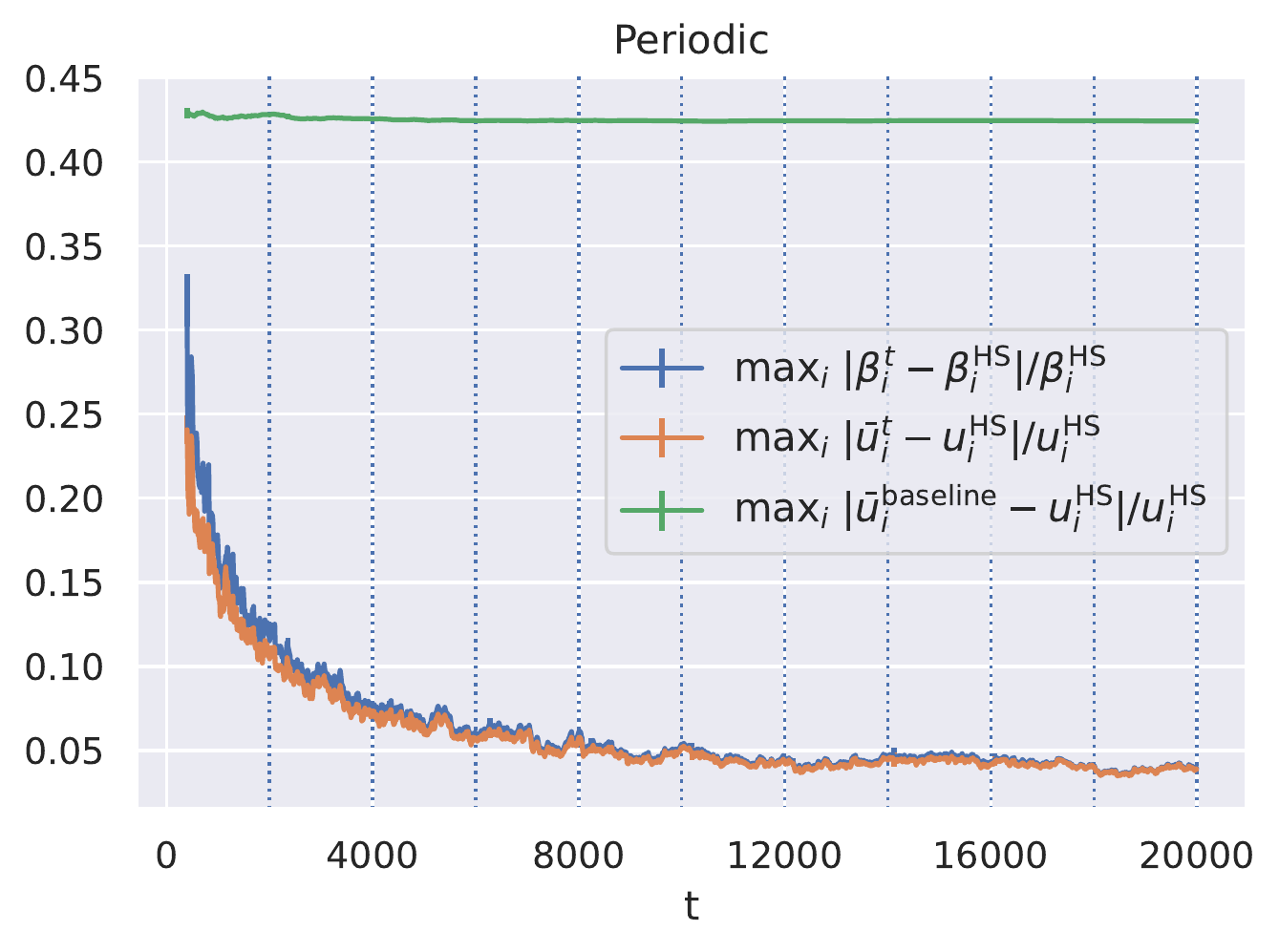}
    \caption{
        Performance of PACE for item arrivals under different data input models. 
        All error measures are averaged across $10$ repeated experiments. 
        The mean and standard errors of the error measures are plotted, where the standard error bars are too small and hence invisible. 
        Here, $\bar{u}^{\rm baseline}_i$ are the buyers' time-averaged utilities under a ``proportional-share'' baseline solution.
    }
    \label{fig:movielens-small-plots-all-input-models}
\end{figure}

We further conduct the following experiment to demonstrate the effect of nonstationarity on pacing multiplier convergence. 
More specifically, we generate $10$ sample paths from each of the following item arrival settings: i.i.d. distributions ($\theta_t \sim_{\rm i.i.d.} s$), perturbed distributions (small $\delta$), perturbed distributions (large $\delta$), where $\delta$ is the overall difference between the sequences of i.i.d. distributions and perturbed distributions, as in \eqref{eq:def-CID(delta)}.
We then run PACE on each sample path to obtain $\beta^t_i$ and $\bar{u}^t_i$ values.
For each setting and each time step, we plot mean values and standard error bars of the relative error metrics, where $\beta^*_i$ and $u^*_i$ are the equilibrium pacing multipliers (utility prices) and utilities of the market with supplies being the distribution $s$.
As can be seen, for perturbed distribution settings, PACE is able to bring $\beta^t$  and $\bar{u}^t_i$ close to their equilibrium values, while the convergence degrade as the perturbation amount $\delta$ increases. 
Recall that, in terms of cumulative utility, the proportional-share baseline solution gives a relative error of around $0.45$.

\begin{figure}
    \centering
    \includegraphics[width=0.48\linewidth]{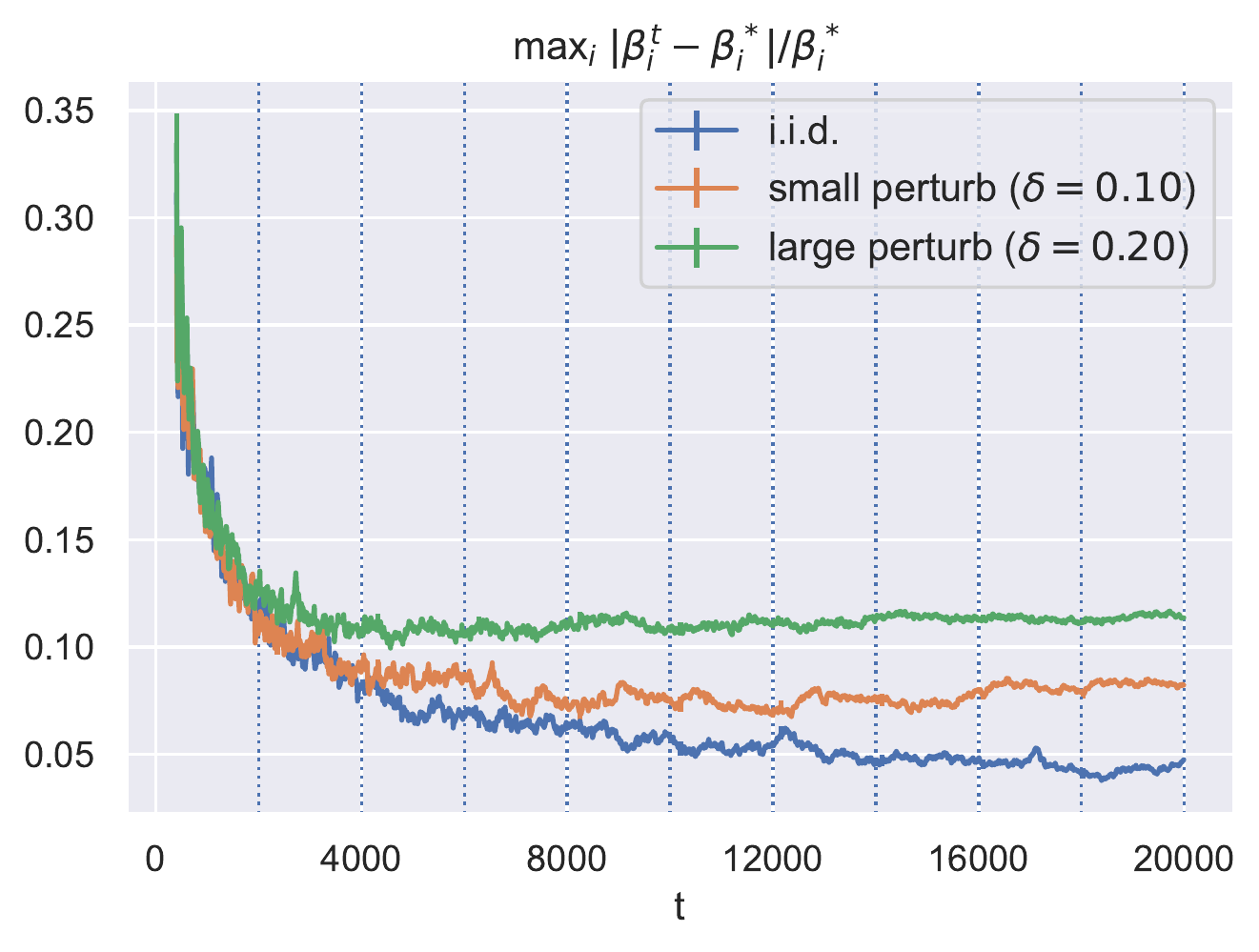}
    \includegraphics[width=0.48\linewidth]{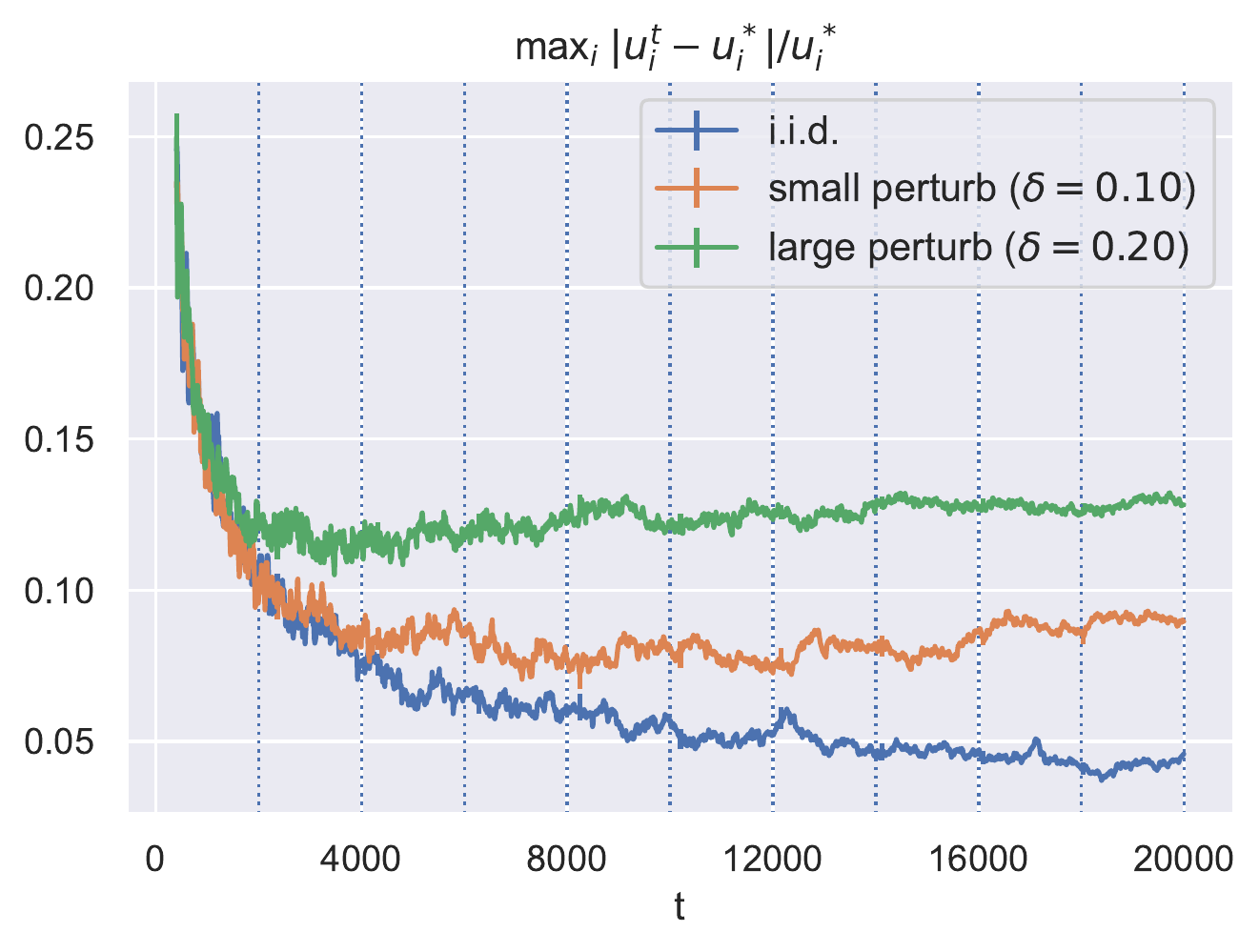}
    \caption{Convergence of pacing multipliers $\beta^t$ and cumulative utilities $\bar{u}^t_i$ under i.i.d. and perturbed distributions}
    \label{fig:compare-iid-perturb}
\end{figure}

\end{document}